\newtheorem{theorem}{Theorem} 
\newtheorem{proposition}{Proposition}
\newtheorem{lemma}{Lemma}
\newtheorem{definition}{Definition}
\newtheorem{remark}{Remark}
\newcommand{\ingamed}{{$\G^{(1)}_{IN}$}}
\newcommand{\basegame}{{$\G^{(1)}_{EX}$}}
\newcommand{\ingame}{{$\G_{IN}$}}
\newcommand{\exgame}{{$\G_{EX}$}}
\def\min{\qopname\relax n{min}}
\def\max2{\qopname\relax n{max2}}
\def\max{\qopname\relax n{max}}
\newcommand{\RR}{\mathbb{R}}
\def\G{\mathcal{G}}
\def\X{\mathcal{X}}
\def\Y{\mathcal{Y}}
\def\tt{\bm{\theta}} 
\def\x{\bm{x}} 
\def\y{\bm{y}} 
\def\w{\mathbf{w}} 
\def\c{\bm{c}}
\def\g{\bm{g}} 
\def\s{\bm{s}}
\def\w{\bm{w}}
\def\U{\mathcal{U}}
\newenvironment{lp*}{\begin{equation*}  \begin{array}{lll}}{\end{array}\end{equation*}}
\begin{document}

\title{Human vs. Generative AI in Content Creation Competition: \\ Symbiosis or Conflict?}

\author{\name Fan Yao$^1$ \email fy4bc@virginia.edu
       \AND
       \name Chuanhao Li$^2$ \email chuanhao.li.cl2637@yale.edu 
       \AND
       \name Denis Nekipelov$^{1,3}$ \email dn4w@virginia.edu 
       \AND
       \name Hongning Wang$^{4}$ \email wang.hongn@gmail.com
       \AND
       \name Haifeng Xu$^{5}$ \email haifengxu@uchicago.edu \\ \\ 
       \addr $^1$Department of Computer Science, University of Virginia, USA \\
       \addr $^2$Department of Statistics and Data Science, Yale University, USA\\
       \addr $^3$Department of Economics, University of Virginia, USA\\
       \addr $^4$Department of Computer Science and Technology, Tsinghua University, China\\
       \addr $^5$Department of Computer Science, University of Chicago, USA
    }

\maketitle

\begin{abstract} 
The advent of generative AI (GenAI) technology produces transformative impact on the content creation landscape, offering alternative approaches to produce diverse, high-quality content across media, thereby reshaping online ecosystems but also raising concerns about market over-saturation and the potential marginalization of human creativity.
Our work introduces a competition model generalized from the Tullock contest to analyze the tension between human creators and GenAI. Our theory and simulations suggest that despite challenges, a stable equilibrium between human and AI-generated content is possible. Our work contributes to understanding the competitive dynamics in the content creation industry, offering insights into the future interplay between human creativity and technological advancements in GenAI.
\end{abstract}

\section{Introduction}

In January 2024, a novel written by making full use of ChatGPT, ``The Tokyo Tower of Sympathy'', won the most prestigious Japanese literary award Akutagawa Prize \citep{cnn2024}. This marks the era of content creation competition between humans and generative AI (GenAI).

In addition to the impressive text generation abilities demonstrated by ChatGPT \citep{brown2020language}, the generation of high-quality multi-media content, such as image and video \citep{ho2020denoising}, has also been made widely accessible to ordinary users by a spectrum of easy-to-use tools (such as Midjourney and Sora). This has revolutionized not only individual lives but also the entire society, particularly in the realm of online content sharing industry.
These AI-driven solutions have significantly altered the landscape of content creation on popular online platforms like Instagram and Tiktok, which are deeply woven into the fabric of our daily digital experience. Specifically, GenAI lowers entry barrier for individuals who may lack the technical skills or resources to produce high-quality content, and may even help them gain a competitive edge in the market. This shift has led to a surge in the volume of content being created and shared, fostering new dynamics in online content ecosystems \citep{epstein2023art,wahid2023written}.

However, this AI-driven transformation also brings significant challenges and concerns \citep{wach2023dark}. The ease of generating content with AI can potentially lead to an over-saturated market, making it harder for individual creators to stand out or leaving truly creative human creators under appreciated \citep{doshi2023generative}. A most recent event is that Universal Music Group pulls songs from TikTok and accuses the platform of being ``flooded with AI-generated recordings'' that diluted the royalty pool for real, human musicians \citep{nytimes2024}. This echos the Gresham's Law that ``bad money drives out good'' \citep{selgin2020gresham}. 

On the other hand, GenAI models are not omniscient. A key limitation---or perhaps a defining characteristic---of these models is their dependency on extensive and diverse datasets of high-quality, human-generated content for training \citep{bertrand2023stability,briesch2023large}. Should GenAIs inadvertently marginalize productive, high-quality human content creators, the resultant decline in the quality of model-generated content is inevitable. Drawing an analogy to biological interactions, the dynamics between GenAI-based creators and human creators could evolve into either \emph{symbiosis}, leading to a mutually beneficial equilibrium, or \emph{antagonistic conflict}, perpetuating rivalry and potentially destabilizing the market.  Therefore, an urgent and scientifically interesting question to ask is, whether human creators will be driven out of the market when competing against AI-generated content, or is there a path toward a stable, symbiotic relationship?

In this paper, we propose a stylized model to depict the rivalry between traditional human content creators and those utilizing GenAI technology. Our framework expands on the Tullock contest model \citep{tullock2001efficient}, a model extensively applied in econometrics to analyze competitive scenarios. We first explore the impact of GenAI as an external influencer on the equilibrium state of human creators, and then delve into scenarios where creators have the autonomy to either adopt or refrain from using GenAI tools when making their content. Our theoretical results and empirical findings deliver a promising outlook: despite GenAI's potential to disrupt the human content generation market, a stable equilibrium with desirable characteristics is attainable. Our contributions lie in the following three aspects:
\begin{enumerate}
    \item \textbf{modeling-wise}, we are the first to formally propose a mathematical model to characterize the competition between human and GenAI, including an exclusive competition context where GenAI acts as an exogenous source and an inclusive context where each creator can strategically decide to use GenAI or not; 
    \item \textbf{conceptually}, our theories and experiments provide encouraging answers to important questions regarding human-vs-AI competition,  derive new insights and offer prediction about  future online content markets in the coming era; 
    \item \textbf{technique-wise}, our model not only generalizes the classic Tullock contest but also derives several novel attributes concerning the equilibrium of the competition, enriching the existing literature with new theoretical advancements.
\end{enumerate}

\section{Related Work}

Recently, the strategic interaction and the resulting dynamics among viewers, creators, and the recommender system have attracted considerable research attention \citep{dean2024recommender,acharya2024producers,ben2020content,yao2022learning,agarwal2022diversified,boutilier2023modeling,yao2022learning2,dean2022preference}.
Among these, an emerging line of work focuses on online content economy and the modeling of content creator competitions \citep{ben2017shapley,ben2018game,yao2023rethinking,yao2023bad,zhu2023online,hu2023incentivizing,jagadeesan2023supply,hron2022modeling}. In these models, content creators strategically choose their production strategies, e.g., the quality \citep{hu2023incentivizing,hron2022modeling} or type \citep{jagadeesan2022supply} of their content, and compete for different objectives such as traffic \citep{hron2022modeling,ben2017shapley}, user engagement \citep{yao2023bad}, or platform provided incentives \citep{zhu2023online,yao2023rethinking}. Some of them aim to understand the property of creator side equilibrium, for example, how creators will specialize at the equilibrium \citep{jagadeesan2022supply}, how creators' strategic behavior affects social welfare \citep{yao2023bad}, and how to design optimization method for long-term welfare considering content creators' strategic behaviors \citep{ben2017shapley,ben2018game,yao2023rethinking,zhu2023online,hu2023incentivizing,immorlica2024clickbait,mladenov2020optimizing}. Our competition model introduces GenAI creators into the arena for the first time and we investigate the impact of GenAI technology to human creators through analyzing the properties of the competition equilibrium. 

In Tullock contest \citep{tullock2001efficient}, also known as lottery contest, the probability of each player winning a fixed prize is the ratio between the effort she spends and the total effort exerted by all players. The Nash equilibrium of one-dimensional Tullock contest with homogeneous cost is well understood \citep{ewerhart2015mixed, EWERHART2017168} and some natural extensions have been well studied, for example, the prize value is a linear function \citep{chowdhury2011generalized}, players are equipped with convex loss \citep{ghosh2023best}. Recent works employ game-theoretical models similar to Tullock contest \citep{hron2022modeling,yao2023bad} to model content creator competition. Our model extends the scope of Tullock contest by introducing GenAI players and properties of such players based on the up-to-date understandings of foundation models behind such technology, and considering heterogeneous cost functions.

\section{Modeling  Content Creation Competition between Humans and GenAI}
In this section, we formally introduce our model for human and GenAI content creation competition. Our model is rooted in and strictly generalizes the textbook model of the \emph{Tullock contest} \citep{tullock2001efficient}, which is perhaps the most widely adopted paradigm to model contests \citep{dechenaux2015survey,szymanski2003economic,mueller2003public} and has recently been used to model competitions among content creators \citep{hron2022modeling,yao2023bad} and bitcoin miners \citep{leshno2020bitcoin,arnosti2022bitcoin}.            

\vspace{2mm}
\noindent
\textbf{Modeling human creators. } There are $n$ human content creators competing over $K$ topics. Throughout, we use the notation $[n] = \{ 1, 2, \cdots, n \}$ for any integer $n$. In practice, each topic $k \in [K]$ can be viewed as either an explicit subject (e.g., a trending tag) or a latent theme associated with a user preference group. Each creator competes for user attention by generating content for different topics. Formally, let  $x_{ik} \in [0, \infty)$ denote the calibrated body of content on topic $k$ generated by creator $i$, where calibration accounts for both the quality and quantity of content. In words, $x_{ik}$ captures creator $i$'s level of competitiveness on topic $k$, and is referred to as the \emph{body of content}.  
Following game-theoretic conventions, we refer to the (deterministic) effort allocation vector  $\x_i=(x_{ik})_{k=1}^K\in \RR_{\geq 0}^K$  as a \emph{pure strategy} of   creator $i \in [n]$. Naturally, content creation is costly, and we use $c_i(\x_i)$ to denote creator $i$'s cost resulting from her effort allocation $\x_i$. Throughout the paper, we assume the cost functions to be convex and twice-differentiable, as widely adopted in recent literature for modeling content creation competition \citep{jagadeesan2023supply,yao2023rethinking}, previous literature on contest modeling \citep{szymanski2003economic,mueller2003public} and in general models of production by firms   \citep{shephard2015theory}. This assumption captures the feature that it can be easy to create some content, but continuously creating high-quality content becomes significantly more costly.      

\vspace{2mm}
\noindent
\textbf{GenAIs as a new type of  content creators. } The core novelty introduced by our modeling, both conceptually and technically, is the integration of the GenAI
into the content creation competition. The competitiveness of GenAI on each topic $k$ depends on two major factors: (a) the \emph{total} body of content that human creators have generated for topic $k$, i.e., the source of training data for GenAI; and (b) the learning capability of the GenAI model. Formally, given all human creators'  strategies $\{ \x_i \}_{i=1}^n$, the \emph{total} body of created content on topic $k$ is   $s_k =\sum_{j=1}^n x_{jk}$. We assume that the calibrated body of content that GenAI creates for topic $k$ can be  described by function 
\begin{equation}
\label{eq:AIcontent}
    g_k(s_k) = \alpha_k \cdot \big(s_k \big)^{\tilde{\beta}_k},
\end{equation}
where $\alpha_k$ captures the \emph{efficiency of data usage} by the GenAI model whereas $\tilde{\beta}_k$ captures its \emph{convergence rate}. Guided by a folklore in the ML community, we assume all convergence rates $\tilde{\beta}_k \in [0,1]$ and $\alpha_k > 0$. Our assumption here echos the ``scaling law'' in recent studies of large language models \citep{kaplan2020scaling}, where it is observed that the test loss scales as a power-law with model size and data size.
 

\begin{remark} A key abstraction in the above model is to condense a creator's (possibly complex) effort on any topic $k$ into a single value $x_{ik}$ capturing creator $i$'s  body of content (calibrated by qualities). 
At a first glance, it may appear overly simplistic; however, this modeling in fact characterizes many seemingly more general situations. For instance, suppose each topic $k$ has its own  topic embedding represented by a vector $\tt_k\in\RR^d$ whereas the generated content by creator-$i$ has some vector embedding $\mathbf{s}_{ik} \in\RR^d$ and $i$ may even have certain intrinsic competent level on topic $k$  represented by certain vector  $\w_{ik}\in \RR^d$. Suppose these factors together determine creator $i$'s body of contents $ x_{ik}= h(\mathbf{s}_{ik};\tt_k,\w_{ik})$. Let  $c_i(\{ \mathbf{s}_{ik} \}_k )$ denote  the total cost for $i$ to generate all the contents $\{ \mathbf{s}_{ik} \}_k $. We can easily reduce this more complex modeling to our cleaner abstraction by (1) letting $\mathbf{s}_{ik} = h^{-1} (x_{ik};\tt_k,\w_{ik})  $ be the   inverse of function $h$ w.r.t to    $\mathbf{s}_{ik}$; (2) viewing $ x_{ik}$ as $i$'s creation on topic $k$ instead of $\mathbf{s}_{ik}$; and (3) viewing $c_i(\{ h^{-1}(x_{ik}) \}_k ) = \tilde{c}_i (\x_{i}  ) $  as the new cost for creation $\x_{i}$.   
\end{remark}

\subsection{Context: Exclusive and Inclusive Competitions}\label{subsec:exclusive_inclusive}
Human creators and GenAI compete for user attention on each topic $k$. To be most general, we assume the total user attention/traffic on each topic $k$ is governed by the function \begin{equation}\label{eq:total-user}
\text{user traffic at topic $k$:} \qquad   \mu_k   \cdot  (s_k)^{\tilde{\gamma}_k}, 
\end{equation}
which depends on the \emph{trendiness} of topic $k$, described by a scalar $\mu_k(>0)$, as well as the total body of content $s_{k}$
under topic $k$ with a growth rate $\tilde{\gamma}_k \in [0,1]$. The rate $\tilde{\gamma}_k $ is introduced to capture the fact that more user traffic will be attracted as the total volume of content increases,
but it will gradually saturate  as the volume becomes extremely large \citep{butler2014attraction,tafesse2023content}. In the content creation competition, this total user attention of \eqref{eq:total-user} will be split between human and GenAI creators. Next, we consider two different situations of the competition, which are motivated by the different stages of GenAI technology adoption in the market. 

\vspace{2mm}
\noindent
\textbf{Exclusive human-vs-GenAI competition.} In this case, we assume GenAI is a standalone creator who competes with the $n$ human creators. This models the situation at the early stage of GenAI adoption in a new market, where only few pioneering people/companies have the capability/resource to use the technology whereas majority of the creators are still counting on traditional approaches for content creation. Following the standard Tullock competition model, we assume human creator $i$ will attract $\frac{x_{ik} }{\alpha_k(s_k)^{\tilde{\beta}_k}+s_k}$ fraction of the total user traffic, and hence derive the following utility
\begin{align}\notag
    u_i(\x_i,\x_{-i})&=\sum_{k=1}^K \frac{x_{ik}\cdot \mu_k \cdot  (s_k)^{\tilde{\gamma}_k}}{\alpha_k(s_k)^{\tilde{\beta}_k}+s_k} - c_i(\x_i) \\ \label{eq:u}
    &=\sum_{k=1}^K \frac{x_{ik}\cdot \mu_k}{\alpha_k(s_k)^{\beta_k}+(s_k)^{\gamma_k}} - c_i(\x_i), 
\end{align} 
 where $\beta_k=\tilde{\beta}_k-\tilde{\gamma}_k \in [-1, 1]$ 
 and $\gamma_k=1-\tilde{\gamma}_k \in [0, 1]$ are more convenient notations and will be used henceforth. Let $\bm{\mu} = (\mu_1, \cdots, \mu_K)$, and we denote the game above as \exgame{}$(\bm{\alpha}, \bm{\beta},\bm{\gamma},\bm{\mu},\{ c_i \}_{i=1}^n)$.\footnote{A mild technical assumption we make about the game is that $u_i(\x_i,\x_{-i})<0$ when any   $x_{ik}\rightarrow +\infty$. This means the growth of cost of making infinite volume of content always outweighs the user traffic growth, and is needed for technical reasons.} Following the convention in the game theory literature, we study the pure Nash equilibrium (PNE) of this game \citep{nash1950equilibrium}, as defined below. 


 \begin{definition}\label{def:PNE} A profile of human creator strategies $\{ \x^*_i \}_{i=1}^n$ forms a pure Nash equilibrium (PNE), if for every creator $i$, $\x^*_i$ is a best response strategy; formally, 
 \begin{equation}\label{eq:pne-def}
      u_i(\x^*_i,\x^*_{-i}) \geq   u_i(\x_i,\x^*_{-i}) \, \,  \text{ for every } \x_i \in \RR_{\geq 0}^K.
 \end{equation}
 \end{definition} 
As widely known, the PNE does not need to always exist, though it is often viewed as a good prediction about players' behaviors whenever it exists and is unique \citep{debreu1952social,fan1952fixed,glicksberg1952further}. Thus, a significant portion of the contest analysis literature focuses on studying the existence and uniqueness of PNE. 
Our analysis in this paper focuses on the behavior of human players. GenAI in our model is 
non-strategic and derives the body of content under each topic based on the production of 
all human creators as defined in \eqref{eq:AIcontent}. Thus it is not a strategic player. We believe it is an interesting future direction to study the incentive of GenAI creators and how that affects the competition.  

\begin{remark}  
The special case $K=1$, $\alpha_k = 0$ or $\beta_k = 0$ of our model corresponds to the classic Tullock contest \citep{tullock2001efficient}. However, this model cannot serve our purpose of studying human-vs-GenAI competition, and thus we propose the strictly more general model above. To our knowledge, this model is novel and has not be examined in previous literature, despite extensive economics research  in Tullock contest in the past 40 years. This is partially due to the fact that in standard contest environments, there is seldom a player whose competitiveness depends on the accumulated competitiveness of all other players, which we believe is a novel and unique feature of content creation by Generative AI. Perhaps surprisingly, despite the widely known challenge of analyzing the equilibrium of the classic Tullock contest, later we are able to show the existence and uniqueness of the Nash equilibrium for our generalized model in natural parameter regimes.

\end{remark} 

\noindent
\textbf{Inclusive human-vs-GenAI competition. } As GenAI technology becomes more mature and universally accessible at a later stage,  every human creator can opt to use it for content creation. This motivates our analysis of another competition context where, besides creating their own body of content $\x_i$,  every creator $i$ now has an \emph{additional} option to use GenAI for content creation. We assume the cost of GenAI content is significantly lower than that of genuine content and thus simply normalize the former
to $0$. 
 
Formally, the inclusive human-vs-GenAI competition augments each creator's action space to $\Y  = \RR_{\geq 0}^K \cup \{ \bot \} $, where $\bot$ denotes the action of using GenAI for content creation and its cost is set to $0$ (when compared to $c_i(\x_i)$). To distinguish creators' strategies in this different competition context, we use $\y_i \in \Y$ to denote each creator $i$'s strategy. Given strategy profile $\{ \y_i \}_{i=1}^n$, the utility of creator $i$ is
 \begin{align}\label{eq:u-incl}
 u_i(\y_i,\y_{-i}) = 
    \begin{cases}
\sum_{k=1}^K \frac{x_{ik}\cdot \mu_k}{ n^\bot  \cdot \alpha_k(s_k)^{\beta_k}+(s_k)^{\gamma_k}} - c_i(\x_i), \, \text{ if } \y_i = \x_i \in \RR_{\geq 0}^K\\
\sum_{k=1}^K \frac{\alpha_k(s_k)^{\beta_k}  \cdot \mu_k}{ n^\bot \cdot \alpha_k(s_k)^{\beta_k}+(s_k)^{\gamma_k}}  , \qquad \qquad  \text{ if } \y_i = \bot  
\end{cases} 
\end{align}
where $ n^\bot  = |\{i: \y_i = \bot \} |$ is the number of GenAI creators and $s_k$ is the total body of content created by (only) \emph{humans} -- that is, content created by GenAIs do not contribute to $s_k$.\footnote{Recent studies show that while a small amount of synthetic data could help improve the GenAI model, too much synthetic data (e.g., more than half) will lead to model collapse \citep{bertrand2023stability}. This is why we assume GenAI's capability only depends on the total body of human created content.}
The PNE of this new game is defined similarly as Definition \ref{def:PNE}, by revising \eqref{eq:pne-def} to allow $\bot$ as an additional action. We denote this game as \ingame{}$(\bm{\alpha}, \bm{\beta},\bm{\gamma},\bm{\mu},\{ c_i \}_{i=1}^n)$.   

\subsection{The Case of Separable Costs and $1$-D Competition}\label{sec:model:1-D}
To analyze   asymptotic  properties of the equilibrium, a useful  structural assumptions about the creators' cost functions is that the cost  is separable across different topics, i.e., $c_i(\x) = \sum_{k=1}^Kc_{ik}(x_{ik})$ for some convex $c_{ik}: \RR \to \RR$ function. Since a creator's utility from user traffic is also separable (see \eqref{eq:u} and \eqref{eq:u-incl}), separable costs  effectively ``disentangle'' the competition at different topics. Hence, our analysis can simply focus on the competition along each single topic, leading us to study the following $1$-dimensional ($1$-D) competition where each creator $i$'s action is simplified   to a scalar $x_i \in [0, \infty) $. This can be alternatively viewed as a special case of our general model with $K = 1$. We remark that  studying competition with $1$-dimensional effort value is not as restrictive  as one might first think  --- in fact, most previous studies of Tullock contests, including the seminal work by \citet{tullock2001efficient}, have $1$-dimensional efforts.     

\vspace{2mm}

\noindent
\textbf{The 1-D Competition. } In order to analyze how the content creation capability of human affects their strategies   and GenAI's level of dominance at equilibrium, for such 1-D competition, we consider cost function with the form
$$c_i(x_i) = c_i \cdot  (x_i)^\rho, $$ 
where parameter $c_i(>0)$ captures creator $i$'s capability of creating content whereas $\rho(\geq 1)$ is a common parameter to all players. This cost function form has been widely adopted in previous literature for modeling creator economy \citep{jagadeesan2023supply,hu2023incentivizing}.    Let $\bm{c} = (c_1, \cdots, c_n)$. Without loss of generality, we assume $c_1 \leq c_2 \cdots \leq c_n$; that is, creators are indexed from the most to the least efficient.  Utilities and equilibria are inherited from our definitions in Section \ref{subsec:exclusive_inclusive},
by simply setting $K = 1$.  We denote this 1-D exclusive competition as \basegame{}$(\alpha, \beta,\gamma,\mu,\rho,\{ c_i \}_{i=1}^n)$.

\section{Exclusive Human-vs-GenAI Competitions}
To study the exclusive competition game \exgame{}, 
the most fundamental question is, perhaps, whether this competition among human content creators will ever reach a certain stable outcome and, if so, which outcome. 
We answer this question by studying the pure Nash equilibrium (PNE) of the game, as described in Definition \ref{def:PNE}. 

Our first main result  establishes that, under mild assumptions,  \exgame{} always admits a unique PNE.

\begin{theorem}\label{thm:monotone_G}
Consider any \exgame{}$(\bm{\alpha}, \bm{\beta},\bm{\gamma},\bm{\mu},\{c_i\}_{i=1}^n)$. If  $\bm{\beta} \in [0, 1]^K$, then the game  is a strictly monotone game  hence admits a unique pure Nash Equilibrium. 
       
       
\end{theorem}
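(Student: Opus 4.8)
The plan is to invoke the standard theory of monotone games: a game whose (negative) pseudo-gradient is a strictly monotone operator on the joint strategy space has a unique PNE. Concretely, define the pseudo-gradient map $F(\x) = \big( -\nabla_{\x_i} u_i(\x_i, \x_{-i}) \big)_{i=1}^n$ on the joint domain $\prod_i \RR_{\geq 0}^K$. It suffices to show that $F$ is strictly monotone, i.e. $\langle F(\x) - F(\x'), \x - \x' \rangle > 0$ for all $\x \neq \x'$; existence then follows from standard arguments (the game is concave in each player's own strategy and the technical footnote assumption rules out escape to infinity, so a PNE exists), and strict monotonicity gives uniqueness.

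Because the cost terms $c_i(\x_i)$ are convex and separable across players, they contribute a monotone (weakly) term to $F$ — in fact, the Hessian block structure means the $c_i$ part is block-diagonal and positive semidefinite, so it only helps. The crux is therefore the ``traffic'' part of the utilities. Since the utilities \eqref{eq:u} are separable across topics $k$, I would reduce to a single topic and analyze, for each $k$, the map on $\RR_{\geq 0}^n$ given by the partial derivatives of $\frac{x_{ik}\mu_k}{\alpha_k (s_k)^{\beta_k} + (s_k)^{\gamma_k}}$ with respect to $x_{ik}$, where $s_k = \sum_j x_{jk}$. Writing $\phi_k(s) = \alpha_k s^{\beta_k} + s^{\gamma_k}$, each player's traffic utility is $\mu_k x_{ik}/\phi_k(s_k)$, and the key structural fact is that this is an aggregative game: player $i$'s payoff depends only on $x_{ik}$ and the aggregate $s_k$. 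The plan is to compute the Jacobian of the per-topic pseudo-gradient and show it is positive definite (after symmetrization) on $\RR^n$, uniformly in $\x$, using that $\beta_k \in [0,1]$ and $\gamma_k \in [0,1]$.

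The main obstacle — and where the hypothesis $\bm\beta \in [0,1]^K$ is essential — is controlling the curvature of $1/\phi_k$. The relevant Jacobian entries involve $\phi_k'$ and $\phi_k''$; monotonicity will hinge on an inequality of the form that $\phi_k$ is ``not too concave,'' e.g. something like $2(\phi_k')^2 \ge \phi_k \phi_k''$ together with $\phi_k' \ge 0$, or a log-concavity/convexity-type condition on $s \mapsto \phi_k(s)$. With $\beta_k \ge 0$ both summands $\alpha_k s^{\beta_k}$ and $s^{\gamma_k}$ are concave-or-linear and increasing with exponents in $[0,1]$, which is exactly what makes the required inequality go through; if $\beta_k < 0$ the term $\alpha_k s^{\beta_k}$ is decreasing and convex and the argument breaks, consistent with the theorem's scope. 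So the technical heart is: (i) reduce to one topic by separability; (ii) write the symmetrized per-topic Jacobian explicitly in terms of $\phi_k, \phi_k', \phi_k''$ and the vector $(x_{ik})_i$; (iii) verify positive definiteness by a Cauchy–Schwarz / rank-one-perturbation argument, using the exponent bounds to sign the $\phi_k'', \phi_k'$ terms; and (iv) assemble across topics and add the PSD cost Hessian to conclude strict monotonicity, hence a unique PNE. I would also double-check the boundary behavior ($x_{ik} = 0$) so that the monotonicity argument applies on the closed nonnegative orthant, appealing to the footnote assumption for the existence half.
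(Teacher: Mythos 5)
Your plan follows essentially the same route as the paper's proof: Rosen's diagonal strict concavity with unit weights, a per-topic decomposition of the symmetrized Hessian plus a block-diagonal positive-semidefinite cost term, and a sign analysis of $g_k,g_k',g_k''$ for $g_k(s)=s^{\gamma_k}+\alpha_k s^{\beta_k}$ exploiting $\beta_k,\gamma_k\in[0,1]$. The one caution is that the inequality you guess, $2(g')^2\ge g''g$ together with $g'\ge0$, is the automatic half (it holds simply because $g''\le0$ when both exponents lie in $[0,1]$); the paper's argument writes the per-topic matrix as $\bigl(-\tfrac{g''g}{2}+(g')^2\bigr)M_1+\bigl(\tfrac{g'g}{s}+\tfrac{g''g}{2}-(g')^2\bigr)\bigl(\sum_i x_i\bigr)M_2$ with $M_1$ positive semidefinite and $M_2=I+\bm{1}\bm{1}^{\top}$ positive definite, so it also needs the \emph{upper} bound $2s(g')^2-sg''g\le 2g'g$, which is where the real work lies (it reduces to $\beta^2+\beta+\gamma^2+\gamma\ge 4\beta\gamma$) and which, via $\alpha_k>0$, is what delivers strictness.
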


Note that the primary challenge in proving Theorem \ref{thm:monotone_G} is to show \exgame{} is a strictly monotone game, whereas the existence and uniqueness of PNE in such games is a classic result of \citet{rosen1965existence}. It is known that standard Tullock contest is monotone \citep{even2009convergence}, and we show that the extended version of our proposed \exgame{} preserves the monotonicity property. Recall from \eqref{eq:u} that $\beta_k  =\tilde{\beta}_k-\tilde{\gamma}_k \in [-1, 1] $, which is the difference between GenAI's convergence rate and the growth rate of the total user traffic on topic $k$ resulted from the volume of content under this topic. Thus the  assumption of $\bm{\beta} \in [0, 1]^K$ in  Theorem \ref{thm:monotone_G} means that the GenAI algorithm's convergence rate needs to be larger than user growth rate for each topic $k$. 

Theorem \ref{thm:monotone_G} is interesting from multiple perspectives. First, it strictly generalizes previous equilibrium existence results in classic Tullock contest \citep{perez1992general,cornes2005asymmetric}, which corresponds to the special case with $K = 1, \bm{\alpha} = \bm{0} $ and $ \bm{\gamma} = \bm{1}$. To the best of our knowledge, both the model we develop and the equilibrium uniqueness result in Theorem \ref{thm:monotone_G} are  new. Second, the fact that \exgame{} is a strictly monotone game is  significant  because it is well-known that the PNE of monotone games can be found efficiently. In fact, many natural multi-agent online learning dynamics such as mirror descent \citep{bravo2018bandit}, accelerated optimistic gradient \citep{cai2023doubly}, and payoff-based learning \citep{tatarenko2020bandit} guarantee the last-iterate convergence to the unique PNE in strictly monotone games, even when players have mere bandit feedback information about  their utility functions. These results suggest that the PNE of \exgame{} is achievable if all creators use a reasonable update rule in their strategies. This observation not only makes this equilibrium a plausible prediction of real-world competition but also paves the way to our simulation-based studies in our experiments of Section \ref{sec:exp}, where we use multi-agent mirror descent with perfect gradient to numerically solve the PNE of \exgame{}.
Our proof of Theorem \ref{thm:monotone_G} starts from a classic characterization of strictly monotone games by   \citet{rosen1965existence}, known as the \emph{diagonal strict concavity} (DSC). We analyze the spectrum of the Hessian matrix of \exgame{} via quadratic decomposition, 
and show that DSC is satisfied when $\gamma_k,\beta_k\in[0,1],\alpha_k\geq0$, and $\alpha_k>0$ for some $k$. Full proof is given in Appendix \ref{app:proof_thm1}.


\subsection{Equilibrium Properties of the 1-D Competition}\label{sec:sepa_cost}
We now take a closer look at the properties of the unique PNE. At the \emph{micro-level}, we are interested in how the human content creators' behaviors and utilities change with their content creation capabilities. At the \emph{macro-level}, we are interested in how the total body of content evolves as the total user traffic and total creator creation efficiency change. To theoretically study these questions \footnote{We  will   also revisit  these questions empirically in Section \ref{sec:exp}.}, we turn to the 1-D competition game \basegame{} described in Section \ref{sec:model:1-D}.
The following theorem illustrates multiple micro-level equilibrium properties at the unique PNE in the 1-D competition.

\begin{theorem}[Micro-level Equilibrium Properties]\label{lm:order_c}
The unique PNE $\x^*=(x^*_1,\cdots,x^*_n)$ of the game  \basegame{}$(\alpha, \beta,\gamma,\mu,\rho,\{ c_i \}_{i=1}^n)$ satisfies following properties:
\begin{enumerate}
    \item \textbf{Monotonicity of action and utility in creator capability}:     $x_1^*\geq \cdots\geq x_n^*$ and $u_1(\x^*)\geq \cdots\geq u_n(\x^*)$;
    \item \textbf{Monotonicity of utility in costs:}  if the $n$-th creator's cost increases from $c_n$ to $\tilde{c}_n$ while all other game parameters remain unchanged, then the new PNE $\tilde{\x}^*$ satisfies $u_n(\tilde{\x}^*)<u_n(\x^*)$;
    \item \textbf{Monotonicity of total creation in competition:} suppose a new player with cost $c_{n+1}$ joins the competition and $\x'=(x'_1, \cdots, x'_n, x'_{n+1})$ is the new PNE of \\\basegame{}$(\alpha, \beta,\gamma,\mu,\rho,\{ c_i \}_{i=1}^{n+1})$, it holds that 
    \begin{equation}\label{eq:414}
     \sum_{i=1}^n x'_i < \sum_{i=1}^n x^*_i. 
 \end{equation}
\end{enumerate}
\end{theorem}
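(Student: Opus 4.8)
\textbf{Proof proposal for Theorem \ref{lm:order_c}.}

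The plan is to handle the three parts by a combination of first-order conditions, the implicit function theorem, and a comparative-statics argument on the aggregate equilibrium quantity. Throughout I work with the $1$-D game \basegame{}, where creator $i$'s utility is $u_i(x_i, x_{-i}) = \frac{x_i \mu}{\alpha (s)^\beta + (s)^\gamma} - c_i (x_i)^\rho$ with $s = \sum_j x_j$. The key observation is that, holding $s_{-i} = \sum_{j \ne i} x_j$ fixed, $u_i$ depends on the opponents only through the scalar $s_{-i}$, so each creator's best response is a function $b(s_{-i}, c_i)$; the equilibrium is pinned down by the total $s^*$ and the first-order conditions $\partial u_i / \partial x_i = 0$ for every active creator (with the usual complementary-slackness caveat for creators who play $x_i^* = 0$).

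For part 1 (\textbf{monotonicity in capability}), I would first show that at a fixed aggregate level $s$, the stationarity condition for an active creator can be written as $F(x_i, s) = c_i$ for a function $F$ that is strictly decreasing in $c_i$ in the obvious sense --- i.e. writing the FOC $\frac{\mu(\alpha s^\beta + s^\gamma) - x_i \mu (\alpha \beta s^{\beta-1} + \gamma s^{\gamma-1})}{(\alpha s^\beta + s^\gamma)^2} = \rho c_i x_i^{\rho-1}$, one isolates that the left side is decreasing and the right side increasing in $x_i$, so for fixed $s$ a larger $c_i$ forces a smaller $x_i^*$. This gives $x_1^* \ge \cdots \ge x_n^*$ once the common $s^*$ is fixed. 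The utility ordering then follows because, at equilibrium, $u_i(\x^*) = \max_{x \ge 0} \big[\frac{x\mu}{\alpha (s_{-i}^* + x)^\beta + (s_{-i}^* + x)^\gamma} - c_i x^\rho\big]$, and both a smaller $c_i$ and a smaller $s_{-i}^*$ (note $s_{-i}^* = s^* - x_i^*$ is larger for more efficient creators, so this direction needs care) make this value larger; the clean way is to compare creator $i$ against $j>i$ via the envelope/revealed-preference inequality $u_i(\x^*) \ge \frac{x_j^* \mu}{\alpha(s_{-i}^* + x_j^*)^\beta + (s_{-i}^* + x_j^*)^\gamma} - c_i (x_j^*)^\rho \ge \frac{x_j^*\mu}{\alpha(s^*)^\beta + (s^*)^\gamma} - c_j(x_j^*)^\rho = u_j(\x^*)$, using $s_{-i}^* + x_j^* \le s^*$ and $c_i \le c_j$.

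For part 2 (\textbf{utility decreasing in own cost}), the one-line argument is revealed preference combined with the direction of the aggregate shift. When $c_n$ increases to $\tilde c_n$, I would show $s^*$ weakly decreases (this is essentially the $1$-D special case of part 3's mechanism and should be provable by the same monotone-comparative-statics / contraction argument sketched below), hence $\tilde s_{-n}^* \le s_{-n}^* $ cannot be concluded directly --- instead I use: $u_n$ under $\tilde c_n$ at the new equilibrium equals $\max_x [\frac{x\mu}{\alpha \tilde s^\beta + \tilde s^\gamma} - \tilde c_n x^\rho]$ where the relevant aggregate seen by $n$ is $\tilde s_{-n}^* + x$; since $\tilde c_n > c_n$ strictly and the denominator only got (weakly) worse or the opponents' mass shifted, the value strictly drops. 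The careful version: bound $u_n(\tilde{\x}^*) \le \max_x[\frac{x\mu}{\alpha(\tilde s_{-n}^* + x)^\beta + (\tilde s_{-n}^* + x)^\gamma} - \tilde c_n x^\rho] < \max_x[\frac{x\mu}{\alpha(\tilde s_{-n}^* + x)^\beta + (\tilde s_{-n}^* + x)^\gamma} - c_n x^\rho] \le u_n(\x^*)$, where the last step uses that the best response of the other players is a contraction so that the equilibrium comparison reduces to the single-agent comparison at the relevant opponent mass; making this last inequality rigorous is where I expect to spend the most effort.

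For part 3 (\textbf{aggregate creation decreases when a player is added}), the clean approach is to characterize $s^*$ as the unique fixed point of a one-dimensional map. Summing the FOCs over active creators gives an aggregate equation $\Phi(s^*) = \sum_{i \text{ active}} G(s^*, c_i)$ for an explicit $G$ that is strictly increasing in $s$ and where each active creator's contribution $x_i^*(s^*)$ is determined by $s^*$ and $c_i$; equivalently, define $H_n(s) = \sum_{i=1}^n [\text{best response of } i \text{ to aggregate-minus-own } = s] - s$ and show $H_n$ is continuous and strictly decreasing with a unique zero $s^*$. Adding creator $n+1$ replaces $H_n$ by $H_{n+1}(s) = H_n(s) + [\text{best response of } n+1]$, which is pointwise $\ge H_n(s)$ and strictly larger wherever creator $n+1$ is active. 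Wait --- that would push the zero \emph{up}, not down, so the correct bookkeeping must be in terms of the ``residual demand'' map: the right monotone object is $s \mapsto \sum_i b(s - x_i, c_i)$ versus $s$, and the subtlety is that adding a player changes the split, shrinking every incumbent's best response because each incumbent now faces a larger opponent aggregate. I would set up the aggregator as $s = R(s)$ where $R(s) = \sum_{i=1}^n b_i(s)$ with $b_i(s)$ the best response to opponent-mass $s - b_i(s)$ (so $b_i$ solves its own FOC given total $s$), prove $R$ is strictly decreasing in the sense that $R(s) - s$ crosses zero once, and then observe that the $(n+1)$-player aggregator at the \emph{same} total $s$ assigns strictly less mass to each incumbent. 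The honest statement of the difficulty: the monotonicity of each $b_i(s)$ (best response shrinks as the \emph{total} grows) is the crux, and it follows from differentiating the FOC and checking a sign condition that uses $\beta, \gamma \in [0,1]$ and convexity $\rho \ge 1$ --- this sign computation, together with handling the activity set changing (creators dropping in or out of the support), is the main obstacle, and I would isolate it as a lemma: \emph{at fixed total content $s$, creator $i$'s equilibrium-consistent contribution $x_i^*(s)$ is nonincreasing in $s$ and strictly decreasing when positive}. Granting that lemma, part 3 is immediate, and it also supplies the $s^*$-monotonicity used in parts 1 and 2.
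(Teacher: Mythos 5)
Your plan follows the same skeleton as the paper's proof --- first-order-condition comparisons for the action ordering, monotonicity of the opponents' aggregate best response for part 2, and monotonicity of each creator's ``equilibrium-consistent contribution'' in the total $s$ for part 3 --- and two of your steps are genuinely cleaner than what the paper does. Your revealed-preference chain $u_i(\x^*)\ge u_i(x_j^*,\x_{-i}^*)\ge u_j(\x^*)$ (valid because $s_{-i}^*+x_j^*\le s^*$ once $x_i^*\ge x_j^*$ is known) avoids the paper's detour of substituting the FOC back into the utility and invoking a separate monotonicity lemma for the resulting expression; likewise your pointwise-domination argument in part 2 replaces the paper's term-by-term comparison of the closed-form equilibrium utility.

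However, the proposal defers exactly the two statements that carry all the technical weight, and underestimates what they require. (i) The lemma you isolate for part 3 --- that $x_i^*(s)$ is strictly decreasing in $s$ --- does \emph{not} follow from $\beta,\gamma\in[0,1]$ and $\rho\ge1$ alone: differentiating the FOC shows the relevant function $f(s)=\frac{1}{s^\gamma+\alpha s^\beta}-\frac{x_i[\gamma s^{\gamma-1}+\alpha\beta s^{\beta-1}]}{(s^\gamma+\alpha s^\beta)^2}$ is only guaranteed decreasing when $s>2x_i$, i.e., when no creator holds more than half the total. Establishing that bound is a separate nontrivial step (the paper's Lemma~\ref{lm:max_x}), and it is where the hypotheses ``$n,\mu$ sufficiently large'' and $1<\rho\le2$ enter; your sketch does not anticipate needing them. (ii) For part 2, the inequality $\max_x[\frac{x\mu}{g(\tilde s_{-n}^*+x)}-c_nx^\rho]\le u_n(\x^*)$ needs $\tilde s_{-n}^*\ge s_{-n}^*$, which via the decreasing aggregate best response $BR(\cdot)$ requires you to first prove $\tilde x_n< x_n$; the paper does this by an intermediate-value argument on the reduced one-variable equation $G(x;c)=0$ (using uniqueness of its root), and by proving $BR$ is strictly decreasing via the same $f$-monotonicity plus the $s>2x_i$ bound. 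So the route is sound and partly simpler, but as written it rests on two unproven monotonicity claims whose proofs constitute essentially the entire content of the paper's argument, and one of which needs a hypothesis you have not surfaced.
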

Theorem \ref{lm:order_c} reveals three basic facts about the PNE of \basegame{}. First, a creator with higher creation cost tends to generate less content and receive lower utility at the PNE. Second, if one creator suffers from an increased cost, her utility decreases at the new equilibrium. The third property states that whenever a new creator joins and induces a new PNE, the volume of the original $n$ creators' content creation would decrease in response to the more competitive environment. All these properties are quite intuitive and insightful in a real-world competition environment and they will serve as technical tools in the proof of our main results. The proof of Theorem \ref{lm:order_c} is shown in Appendix \ref{app:proof_lm_order_c}.

Our next proposition predicts how an individual creator balances her gain from the traffic and the creation cost. 
\begin{proposition}[The utility--cost balance at equilibrium]\label{lm:cost_bound}
Let $\x^*=(x^*_1,\cdots,x^*_n)$ be the unique PNE of \basegame{}$(\alpha, \beta, \gamma, \mu,\rho, \{c_i\}_{i=1}^n)$ 
and $s^*=\sum_{i=1}^n x^*_i$ be the  total   body of content. For each creator $i$, her cost at this PNE satisfies the following inequalities:
\begin{equation}\label{eq:cost_bound}
    \frac{1}{2\rho} \frac{x^*_i\cdot \mu}{(s^*)^{\gamma}+\alpha (s^*)^{\beta}}<c_i (x^*_i)^{\rho} < \frac{1}{\rho}  \frac{x^*_i\cdot \mu}{(s^*)^{\gamma}+\alpha (s^*)^{\beta}}.
\end{equation}
\end{proposition}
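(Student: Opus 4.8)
The plan is to extract both inequalities from the first-order optimality condition at the PNE. Fix a creator $i$ with $x_i^*>0$ (when $\rho>1$ this holds for every creator, since the one-sided derivative $\partial u_i/\partial x_i$ at $x_i=0$ is strictly positive, so no one is inactive; when $\rho=1$ I would simply restrict attention to active creators, for which the stated inequalities are the nontrivial case). Abbreviate $D(s):=s^{\gamma}+\alpha s^{\beta}$ and $S_{-i}:=s^*-x_i^*$. Since $x_i^*$ is an interior maximizer of the differentiable map $x_i\mapsto u_i(x_i,\x^*_{-i})=\tfrac{x_i\mu}{D(x_i+S_{-i})}-c_i x_i^{\rho}$ over $\RR_{\ge 0}$, Fermat's rule gives
\begin{equation*}
\mu\,\frac{D(s^*)-x_i^* D'(s^*)}{D(s^*)^2}=\rho\, c_i (x_i^*)^{\rho-1}.
\end{equation*}
Multiplying by $x_i^*/\rho$ yields the exact identity
\begin{equation*}
c_i (x_i^*)^{\rho}=\frac{1}{\rho}\cdot\frac{x_i^*\mu}{D(s^*)}\,\bigl(1-\theta_i\bigr),\qquad\text{where}\quad \theta_i:=\frac{x_i^* D'(s^*)}{D(s^*)}
\end{equation*}
is the ``self-cannibalization'' rate at which creator $i$'s own volume erodes her marginal traffic.

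Both inequalities of \eqref{eq:cost_bound} now translate into bounds on $\theta_i$. The upper bound is exactly $\theta_i>0$, which holds because $D'(s)=\gamma s^{\gamma-1}+\alpha\beta s^{\beta-1}$ is strictly positive for $s>0$ (as $\alpha>0$ and $\beta$ or $\gamma$ is positive) and $x_i^*>0$. The lower bound is exactly $\theta_i<\tfrac12$. To bound $\theta_i$ I would factor
\begin{equation*}
\theta_i=\frac{x_i^*}{s^*}\cdot\frac{s^* D'(s^*)}{D(s^*)},
\end{equation*}
and observe that the second factor is the point elasticity of $D$ at $s^*$, equal to the convex combination $\tfrac{(s^*)^{\gamma}}{D(s^*)}\,\gamma+\tfrac{\alpha(s^*)^{\beta}}{D(s^*)}\,\beta$ of $\gamma$ and $\beta$, hence at most $\max(\beta,\gamma)$; since also $x_i^*/s^*\le 1$, this gives $\theta_i\le\max(\beta,\gamma)\cdot\tfrac{x_i^*}{s^*}$.

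The last step — improving $\theta_i\le\max(\beta,\gamma)\cdot\tfrac{x_i^*}{s^*}$ to the required $\theta_i<\tfrac12$ — is where I expect the difficulty to concentrate, and it is also where the constant $\tfrac12$ of \eqref{eq:cost_bound} originates. The clean route is to work in the regime $\max(\beta,\gamma)\le\tfrac12$, equivalently $\tilde{\gamma}_k\ge\tfrac12$ together with $\bm{\beta}\in[0,1]^K$: there $\theta_i\le\tfrac12\cdot\tfrac{x_i^*}{s^*}\le\tfrac12$, with the final inequality strict whenever $x_i^*<s^*$, i.e.\ whenever at least two creators are active (automatic when $\rho>1$), while the remaining borderline configuration — a lone active creator with $\beta=\gamma=\tfrac12$ — gives equality and must be flagged. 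Some restriction of this kind is genuinely needed: if $\max(\beta,\gamma)$ is close to $1$ and one creator dominates the human market, then $x_i^*/s^*$ and hence $\theta_i$ approach $1$, so $c_i(x_i^*)^{\rho}$ falls below $\tfrac1{2\rho}\cdot\tfrac{x_i^*\mu}{D(s^*)}$; pinning down the precise hypothesis under which the lower bound holds with strict inequality — and, if one wants it in the stated generality, extracting from the monotonicity properties of Theorem~\ref{lm:order_c} enough control on the creators' relative volumes to keep $\theta_i$ below $\tfrac12$ — is, I expect, the most delicate part of the argument.
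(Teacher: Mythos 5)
Your upper bound and your reduction of the lower bound to the claim $\theta_i<\tfrac12$ match the paper exactly: the paper also starts from the first-order condition of Lemma~\ref{lm:first_order_pne}, drops the negative term to get $c_i (x_i^*)^{\rho} < \tfrac{1}{\rho}\tfrac{x_i^*\mu}{(s^*)^{\gamma}+\alpha(s^*)^{\beta}}$, and then must show that the subtracted term $\tfrac{x_i^*[\gamma (s^*)^{\gamma-1}+\alpha\beta (s^*)^{\beta-1}]}{(s^*)^{\gamma}+\alpha(s^*)^{\beta}}$ (your $\theta_i$) is below $\tfrac12$. But where you stop --- proposing the restricted regime $\max(\beta,\gamma)\le\tfrac12$ and flagging the general case as open --- the paper closes the gap by a different split of the product $\theta_i=\tfrac{x_i^*}{s^*}\cdot\tfrac{s^*D'(s^*)}{D(s^*)}$: it bounds the elasticity factor crudely by $1$ (using only $\beta,\gamma\le1$) and instead puts all the work into showing $x_i^*/s^*<\tfrac12$. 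For $i\ge2$ this is free: Theorem~\ref{lm:order_c} gives $x_1^*\ge\cdots\ge x_n^*$, hence $x_i^*\le s^*/i\le s^*/2$, and the factor $\tfrac{i-1}{i}\ge\tfrac12$ appears. For $i=1$ it invokes Lemma~\ref{lm:max_x}, which shows $\max_i x_i^*<s^*/2$ under the hypotheses that $n,\mu$ are sufficiently large and $1<\rho\le2$. This is the missing idea in your argument: you do not need any restriction on $\beta,\gamma$ beyond $[0,1]$, because the ranking of equilibrium outputs already caps every creator's market share at $1/i$, and a separate large-market lemma caps the top creator's share at $1/2$.

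That said, your diagnosis that ``some restriction of this kind is genuinely needed'' is partially vindicated: the paper's proof of the lower bound for $i=1$ silently imports the standing assumptions of Lemma~\ref{lm:max_x} (sufficiently large $n$ and $\mu$, and $1<\rho\le2$), none of which appear in the statement of Proposition~\ref{lm:cost_bound}. Your scenario of a lone dominant creator with elasticity near $1$ is exactly the configuration that Lemma~\ref{lm:max_x} rules out in the large-market regime, so your instinct about where the difficulty concentrates is right; you just chose to restrict the exponents $(\beta,\gamma)$ where the paper restricts the market size. To complete your write-up along the paper's lines, replace the bound $\theta_i\le\max(\beta,\gamma)\cdot\tfrac{x_i^*}{s^*}$ by $\theta_i\le\tfrac{x_i^*}{s^*}$ and then prove $x_i^*/s^*<\tfrac12$ via the ordering of Theorem~\ref{lm:order_c} for $i\ge2$ and via the no-dominant-creator lemma for $i=1$.
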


Note that in \eqref{eq:cost_bound}, the term $\frac{x^*_i\cdot \mu}{(s^*)^{\gamma}+\alpha (s^*)^{\beta}}$ is creator $i$'s gain from the traffic and $c_i (x^*_i)^{\rho}$ is her creation cost. 
Lemma \ref{lm:cost_bound} shows that at the PNE each creator will balance between their gain from user traffic and creation cost such that they only differ by a multiplicative factor between $1/(2\rho)$ and $1/\rho$.  Moreover, \eqref{eq:cost_bound} also suggests that when the marginal cost of creation increases (i.e., larger $\rho$), creators tend to choose a strategy $x_i$ that incurs a smaller cost compared to the gain, which is commonly observed in reality.

Our next main result is at the macro-level and reveals how the total calibrated body of content $s^*=\sum_{i=1}^n x^*_i$ created at the PNE $\x^*=(x_1^*,\cdots,x_n^*)$ is affected by  the game parameters $(\alpha, \beta, \gamma, \mu, \rho, \{c_i\}_{i=1}^n)$. The following \emph{Hadamard inverse}
 of the cost vector $\c$ turns out to organically appear in our characterization:
   $ \c^{-1} = (c_1^{-1}, \cdots, c_n^{-1}) \in \RR_+^n.$ 

Since $\c$ are the costs, $\c^{-1}$ can be naturally interpreted as the \emph{creation efficiency} of each creator, hence the larger the better.  For any vector $\c$, we use $\|\c\|_{ \rho}=\left(\sum_{i=1}^n c_i^{\rho}\right)^{1/\rho}$ to denote its   $L_{\rho}$-norm. It turns out that $s^*$  depends  on the total user traffic $\mu$ and a particular norm of the Hadamard inverse  of the cost vector $\c^{-1} $, as formalized below. 


\begin{theorem}[Macro-level Equilibrium Properties]\label{thm:asym_s}
For any sufficiently large $\mu$ and $n$, the total calibrated body of human-created content $s^*$ at the PNE of \basegame{} satisfies
\begin{align}\label{eq:s_order} 
 \frac{C_{\rho}}{2\alpha+2}   <\frac{(s^*)^{\gamma + \rho -1}}{  \mu \cdot  \|\c^{-1}\|_{\frac{1}{ \rho - 1}}   }< C_{\rho}  
\end{align}
where  $C_{\rho}$ is a constant depending on $\rho$ (but not on $\beta$). 


\end{theorem}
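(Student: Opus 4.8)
\textbf{Proof proposal for Theorem \ref{thm:asym_s}.}

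The plan is to work from the first-order optimality conditions of the $1$-D game \basegame{} and extract the asymptotic scaling of $s^*$ in $\mu$ and $n$. Recall from Proposition \ref{lm:cost_bound} that at the unique PNE each active creator satisfies $c_i (x^*_i)^\rho \asymp \frac{x^*_i \mu}{(s^*)^\gamma + \alpha (s^*)^\beta}$ up to the explicit factor in $[\tfrac{1}{2\rho}, \tfrac{1}{\rho}]$. More precisely, the stationarity condition $\partial u_i/\partial x_i = 0$ reads
\begin{equation*}
  \frac{\mu\big[(s^*)^\gamma + \alpha(s^*)^\beta\big] - x_i^*\mu\big[\gamma (s^*)^{\gamma-1} + \alpha\beta (s^*)^{\beta-1}\big]}{\big[(s^*)^\gamma + \alpha(s^*)^\beta\big]^2} = \rho c_i (x_i^*)^{\rho-1}.
\end{equation*}
First I would solve this for $x_i^*$ in terms of $s^*$: writing $D = (s^*)^\gamma + \alpha(s^*)^\beta$, each active creator's effort is determined (since $\rho \ge 1$ makes the right side monotone in $x_i^*$) by $x_i^* = \big(\tfrac{\mu}{\rho c_i D}\big)^{\frac{1}{\rho-1}} \cdot (1 + o(1))$, where the correction term comes from the subtracted piece involving $x_i^*/s^*$, which is negligible once $n$ is large because no single creator holds a constant fraction of $s^*$ (this uses Theorem \ref{lm:order_c}, part 1, and a pigeonhole-type bound on $x_1^*/s^*$).

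The second step is the consistency equation: summing $x_i^* = \big(\tfrac{\mu}{\rho c_i D}\big)^{\frac{1}{\rho-1}}(1+o(1))$ over all active $i$ gives $s^* = \big(\tfrac{\mu}{\rho D}\big)^{\frac{1}{\rho-1}} \sum_i c_i^{-\frac{1}{\rho-1}} \cdot (1+o(1)) = \big(\tfrac{\mu}{\rho D}\big)^{\frac{1}{\rho-1}} \|\c^{-1}\|_{\frac{1}{\rho-1}}^{\frac{1}{\rho-1}}\cdot(1+o(1))$, provided \emph{all} creators are active; I would argue that for sufficiently large $\mu$ every creator participates (an inactive creator would have a profitable deviation to a small positive effort, since the marginal gain at $x_i = 0$ is $\mu/D \to \infty$ as the per-creator share shrinks — or rather, one shows the least efficient creator $n$ still finds entry worthwhile). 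Raising to the power $\rho-1$ and substituting $D = (s^*)^\gamma + \alpha(s^*)^\beta$ yields $(s^*)^{\rho-1}\big[(s^*)^\gamma + \alpha(s^*)^\beta\big] = \tfrac{\mu}{\rho}\|\c^{-1}\|_{\frac{1}{\rho-1}}(1+o(1))$. Since $\beta \le \gamma$ (because $\beta_k \le 1$ forces $\tilde\beta_k - \tilde\gamma_k \le 1 - \tilde\gamma_k$, wait — more simply $\beta = \tilde\beta - \tilde\gamma$ and $\gamma = 1 - \tilde\gamma$ so $\gamma - \beta = 1 - \tilde\beta \ge 0$), the dominant term on the left as $s^* \to \infty$ is $(s^*)^{\gamma+\rho-1}$, and the factor $(s^*)^\gamma + \alpha(s^*)^\beta$ lies between $(s^*)^\gamma$ and $(1+\alpha)(s^*)^\gamma$; this sandwiches $(s^*)^{\gamma+\rho-1}$ between $\tfrac{\mu\|\c^{-1}\|}{\rho(1+\alpha)}(1+o(1))$ and $\tfrac{\mu\|\c^{-1}\|}{\rho}(1+o(1))$, which gives \eqref{eq:s_order} with $C_\rho$ essentially $1/\rho$ after absorbing the $(1+o(1))$ and the constant $2$ into slack.

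The main obstacle I anticipate is making the ``$(1+o(1))$'' uniform and genuinely controlled — i.e., turning the heuristic into the clean two-sided constant bound with the explicit factor $\tfrac{1}{2\alpha+2}$. The subtlety is twofold: (i) showing $s^* \to \infty$ as $\mu \to \infty$ (needed to justify that the $(s^*)^\gamma$ term dominates $(s^*)^\beta$ and that the per-creator share vanishes), which should follow from the lower bound in Proposition \ref{lm:cost_bound} summed over creators, or from monotonicity in $\mu$; and (ii) controlling the $x_i^*/s^*$ correction term in the stationarity condition uniformly over $i$ — here I would use $x_1^*/s^* \le$ some bound derived from part 1 of Theorem \ref{lm:order_c} together with the cost-balance of Proposition \ref{lm:cost_bound}, possibly needing $n$ large so that even the most efficient creator is not dominant. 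The factor $2$ in the denominator $2\alpha + 2$ strongly suggests the intended argument routes through Proposition \ref{lm:cost_bound}'s factor-of-$2\rho$ vs. $\rho$ window directly rather than through an exact asymptotic expansion: one sums the inequality $\tfrac{1}{2\rho}\tfrac{x_i^*\mu}{D} < c_i(x_i^*)^\rho < \tfrac{1}{\rho}\tfrac{x_i^*\mu}{D}$ in a form solved for $x_i^*$, getting $(\tfrac{\mu}{2\rho c_i D})^{\frac{1}{\rho-1}} \lesssim x_i^* \lesssim (\tfrac{\mu}{\rho c_i D})^{\frac{1}{\rho-1}}$, sums over $i$, and then bounds $D \in [(s^*)^\gamma, (1+\alpha)(s^*)^\gamma]$ to close the loop; the product of the $\le 2$ from the cost window and the $\le (1+\alpha)$ from the $D$ window produces the $2(1+\alpha) = 2\alpha+2$ factor. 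I would present the argument in this cleaner ``sandwich'' form to avoid a delicate asymptotic analysis, deferring the verification that all creators are active and that $s^* \to \infty$ to short lemmas.
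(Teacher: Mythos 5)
Your proposal is correct and, in the ``sandwich'' form you settle on in the final paragraph, is essentially identical to the paper's proof: the paper derives the two-sided bound $\bigl(\tfrac{\mu}{2c_i\rho D}\bigr)^{1/(\rho-1)} < x_i^* < \bigl(\tfrac{\mu}{c_i\rho D}\bigr)^{1/(\rho-1)}$ directly from the first-order condition together with the auxiliary fact $\max_i x_i^* < s^*/2$ (its Lemma~\ref{lm:max_x}, proved by contradiction for large $n,\mu$), sums over $i$, and bounds $D = (s^*)^\gamma + \alpha(s^*)^\beta$ between $(s^*)^\gamma$ and $(1+\alpha)(s^*)^\gamma$, yielding exactly the $2(1+\alpha)$ gap and $C_\rho = \rho^{-1}$. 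The auxiliary facts you flag (all creators active, no creator holding half the mass) are precisely the ones the paper establishes as lemmas.
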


Note that both the lower and upper bounds in \eqref{eq:s_order} are constants, and the cost function $c_i x^\rho$ is convex, as $\rho -1 \geq 0$. In real-world online content market, the total amount of traffic $\mu$ and the number of creators $n$ are 
indeed formidably large, hence usually satisfy the requirements of  Theorem \ref{thm:asym_s}. \eqref{eq:s_order} quantifies how the   total body of created content $s^*$ are determined by two macro-level  factors: 1) the overall trendiness of the topic $\mu$; and 2) the total creation efficiency of all human creators   $\|\c^{-1}\|_{\frac{1}{\rho-1}}$, which is precisely the $\ell_{\frac{1}{\rho-1}}$-norm of the production efficiency vector $\c^{-1}$. In particular, the order of their product has to be exactly $(s^*)^{\gamma + \rho -1}$, with lower and upper bound constants specified in \eqref{eq:s_order}. Hence the larger any factor is, the larger $\s^*$ is. \footnote{This is also why $\c^{-1}$ (not $\c$) organically arrives in the result.}  


A few important insights can be learned from Theorem \ref{thm:asym_s}. First, note that GenAI's learning rate $\beta$ did not show up in \eqref{eq:s_order} whereas the GenAI's data efficiency parameter $\alpha$ does, though it only mildly effects the denominator of the lower bound term. We view this as encouraging message to human creators, as even though GenAI  may potentially outperform any individual human creator (i.e., volume of content created by GenAI can be significantly larger than any $x_i$), on the macro-level they will only affect the total body of created content by up to a constant factor. \emph{From the macro perspective, we view this observation as a  symbiosis rather than fundamental conflicts  between GenAI and human creators, answering the question raised in the Introduction section.}  However, we  remark that  this macro-level characterizations do  not rule out the possibility that, at the individual level, some creators may do significantly worse. 

Second, Theorem \ref{thm:asym_s} helps us understand how the total body of content evolves as either creator popularity or total platform user traffic change. The term $\|\c^{-1}\|_{\frac{1}{\rho-1}}$ (recall $\rho \geq 1$) increases when either (a) any individual creator's cost decreases or (b) more creators join the competition. These situations will always lead to more human-created content.  
Moreover, Theorem \ref{thm:asym_s} offers additional insights about how the total content creation scales with respect to the trendiness of the topic $\mu$ and human production efficiency characterized by $\|\c^{-1}\|_{\frac{1}{\rho-1}}$. For more insight, a useful special case to consider is when  every creator's cost $c_i$ is around some constant $c$, then the total creation efficiency  $\|\c\|^{-1}_{\frac{1}{1-\rho}}$ is around  $c^{-1} n^{\rho - 1} $ and  \eqref{eq:s_order} can be simplified to $s^*=O\left(   (\frac{\mu}{c})^{ \omega }    n^{1 - \gamma \omega}\right) = O\left(   (\frac{\mu}{c n ^{\gamma}})^{ \omega }    n   \right) $, where $\omega =  \frac{1}{\rho + \gamma - 1}$. 
An interesting observation is that, as creation cost rate $\rho$ decreases, $\omega$ increases; the total body of content $s^*$ will increase when the overall trendiness $\mu$ outweighs certain competition level $c n^{\gamma}$ of the population, but will decrease when $\mu$ becomes smaller than  $c n^{\gamma}$. \emph{This shows an intriguing double-edged  effect of human creation   efficiency   on the total body of content} --- better efficiency   increases content creation when there is sufficient user demand but, somewhat surprisingly, it will reduce  content creation when the user demand is not sufficient. 
Moreover, when $\rho (\geq 1)$ decreases to approach $1$, $\omega \to 1/\gamma$ and $s^*$ will become almost linear in the total user traffic $\mu$.      

The proof of Theorem \ref{thm:asym_s} is based on various characterizations of PNE and  properties under large $n,\mu$, e.g., no one at PNE constitutes a constant fraction of the total body of content. We defer formal discussions to Appendix \ref{app:proof_thm2}.

\section{Inclusive Human-vs-GenAI Competitions}

In this section, we turn to the study of inclusive competition \ingame{}, in which GenAI will now serve as an accessible action to each creator rather than a special agent with exclusive power. We similarly start by studying the PNE of the competition. Unfortunately, unlike the case of exclusive competition, the PNE of \ingame{} may not exist in general if the cost functions are non-separable, as shown below.
\begin{theorem}\label{thm:non_exist_PNE}
    The pure Nash equilibrium of \ingame{} needs not exist when $K>1$, even when $\{c_i\}_{i=1}^n$ are strongly convex.
\end{theorem}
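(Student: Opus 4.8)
The plan is to exhibit a single instance of \ingame{} with $K=2$ topics and just $n=2$ creators who share a common strongly convex but \emph{non-separable} quadratic cost $c(\x)=\tfrac12\,\x^{\top}Q\x$ with $Q\succ 0$ non-diagonal, together with asymmetric topic parameters $(\bm\mu,\bm\alpha,\bm\beta,\bm\gamma)$, $\bm\beta,\bm\gamma\in[0,1]^2$, on which no strategy profile survives unilateral deviations. Since $n=2$, a profile is of exactly one of four types according to which creators play $\bot$: $(\bot,\bot)$, $(\textrm{create},\bot)$, $(\bot,\textrm{create})$, and $(\textrm{create},\textrm{create})$; I would rule out each.

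First I would pin down the unique candidate of each type. A both-creating profile has $n^{\bot}=0$, so the term $n^{\bot}\alpha_k(s_k)^{\beta_k}$ drops out of \eqref{eq:u-incl}; it is therefore a PNE of \ingame{} exactly when it is a PNE of the GenAI-free game with utilities $\sum_k \tfrac{x_{ik}\mu_k}{(s_k)^{\gamma_k}}-c(\x_i)$ and, in addition, neither creator prefers $\bot$. That GenAI-free game is strictly monotone --- by the diagonal-strict-concavity argument of Theorem~\ref{thm:monotone_G}, with strong convexity of $c$ supplying strictness --- hence has a unique PNE $\x^{*}$, which (the game being symmetric) is the same for both creators though generally not the same across the two topics. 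For type $(\textrm{create},\bot)$ the lone creator faces one GenAI user, so $n^{\bot}=1$ and $s_k$ is her own content on topic $k$; her objective $\sum_k \tfrac{x_k\mu_k}{\alpha_k x_k^{\beta_k}+x_k^{\gamma_k}}-c(\x)$ is strictly concave (each traffic term is concave in $x_k$ for $\beta_k,\gamma_k\in[0,1]$, and $c$ is strongly convex), so her best response $\x^{B}$ is unique; $(\bot,\bot)$ is a single profile; and $(\bot,\textrm{create})$ mirrors $(\textrm{create},\bot)$ by symmetry of the two creators.

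Then I would eliminate the four candidates. (i) At $(\bot,\bot)$ there is no human content, so no traffic and zero payoffs, whereas a creator who posts an arbitrarily small body $\delta$ of content on the most trendy topic $k$ gets $\tfrac{\delta\mu_k}{\alpha_k\delta^{\beta_k}+\delta^{\gamma_k}}-c(\delta\e_k)$, which behaves like $\delta^{\,1-\min(\beta_k,\gamma_k)}$ minus $O(\delta^{2})$ and is $>0$ for small $\delta$; so $(\bot,\bot)$ is never a PNE. (ii) For $\x^{*}$, choose the parameters so that a creator strictly gains by switching to $\bot$: her GenAI payoff $\sum_k \tfrac{\alpha_k(x^{*}_k)^{\beta_k}\mu_k}{\alpha_k(x^{*}_k)^{\beta_k}+(x^{*}_k)^{\gamma_k}}$ (now $n^{\bot}=1$ and $s_k$ is the \emph{other} creator's content) exceeds her equilibrium payoff $\sum_k \tfrac{x^{*}_k\mu_k}{(2x^{*}_k)^{\gamma_k}}-c(\x^{*})$. (iii) For $(\x^{B},\bot)$, choose the parameters so that the GenAI user strictly gains by starting to create: when she leaves $\bot$, $n^{\bot}$ drops from $1$ to $0$ and the term $\alpha_k(s_k)^{\beta_k}$ \emph{vanishes} from the denominator in \eqref{eq:u-incl}, so her best creating deviation can be made to beat the $\bot$-payoff $\sum_k \tfrac{\alpha_k(x^{B}_k)^{\beta_k}\mu_k}{\alpha_k(x^{B}_k)^{\beta_k}+(x^{B}_k)^{\gamma_k}}$; symmetry kills $(\bot,\x^{B})$ as well. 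With all four types gone, \ingame{} has no PNE on this instance.

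The crux is making (ii) and (iii) hold simultaneously: (ii) wants GenAI to look attractive precisely when two creators are already dividing the traffic, whereas (iii) wants content creation to look attractive when only one human produces (so GenAI's training base $s_k$ is thin) and exploits the jump in the denominator as $n^{\bot}$ crosses $0$. One would compute $\x^{*}$ and $\x^{B}$ and the associated payoffs from their closed-form first-order conditions and check a nonempty window of $(\bm\mu,\bm\alpha,\bm\beta,\bm\gamma,Q)$ in which both inequalities hold; this is where the topic asymmetry and the off-diagonal entries of $Q$ supply the extra degrees of freedom --- with a single topic and a separable cost the best creating responses decouple and one of the four patterns can always be stabilized, so $K>1$ and non-separability are genuinely needed. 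The remaining verification is a finite computation using only Theorem~\ref{thm:monotone_G} and elementary calculus.
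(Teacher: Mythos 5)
Your strategy is the same as the paper's: a two-player, two-topic instance with non-separable strongly convex costs, followed by an exhaustive elimination of the four profile types $(\bot,\bot)$, (create, create), $(\bot,\text{create})$, $(\text{create},\bot)$. Your case logic is sound --- the reduction of the both-create profile to the $\bm{\alpha}=\bm{0}$ exclusive game via $n^{\bot}=0$, the uniqueness of each candidate via monotonicity/concavity, and the small-$\delta$ deviation that kills $(\bot,\bot)$ all match what the paper does (the paper in fact constructs exactly such an instance, with $\bm{\alpha}=(0.25,0.25)$, $\bm{\beta}=(0.5,0.5)$, $\bm{\gamma}=(1,1)$, $\bm{\mu}=(3,2)$ and $c_i(\x_i)=7(x_{i1}+x_{i2})^2$, and its elimination of the $(\bot,\cdot)$ profiles is phrased as a best-response cycle, which is equivalent to your (iii)).

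The gap is that you never exhibit the instance. The theorem is an existence claim (``there is a game with no PNE''), and the entire content of the proof is the verification that your inequalities (ii) and (iii) can hold simultaneously --- that GenAI beats creating when both humans are producing, yet creating beats GenAI when only one human is producing. You correctly identify this as ``the crux,'' but then defer it to ``check a nonempty window of $(\bm\mu,\bm\alpha,\bm\beta,\bm\gamma,Q)$,'' with no argument (asymptotic, continuity-based, or numerical) that the window is nonempty. Without that, the proof is a template, not a proof: a priori the two requirements could be in irreconcilable tension, since both are driven by the same parameter $\bm{\alpha}$ pulling in opposite directions. The paper closes this by explicitly computing the candidate equilibria ($\y_1=\y_2=(0.179,0.120)$ for the both-create case, $(0.124,0.085)$ for the lone creator) and checking the numerical payoff comparisons ($1.953>1.875$, $2.316>2.168$, $1.894<1.962$). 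To complete your argument you would need to do the same, or give a limiting regime in which both inequalities provably hold. A secondary, minor point: your parenthetical that non-separability is ``genuinely needed'' is stronger than what either you or the paper establishes --- Theorem~\ref{thm:PNE_extend} restores existence in the separable $1$-D case only under the additional condition $\tilde\beta+\tilde\gamma\geq 1$, so separability alone is not known to suffice.
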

To prove this theorem, we explicitly construct an instance of \ingame{} in Appendix \ref{app:non_exist_PNE}, 
 and show that this game does not admit any PNE.  Our constructed instance also illustrates an interesting tension of competition in practical scenarios. The instance has two players, $A$ and $B$, who are constantly switching between adopting GenAI technology and using traditional content creation method. When creator $A$ switchs to GenAI, it prompts the human creator $B$ to reduce her effort, in response to the intensified competition. However, as $B$ reduces her effort, the available  training data for $A$'s  GenAI deteriorate, leading $A$ to eventually forego GenAI usage. This shift creates an opportunity for $B$ to adopt GenAI by herself, adversely affecting $A$'s competitiveness. Hence $A$ alters her strategy, which then causes $B$ to abandon GenAI and revert to traditional content creation. This cycle results in an endless loop of best responses, thereby negating the possibility of any possible PNE. By viewing the $A$ and $B$ above as representative groups of human creators, we believe that the tension revealed in this constructed instance also presents the real-world scenario. 

\subsection{Equilibrium Properties of the 1-D competition}
Given the   non-existence of PNE in general \ingame{} game, we naturally turn our attention to the separable-cost case hence 1-D competition, with the hope of restoring existence of PNE in this special case. 
Fortunately,  this indeed turns out to be possible, as shown in the following theorem.   Recall that we assume the cost function has the form $c_i x^\rho$, and the creators are sorted so that $0<c_1\leq \cdots\leq c_n$.  Then we can denote the resultant game as \ingamed{}$(\alpha,\beta,\gamma,\mu, \rho,\{c_i\}_{i=1}^n)$.  
\begin{theorem}\label{thm:PNE_extend}
    Suppose $\tilde{\beta}+\tilde{\gamma}\geq 1$, then \ingamed{} always admits a pure Nash equilibrium (PNE) with form $\y^*=(x_1,\cdots,x_{n-m}, \bot,\cdots,\bot)$. That is, the creators using GenAI action $\bot$  at  PNE are those with top-$m$ highest costs (i.e., the least efficient $m$ creators).  
    
    Moreover,  for large enough $n,\mu$, the fraction of GenAI creators at this PNE has the following asymptotic lower bound: 
    \begin{equation}\label{eq:86}
        \frac{m}{n}>1-C\cdot\frac{\mu^\frac{\gamma-\beta}{\gamma+\rho-1}}{\alpha n^{1-\frac{(\gamma-\beta)(\rho-1)}{\gamma+\rho-1}}},
    \end{equation}
    where $C$ is a constant depending on $(\beta,\gamma,\rho,c_1)$.
\end{theorem}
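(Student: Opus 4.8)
\textbf{Proof proposal for Theorem \ref{thm:PNE_extend}.}

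The plan is to split the argument into an existence part and an asymptotic part. For existence, I would first argue that any candidate equilibrium must have the \emph{threshold structure} claimed: if a creator with cost $c_i$ strictly prefers creating over $\bot$ given the rest of the profile, then so does every more efficient creator $j$ with $c_j \le c_i$. This is a single-crossing / monotone-comparative-statics observation that follows from the fact that the deviation gain from switching to $\bot$ is increasing in one's own cost (holding $s_k$ and $n^\bot$ fixed), combined with Theorem \ref{lm:order_c}(1), which tells us that more efficient creators produce weakly more. So it suffices to search over the $n+1$ profiles indexed by $m \in \{0, 1, \dots, n\}$, where the $m$ least efficient creators play $\bot$ and the top $n-m$ play a PNE of the induced $1$-D \emph{exclusive} subgame among themselves (which exists and is unique by Theorem \ref{thm:monotone_G}, with the $n^\bot$-scaled denominator). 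Define $m$ to be the smallest index such that, in the profile where creators $n-m+1,\dots,n$ use $\bot$ and the rest play the sub-PNE, no $\bot$-player wants to deviate to creating and no creator wants to deviate to $\bot$. The condition $\tilde\beta + \tilde\gamma \ge 1$, equivalently $\beta \le \gamma$ (since $\beta = \tilde\beta - \tilde\gamma$ and $\gamma = 1 - \tilde\gamma$, so $\beta - \gamma = \tilde\beta + \tilde\gamma - 2 + 2\tilde\gamma - \dots$; more precisely $\gamma - \beta = 1 - \tilde\beta \ge 0$ always, so the real content of the hypothesis is $\tilde\beta+\tilde\gamma\ge 1$ which I should restate carefully as $1 - \tilde\beta \le \tilde\gamma$, i.e. $\gamma - \beta \le \tilde\gamma$), is what guarantees the two deviation-monotonicities are consistent so that such a threshold $m$ exists rather than cycling — this is exactly the mechanism that fails in the non-separable construction of Theorem \ref{thm:non_exist_PNE}. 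The crux is to show the ``no $\bot$-player wants to create'' and ``no creator wants to switch to $\bot$'' conditions are \emph{nested} in $m$: as $m$ increases the incentive to be a creator grows (less crowding on the human side, but also $n^\bot$ grows which dilutes the $\bot$ payoff), and one shows these cross exactly once.

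For the two within-group deviation checks I would use Proposition \ref{lm:cost_bound} (the utility–cost balance) applied to the human sub-PNE: it pins down each active creator's traffic gain to within a factor $2$ of $\rho c_i (x_i^*)^\rho$, and pins $x_i^*$ itself via the first-order condition, so the payoff of the marginal creator and the payoff $\frac{\alpha (s_k)^\beta \mu}{n^\bot \alpha (s_k)^\beta + (s_k)^\gamma}$ of a $\bot$-player become explicit functions of $s^*$, $n^\bot$, and $c_{n-m}$ that can be compared directly.

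For the asymptotic lower bound on $m/n$, the strategy is: suppose for contradiction that at the selected PNE the number of active creators $n - m$ is ``large'' — specifically larger than the right-hand-side-implied bound. The active creators together with GenAI form exactly an instance of \basegame{} with $n-m$ players and an effective GenAI efficiency $n^\bot \alpha = m\alpha$ (this is why $\alpha$ appears divided in \eqref{eq:86}). Apply Theorem \ref{thm:asym_s} (the macro-level bound) to this sub-instance: it gives $(s^*)^{\gamma+\rho-1} \asymp \mu \|\c_{[1:n-m]}^{-1}\|_{1/(\rho-1)}$, and in the worst case $\|\c_{[1:n-m]}^{-1}\|_{1/(\rho-1)} \le (n-m) c_1^{-\rho/(\rho-1)\cdot(\rho-1)}$-type bound $\lesssim (n-m)^{\rho-1} c_1^{-1}$. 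Then the PNE condition that the \emph{least efficient active creator}, index $n-m$, does not want to switch to $\bot$ reads (using Proposition \ref{lm:cost_bound} again and the $\bot$-payoff formula) as an inequality relating $c_{n-m} \ge c_1$, $s^*$, $m$, $\alpha$, $\mu$; substituting the macro estimate of $s^*$ and solving for when this can hold forces $n - m$ to be small, namely at most $C \mu^{(\gamma-\beta)/(\gamma+\rho-1)} / (\alpha n^{-(\gamma-\beta)(\rho-1)/(\gamma+\rho-1)})$, which rearranges to \eqref{eq:86}. The bookkeeping here — tracking how $s^*$ scales through Theorem \ref{thm:asym_s}, substituting into the marginal-creator indifference inequality, and isolating $n-m$ with the exponent $(\gamma-\beta)/(\gamma+\rho-1)$ — is routine once set up, so I would not grind it out in the sketch.

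The main obstacle I anticipate is the existence half, specifically establishing that the threshold profile I single out really is a PNE, i.e. proving the single-crossing in $m$ of the two opposing deviation incentives. The subtlety is that increasing $m$ has two competing effects on a would-be $\bot$-player's payoff — fewer human creators means a larger data pool $s^*$ and hence a more capable GenAI (good for $\bot$), but a larger $n^\bot$ dilutes the per-$\bot$-creator share (bad for $\bot$) — and one must show the hypothesis $\tilde\beta+\tilde\gamma \ge 1$ makes the net effect monotone in the right direction so that no cycle of best responses across the threshold can occur. I would handle this by writing the $\bot$-player's payoff as $\frac{\mu}{n^\bot + (s^*)^{\gamma-\beta}/\alpha}$ and the marginal creator's payoff via Proposition \ref{lm:cost_bound}, then showing that along the sequence of threshold profiles $m = 0, 1, 2, \dots$ the sign of (creator payoff $-$ $\bot$ payoff) for the boundary creator changes at most once, using that $(s^*)^{\gamma-\beta}$ is increasing in $m$ (more $\bot$'s, smaller human pool — wait, this needs care) together with monotonicity of $s^*$ in the number of active creators from Theorem \ref{lm:order_c}(3). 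Making the direction of these monotonicities line up with the hypothesis is where the real work lies.
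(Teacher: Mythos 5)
Your asymptotic bound argument is essentially the paper's: take the no-deviation condition of the marginal creator $n-m$, lower-bound her cost via Proposition \ref{lm:cost_bound}, upper-bound $x_{n-m}$ by $s/(n-m)$ via Theorem \ref{lm:order_c}, deduce $n-m \lesssim \alpha^{-1} s^{\gamma-\beta}$, and then substitute the macro estimate $s^{\gamma+\rho-1} \asymp \mu\|\c^{-1}\|_{1/(\rho-1)} \lesssim \mu\, n^{\rho-1} c_1^{-1}$ from Theorem \ref{thm:asym_s}. That half is fine.

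The existence half has a genuine gap. You reduce to the $n+1$ threshold profiles and then assert that the boundary creator's deviation incentive is \emph{single-crossing} in $m$, but you never establish this, and the monotonicity you would lean on points the wrong way: writing the $\bot$-payoff as $\mu/\bigl(n^\bot + (s^*)^{\gamma-\beta}/\alpha\bigr)$, increasing $m$ raises $n^\bot$ but \emph{shrinks} $s^*$ (fewer humans, Theorem \ref{lm:order_c}(3)) and hence shrinks $(s^*)^{\gamma-\beta}$ since $\gamma\geq\beta$ always; the two effects on the denominator oppose each other, so the $\bot$-payoff is not monotone in $m$ in any obvious direction, and single-crossing does not follow. (You flag this yourself with ``wait, this needs care,'' but the care is precisely the missing content.) The paper avoids needing single-crossing entirely: it runs an induction on $k$ showing that \emph{if} the threshold profile with $k$ GenAI players is not a PNE, then specifically the highest-cost remaining human profitably switches to $\bot$ (a within-profile comparison across players, using $x_j\geq x_n$, $u_j\geq u_n$, and the monotonicity of $f(s)=1/(s^{2\gamma-\beta-1}+\alpha s^{\gamma-1})$, which is where $\tilde\beta+\tilde\gamma\geq 1$ enters), and separately that no $\bot$-player ever wants to revert --- proven not by comparing payoffs across $m$ directly but by contradiction against the induction hypothesis at stage $k-1$, using $s'<s$ from Theorem \ref{lm:order_c}(3), Lemma \ref{lm:decreasing_f}, and the cost ordering $c_{n-k+1}\leq c_j$. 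Since this best-response path only increases $m$ and cannot reach $m=n$ (the last human would get zero from $\bot$), it must halt at a PNE. You would need either to supply a proof of your single-crossing claim (which the paper does not establish and which appears harder than the theorem itself) or to restructure the existence argument along these inductive lines.
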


Theorem \ref{thm:PNE_extend} delivers several important messages. First, it restores the existence of pure Nash equilibria in \ingamed{} for the 1-D competition case and postulates that those with the larger costs (i.e., less efficiency in human content creation) would switch to GenAI.  Although we are not able to  characterize all the PNEs of \ingamed{} rigorously, we will show in our experiments that multiple PNEs can exist but they all seem to share a similar property as   we identified theoretically in Theorem \ref{thm:PNE_extend}: \emph{creators with higher costs are more likely to resort to GenAI technology at equilibria.} 

Second, Theorem \ref{thm:PNE_extend} offers predictions about the conditions under which GenAI players may dominate the content market. It is easy to verify that the RHS of \eqref{eq:86} is increasing w.r.t. $\alpha$, $\beta$ and $n$. Therefore, when GenAI becomes increasingly powerful or the number of creators is very large, more creators will switch from the traditional human content creation to simply adopting the GenAI content creation. In addition, Ineq. \eqref{eq:86}  shows how the proportion of GenAI creators is affected by the ratio between the size $(n)$ and capacity $(\mu)$ of the content market: when $n\rightarrow +\infty$ or the growth of $n^{1-\frac{(\gamma-\beta)(\rho-1)}{\gamma+\rho-1}}$ dominates $\mu^\frac{\gamma-\beta}{\gamma+\rho-1}$, the RHS of Ineq. \eqref{eq:86} approaches $1$. This observation suggests that \emph{when the growth of total user traffic is diluted by an even more rapidly growing number of content creators, GenAI may become  the better choice for almost every creator. In this case, only the very few top creators with the best efficiency will still generate authentic content.}

\section{Experiments}\label{sec:exp}

Since \exgame{} always has a unique PNE and multi-agent mirror descent \citep{bravo2018bandit} provably achieves such a PNE, we can empirically analyze its properties which is otherwise difficult to do theoretically. In our experiments, we instantiate concrete games and find their PNEs using simulations to observe how the rise of GenAI affects human content creators. Three types of games are considered: the exclusive competition \basegame{}, the inclusive competition \ingamed{}, and exclusive competition under non-separable cost \exgame{}.
\begin{figure}[t]

\includegraphics[width=0.49\columnwidth]{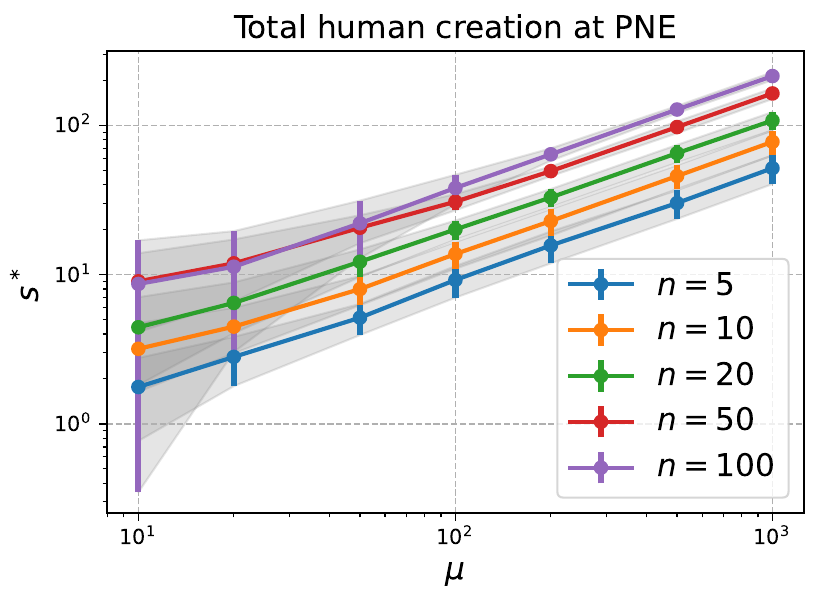}
\includegraphics[width=0.485\columnwidth]{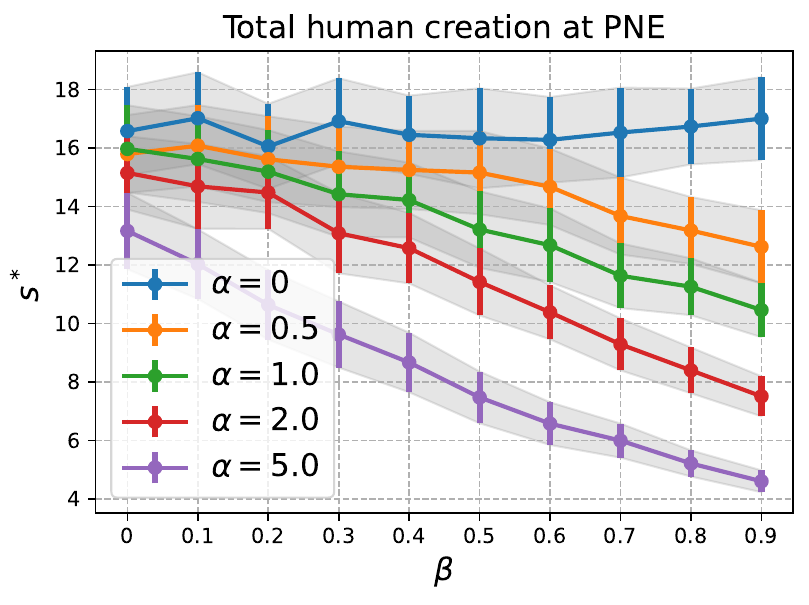}

\caption{The total body of content creation $s^*$ at the PNE as a function of $\mu,n$ (Left) and $\alpha,\beta$ (Right).}
\label{fig:beta_s}

\end{figure}

\noindent\textbf{Simulation Environment.} The cost function for \ingamed{} and \basegame{} is set to $c_i(x)=c_ix^{\rho}$ (for non-separable cost function, we use $c_i(\x)=c_i\|\x\|_1^{\rho}$).  For \basegame{} and \ingamed{}, the default parameters are set to $n=10,\alpha=1.0,\beta=0.5,\gamma=0.9,\rho=1.5,\mu=100$, and $\{c_i\}_{i=1}^n$ are randomly sampled from uniform distribution $\U[1, 10]$. For \exgame{}, the default $K=10$ and $\{\alpha_k,\beta_k,\gamma_k,\rho_k\}^K_{k=1}$ are set to the same values as $\alpha,\beta,\gamma,\rho$. More results with heterogeneous parameters are presented in the appendix. The cost $\{c_i\}_{i=1}^n$ are randomly sampled from $\U[1, 10]$. In the subsequent experiments, when we investigate the sensitivity of the PNE on a certain parameter, we use the specified values to replace the default ones. Otherwise, we use default parameters to construct independent game instances and aggregate statistics from the resulting stochastic environments. The error bars in all results are obtained from 10 independent game instances. The details of the PNE solvers are given in Appendix \ref{app:pga_solver}. In the following, we study an array of interesting and important questions for understanding creators' collective behaviors under the influence of GenAI, and connect our theoretical results with the empirical findings.

\vspace{2mm}
\noindent
\textbf{Q1: How will the market size and GenAI's capability affect the volume of created content? }

\begin{figure}[t]
\includegraphics[width=0.49\columnwidth]{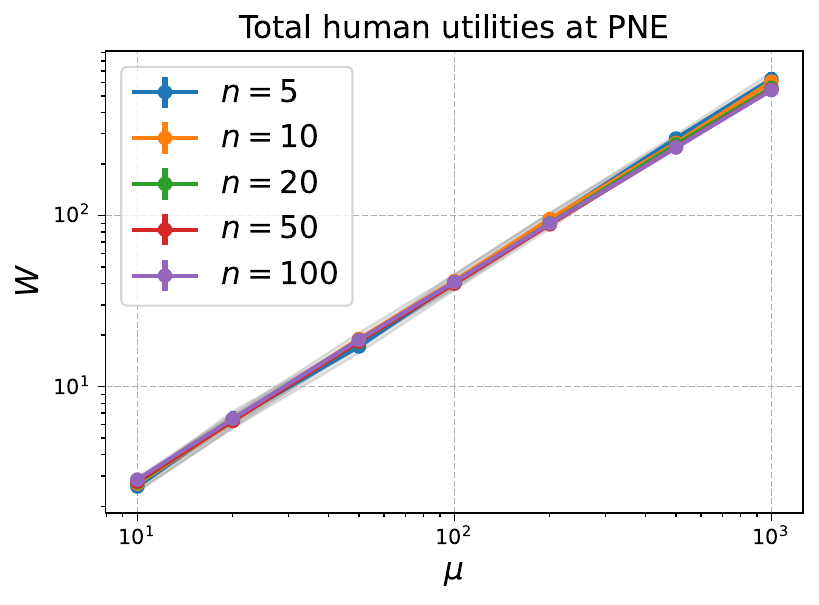}
\includegraphics[width=0.482\columnwidth]{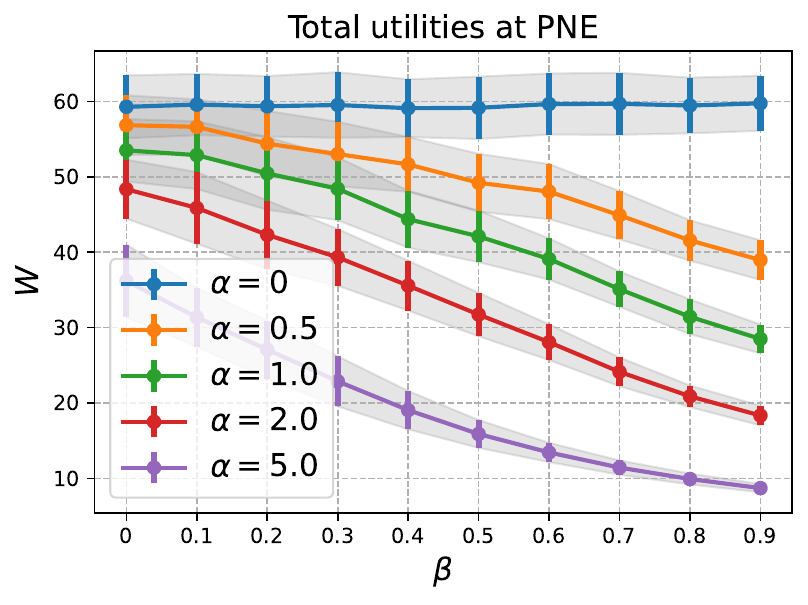}

\caption{The total human utilities $W=\sum_i u_i$ at the PNE as a function of $\mu,n$ (Left) and $\alpha,\beta$ (Right). Default $(\alpha,\beta)=(1.0, 0.5)$.}
\label{fig:beta_W}

\includegraphics[width=0.49\columnwidth]{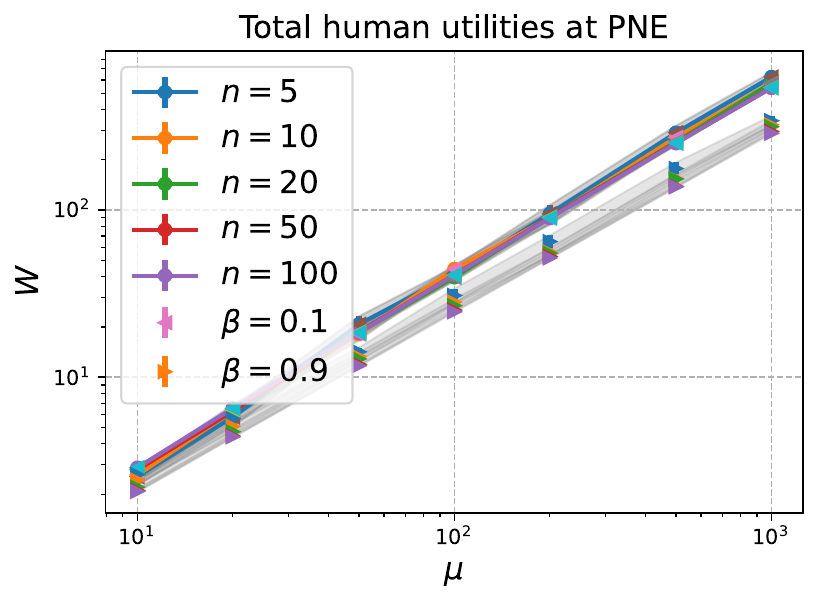}
\includegraphics[width=0.49\columnwidth]{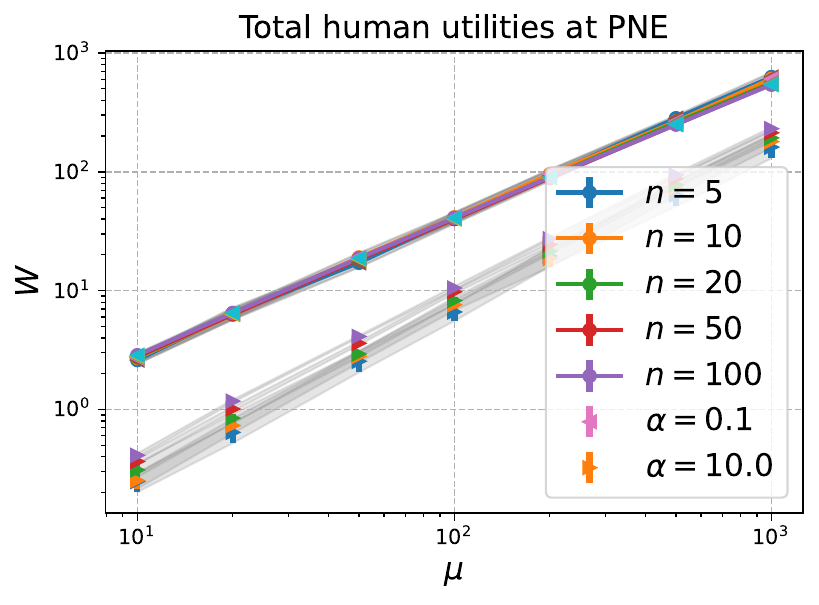}

\caption{The total human utilities $W=\sum_i u_i$ at the PNE as a function of $\mu,n$ (Left) and $\alpha,\beta$ (Right). Default $(\alpha,\beta)=(1.0, 0.5)$. The left panel shows results when $\beta \in \{0.1, 0.5, 0.9\}$ and the right panel shows results when when $\alpha \in \{0.1, 1.0, 10.0\}$. }
\label{fig:mu_W}
\end{figure}

Figure \ref{fig:beta_s} illustrates how the total human content generation at PNE changes w.r.t the market size $(\mu,n)$ and GenAI's ability $(\alpha,\beta)$. The log-log plot in the left panel indicates that the total content creation volume $s^*$ grows polynomially w.r.t $\mu$ under different $n$, when the engagement level of GenAI in the content market is fixed, which validates Theorem \ref{thm:asym_s}. However, the right panel shows that if the GenAI's ability increases with a larger $\alpha$ or $\beta$, the total effort from human players shrinks. 

If we define the social welfare as the total creator utilities $W=\sum_i u_i$, Figure \ref{fig:beta_W} and \ref{fig:mu_W} show how the social welfare (defined as the sum of creators' utilities) changes with respect to parameter $\mu,n,\alpha, \beta$. Similar to the result illustrated in Figure \ref{fig:beta_s}, the social welfare displays the same trend: it decreases as $\alpha,\beta$ increase and increases when $\mu$ increases. This suggests a more extensive usage of GenAI not only adversely affects existing human creator's productivity but also their utilities. 
Notably, the left panel of Figure \ref{fig:beta_W} shows the welfare does not rely on $n$ significantly when fixing other parameters, which means when the competition environment is set (including the strength of GenAI power/usage and the total user traffic), all human creators as a whole can neither benefit nor suffer from the magnitude of populations.

\noindent
\textbf{Q2: Contributors vs free-riders: which human creators will switch to the (free) GenAI creation?}

We simulate the PNEs of \ingamed{} to answer this question. First, we fix $n=100,\mu=1000$ and investigate how the ability of GenAI affects the number of creators who switch at the equilibrium. To solve the PNE guaranteed by Theorem \ref{thm:PNE_extend}, we start with an equilibrium where no one uses GenAI and then take advantage of its ordered cost structure by iteratively determining the set of GenAI creators. Specifically, we rank all creators in terms of their cost from high to low, and examine whether the human creator currently with the highest cost wants to switch to $\bot$. If so, we update the remaining human creators' strategies by solving a resulting \basegame{} instance and start the next round by examining the next human creator, until we find a state where no one wants to update her strategy. This PNE solver is summarized in Algorithm \ref{alg:pne1} in Appendix \ref{app:pga_solver}. From the results shown in Figure \ref{fig:beta_num_dropout_pne1}, at the PNE the proportion of GenAI creators grows as $\alpha,\beta,n$ increases and declines when $\mu$ grows, which aligns with the prediction in Theorem \ref{thm:PNE_extend}.

\begin{figure}[t]

\includegraphics[width=0.49\columnwidth]{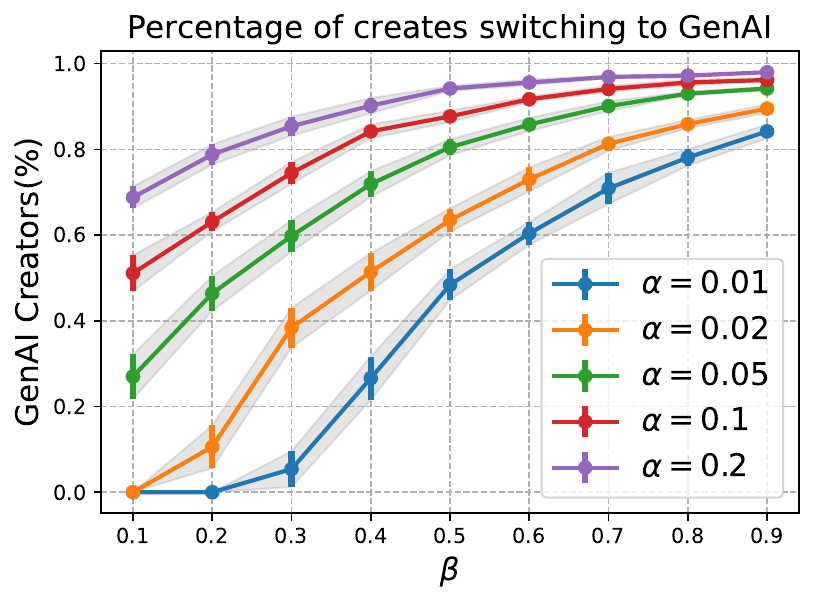}
\includegraphics[width=0.496\columnwidth]{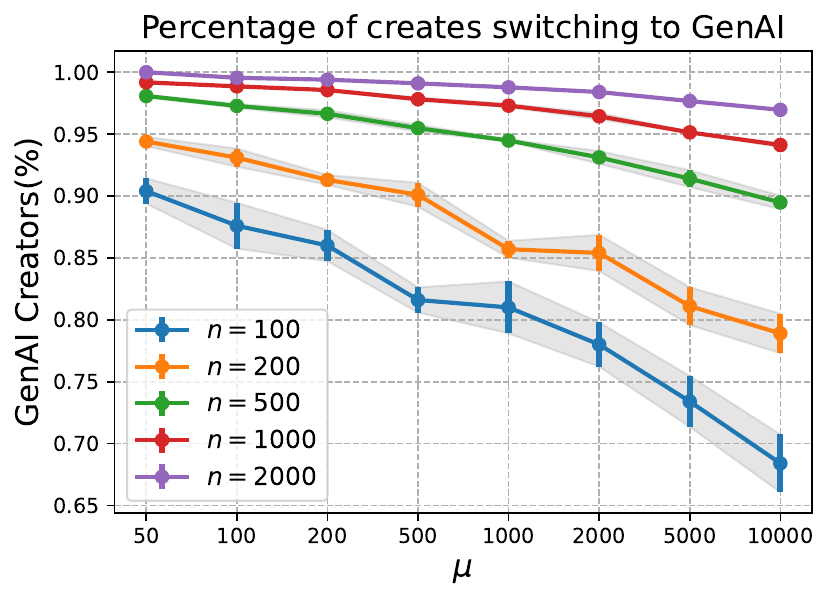}

\caption{The percentage of GenAI creators at PNE of \ingamed{} under different $\alpha,\beta$ or $\mu,n$.}
\label{fig:beta_num_dropout_pne1}

\end{figure}

Since the PNE of \ingamed{} is not unique, it is also interesting to examine whether other PNEs of \ingamed{} share similar characteristics as the one predicted by Theorem \ref{thm:PNE_extend}. To obtain an arbitrary PNE of \ingamed{}, we adapt the PNE solver in Algorithm \ref{alg:pne1}: at each iteration we pick a random creator to examine her type instead of always choosing the one with the highest cost, and apply an equilibrium checker to verify whether we arrive at a PNE. The detailed algorithm is summarized in Algorithm \ref{alg:pne_checker} and \ref{alg:pne2} in Appendix \ref{app:pga_solver}. We randomly generate $100$ \ingamed{} instances with their cost vectors $\c$ independently sampled from $\U[1, 10]$. For each instance, We grouped 100 creators into ten categories based on their costs and applied the revised algorithm to determine an arbitrary PNE, subsequently analyzing the distribution of GenAI creators across these groups. 
Our findings, as illustrated in Figure \ref{fig:beta_num_dropout_pne2}, confirm our anticipated trend that creators with higher costs are more inclined to adopt GenAI. This indicates a cost-dependent threshold influencing creators' choice of strategy.
However, the choice becomes less predictable in the middle of cost range, 
which varies according to specific conditions and the stochastic nature of reaching equilibrium.

\begin{figure}[t]
\includegraphics[width=0.49\columnwidth]{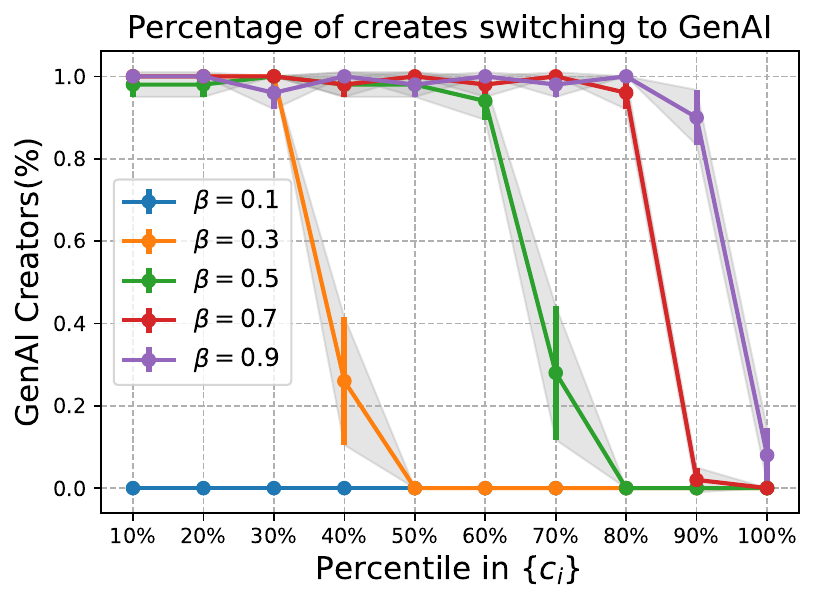}
\includegraphics[width=0.49\columnwidth]{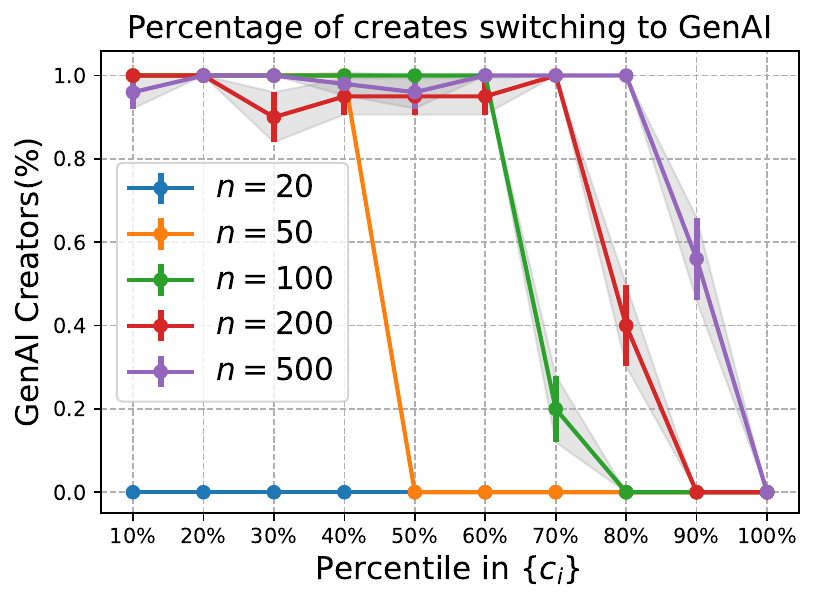}

\caption{How likely a creator with cost ranked at different percentiles tend to switch to GenAI at an arbitrary PNE of \ingamed{}.}
\label{fig:beta_num_dropout_pne2}

\end{figure}

\vspace{2mm}

\noindent
\textbf{Q3: Would the rise of GenAI-based creators discourage the productivity and utility of human creators?}

We compare the total content creation volume and the gained utility by both human and GenAI creators at the PNE of \ingamed{} obtained by Algorithm \ref{alg:pne1}. The left panel of Figure \ref{fig:beta_u_total_avg} illustrates a clear trend: with an increase in GenAI's capabilities (larger $\beta$), the total utility and content generation by human creators decrease, while those by GenAI creators see an uptick. Intriguingly, the average utility and content output per human creator actually improve and may even exceed those of GenAI creators as $\beta$ approaches $1$. This is because a more potent GenAI pushes higher-cost human creators out of the market, leaving behind a select group of efficient human creators who thrive due to reduced competition. This scenario suggests a likely future where the widespread adoption of GenAI substitutes less efficient content creators, capturing a significant share of viewer engagement. Nevertheless, a small, professional group of human creators will not only persist but flourish, producing more and higher-quality content. This scenario highlights the dual impact of GenAI: it eliminates the need for humans to perform monotonous tasks and fosters a more vibrant and rewarding ecosystem for the remaining, highly skilled creators.

\begin{figure}[t]
\includegraphics[width=0.49\columnwidth]{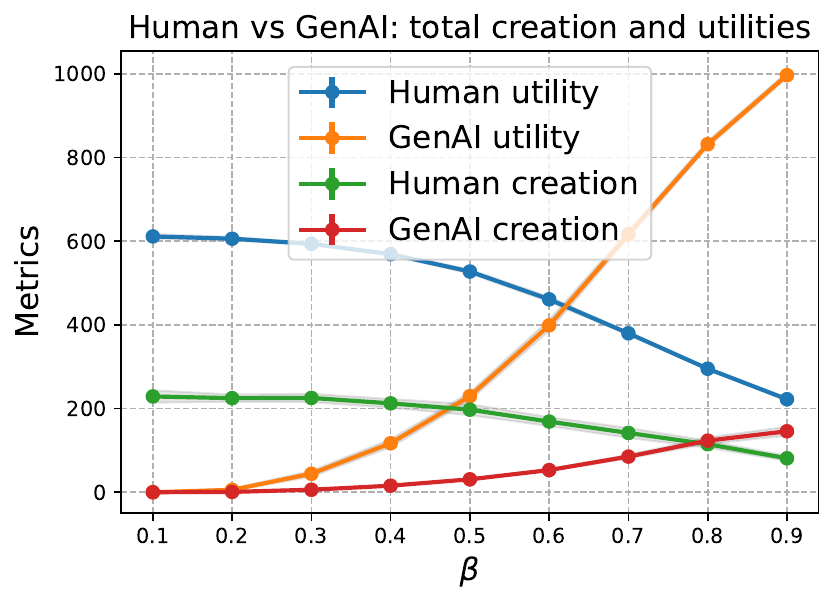}
\includegraphics[width=0.49\columnwidth]{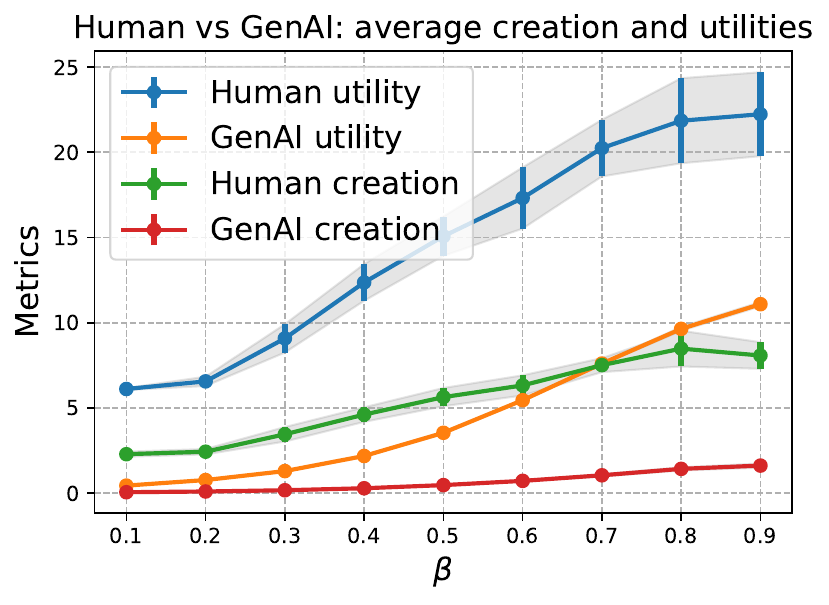}

\caption{Left/Right: Total/average utility and content creation of human creators vs GenAI creators.}
\label{fig:beta_u_total_avg}

\end{figure}

 \begin{figure}[t]
\includegraphics[width=0.49\columnwidth]{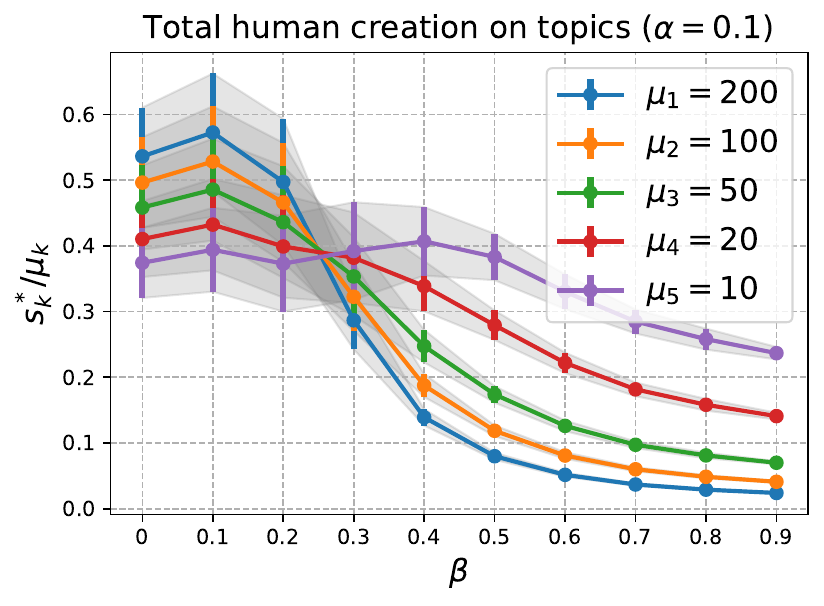}
\includegraphics[width=0.49\columnwidth]{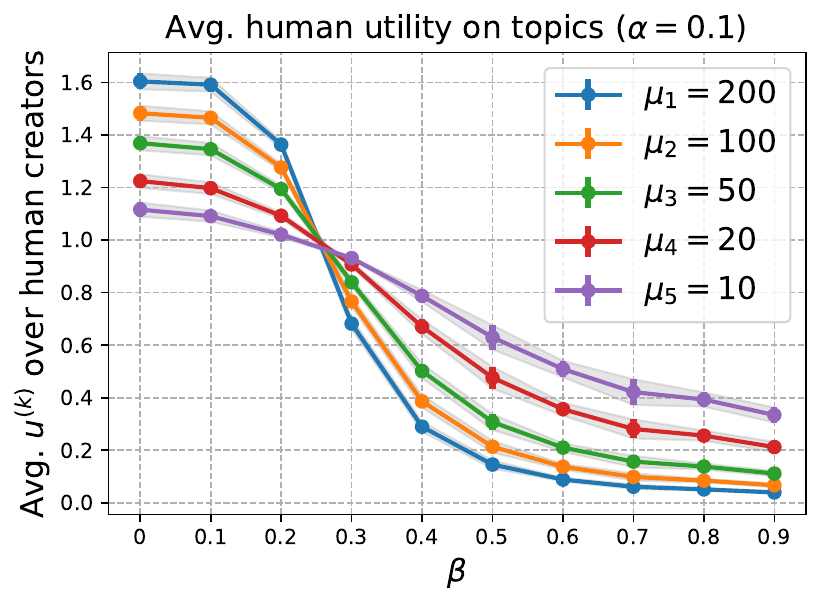}

\caption{$\alpha=0.1$. Left: the occupation ratio on each topic. Right:  average per-topic gain. Cost uniformly sampled from $\U[0.1,1.0]$.}
\label{fig:beta_s_non_3}

\end{figure}

\vspace{2mm} 
\noindent
\textbf{Q4: How would creators allocate efforts across different topics under non-separable cost functions?}

Now we investigate how GenAIs influence creators to strategically balance their effort on different domains. We use \exgame{} with $K=5$ and Algorithm \ref{alg:mmd} to simulate its PNE given $\mu=(200,100,50,20,10)$ with different $\alpha$ and $\beta$. Here we can view a topic with larger $\mu_k$ as a popular domain with higher total traffic, while topic with smaller $\mu_k$ represents a niche domain. To measure each human creator's inclination to different topics, we introduce the concept of ``occupation ratio'' to quantify human creators' engagement in a topic relative to its popularity, and we also analyze the creator's average utility gained from each topic.

Our findings illustrated in Figure \ref{fig:beta_s_non_3} reveal a notable shift in human creators' strategies as GenAI becomes more powerful. Initially, humans tend to focus more on popular domains, where the potential for traffic—and thus utility—is higher. However, as GenAI's strength increases, this preference shifts. Beyond a certain $\beta$ value, human creators pivot towards niche topics, resulting in greater relative gains. This strategic shift is attributed to GenAI's heightened competitiveness in popular domains (due to the increased volume of human created content), prompting humans to seek better opportunities in less contested areas.

\begin{figure}[t]
\includegraphics[width=0.49\columnwidth]{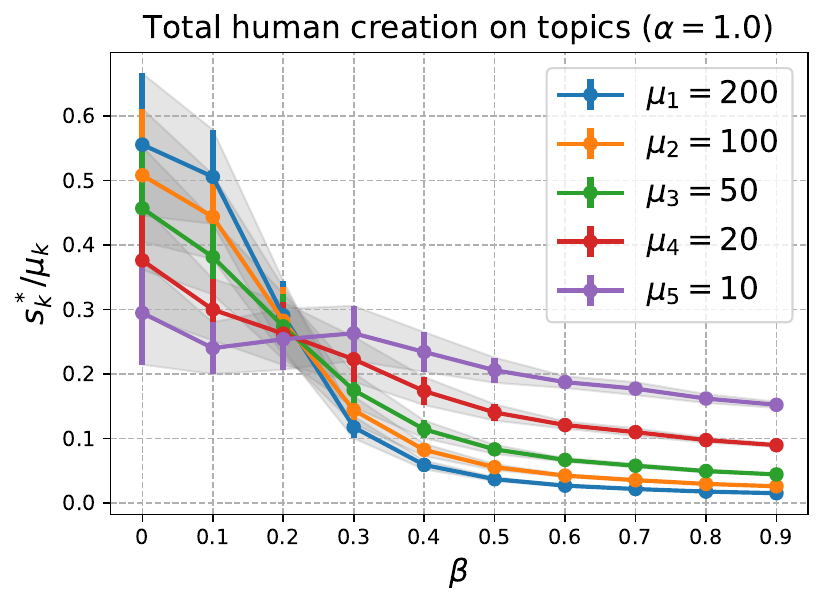}
\includegraphics[width=0.49\columnwidth]{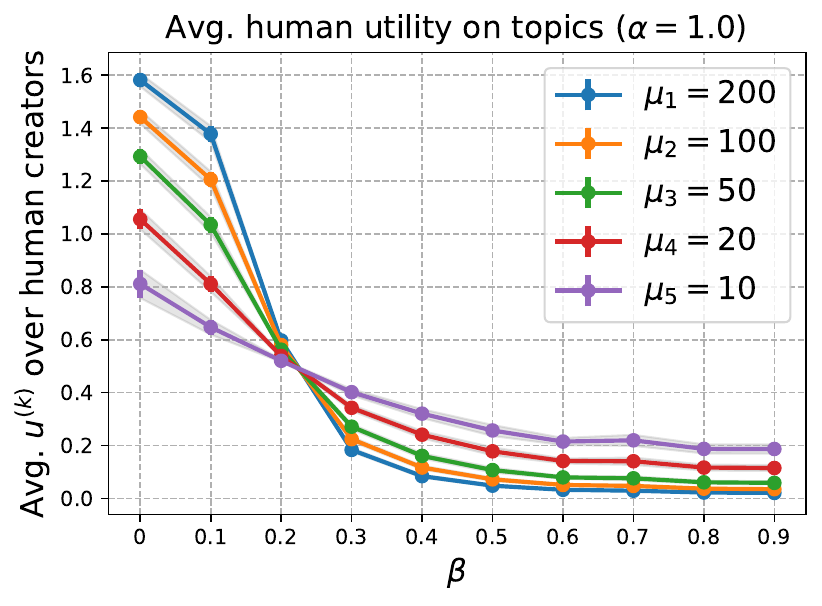}
\caption{$\alpha=1.0$. Left: the occupation ratio on each topic. Right: the average per-topic gain. Cost uniformly sampled from $\U[0.1,1.0]$.}
\label{fig:beta_s_non_4}

\includegraphics[width=0.49\columnwidth]{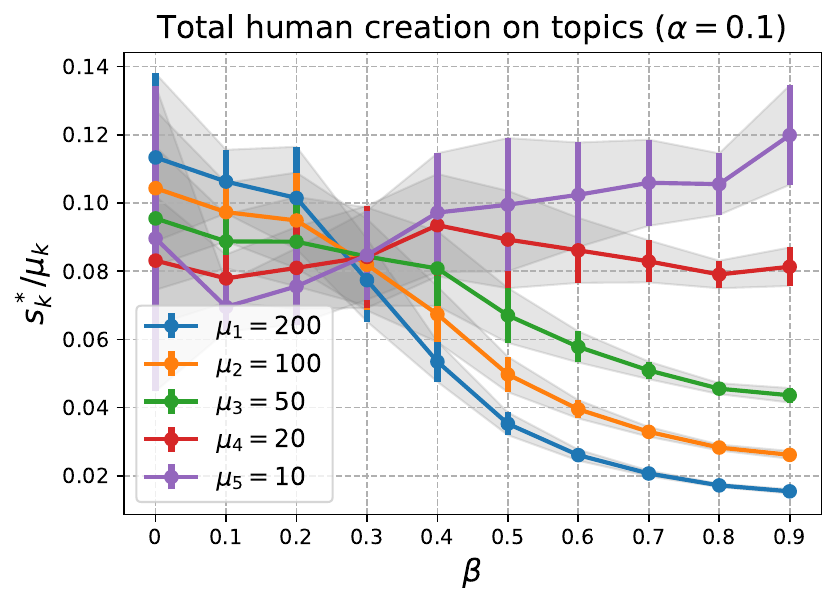}
\includegraphics[width=0.485\columnwidth]{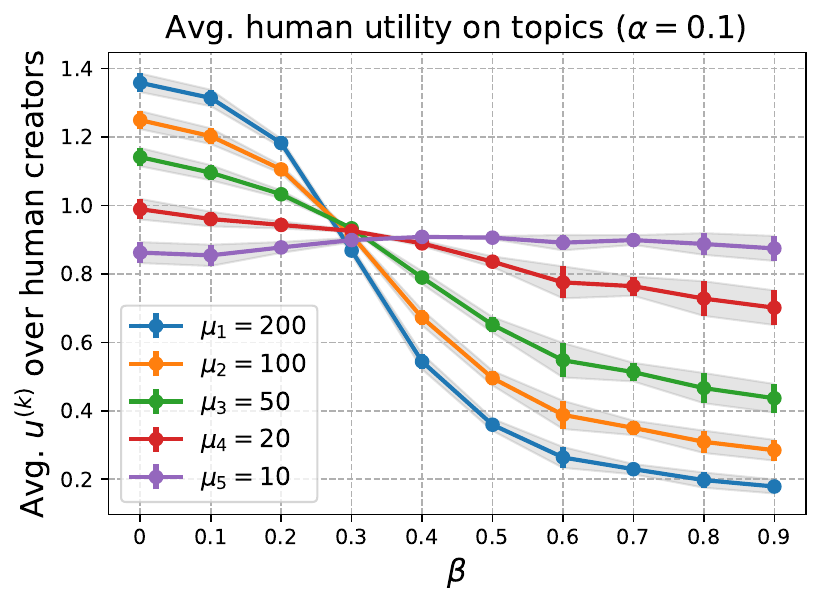}
\caption{$\alpha=0.1$. Left: the occupation ratio on each topic. Right: the average per-topic gain. Cost uniformly sampled from $\U[1.0,10.0]$.}
\label{fig:beta_s_non_2}
\end{figure}

\begin{figure}[h]
\includegraphics[width=0.49\columnwidth]{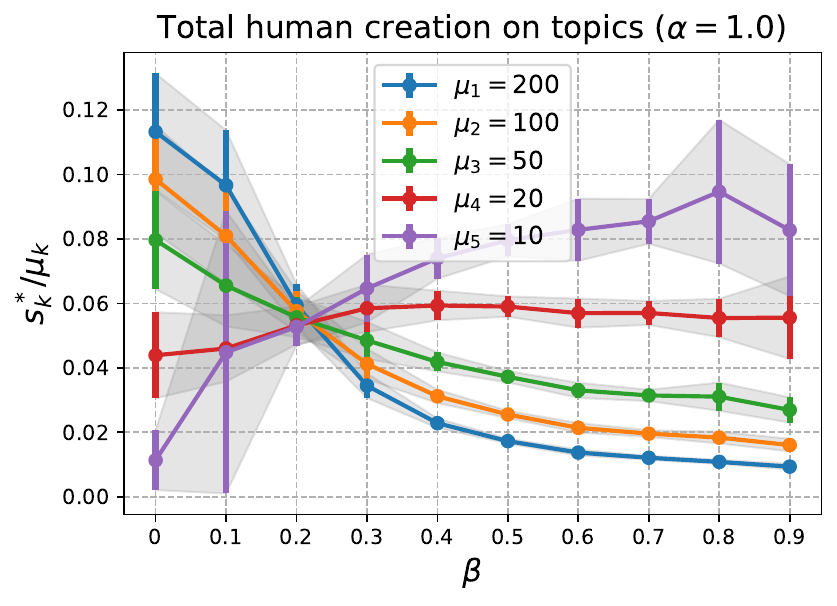}
\includegraphics[width=0.49\columnwidth]{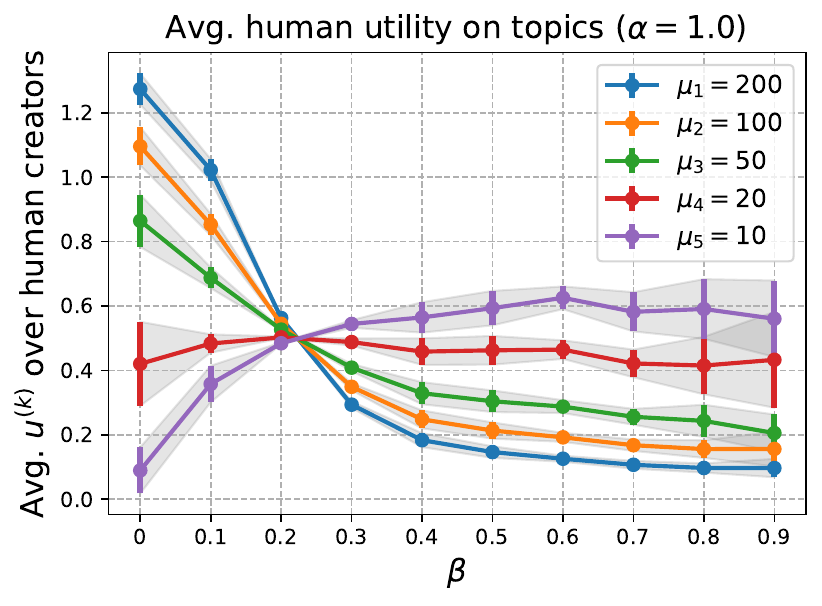}
\caption{$\alpha=1.0$. Left: the occupation ratio on each topic. Right: the average per-topic gain. Cost uniformly sampled from $\U[1.0,10.0]$.}
\label{fig:beta_s_non_1}
\end{figure}

This trend suggests a future where GenAI dominates in easily accessible content areas, encouraging human creators to specialize in niche topics that GenAI is less equipped to handle. Such a shift not only alleviates the competitive pressure on human creators but also potentially enhances their average utility by focusing on areas where they can offer unique value. The detailed results across various settings of $\alpha$ and creator costs further support this observation, as reported in Figure \ref{fig:beta_s_non_4}, \ref{fig:beta_s_non_2}, \ref{fig:beta_s_non_1}. We can see a threshold effect for $\beta$ still exists: beyond a certain point, human creators' preference over topics shifts from popularity to niche. Comparing Figure \ref{fig:beta_s_non_3} with Figure \ref{fig:beta_s_non_4}, \ref{fig:beta_s_non_2}, \ref{fig:beta_s_non_1},  we can see the critical threshold of $\beta$ increases when $\alpha$ becomes smaller and costs becomes larger. This outcome underscores a positive dynamic: GenAI's proliferation in general content domains may lead to a more diversified and specialized content ecosystem, where human creativity is directed towards challenges that GenAI cannot easily replicate.

\section{Conclusion}
In this study, we introduce a competition model that captures the tension between human and GenAI creators in content creation. Our model examines how GenAI influences the equilibrium, including scenarios where creators can choose whether to incorporate GenAI tools. Our analysis reveals that despite the disruptive potential of GenAI, a balanced and desirable equilibrium in the content creation market could be achieved, and how the properties of the market (e.g., its size) and GenAI technology (e.g., its learning capacity) affect the equilibrium. To our best knowledge, this work is the first attempt in developing a game-theoretic framework for understanding the dynamics between human and GenAI creators, offering useful insights into the future of the online content market and shedding light on the evolving landscape of human-AI competition in content creation. The results obtained by our study also open up new avenues of research exploration, for example how to design regulatory interventions to shape the equilibrium (e.g., protect certain type of human creators) or ensure fairness.

\vspace{4mm}
\noindent \textbf{Acknowledgment. } This project is   supported in part by the AI2050 program at Schmidt Sciences (Grant G-24-66104), NSF Award IIS-2213700, IIS-2128019, CCF-2303372,  Army Research Office Award W911NF-23-1-0030, and Office of Naval Research Award N00014-23-1-2802.


\vskip 0.2in

\bibliography{main}

\begin{thebibliography}{55}
\providecommand{\natexlab}[1]{#1}
\providecommand{\url}[1]{\texttt{#1}}
\expandafter\ifx\csname urlstyle\endcsname\relax
  \providecommand{\doi}[1]{doi: #1}\else
  \providecommand{\doi}{doi: \begingroup \urlstyle{rm}\Url}\fi

\bibitem[Acharya et~al.(2024)Acharya, Vangala, Wang, and Ziani]{acharya2024producers}
Krishna Acharya, Varun Vangala, Jingyan Wang, and Juba Ziani.
\newblock Producers equilibria and dynamics in engagement-driven recommender systems.
\newblock \emph{arXiv preprint arXiv:2401.16641}, 2024.

\bibitem[Agarwal and Brown(2022)]{agarwal2022diversified}
Arpit Agarwal and William Brown.
\newblock Diversified recommendations for agents with adaptive preferences.
\newblock \emph{arXiv preprint arXiv:2210.07773}, 2022.

\bibitem[Arnosti and Weinberg(2022)]{arnosti2022bitcoin}
Nick Arnosti and S~Matthew Weinberg.
\newblock Bitcoin: A natural oligopoly.
\newblock \emph{Management Science}, 68\penalty0 (7):\penalty0 4755--4771, 2022.

\bibitem[Ben-Porat and Tennenholtz(2017)]{ben2017shapley}
Omer Ben-Porat and Moshe Tennenholtz.
\newblock Shapley facility location games.
\newblock In \emph{International Conference on Web and Internet Economics}, pages 58--73. Springer, 2017.

\bibitem[Ben-Porat and Tennenholtz(2018)]{ben2018game}
Omer Ben-Porat and Moshe Tennenholtz.
\newblock A game-theoretic approach to recommendation systems with strategic content providers.
\newblock \emph{Advances in Neural Information Processing Systems}, 31, 2018.

\bibitem[Ben-Porat et~al.(2020)Ben-Porat, Rosenberg, and Tennenholtz]{ben2020content}
Omer Ben-Porat, Itay Rosenberg, and Moshe Tennenholtz.
\newblock Content provider dynamics and coordination in recommendation ecosystems.
\newblock \emph{Advances in Neural Information Processing Systems}, 33:\penalty0 18931--18941, 2020.

\bibitem[Bertrand et~al.(2023)Bertrand, Bose, Duplessis, Jiralerspong, and Gidel]{bertrand2023stability}
Quentin Bertrand, Avishek~Joey Bose, Alexandre Duplessis, Marco Jiralerspong, and Gauthier Gidel.
\newblock On the stability of iterative retraining of generative models on their own data.
\newblock \emph{arXiv preprint arXiv:2310.00429}, 2023.

\bibitem[Boutilier et~al.(2023)Boutilier, Mladenov, and Tennenholtz]{boutilier2023modeling}
Craig Boutilier, Martin Mladenov, and Guy Tennenholtz.
\newblock Modeling recommender ecosystems: Research challenges at the intersection of mechanism design, reinforcement learning and generative models.
\newblock \emph{arXiv preprint arXiv:2309.06375}, 2023.

\bibitem[Bravo et~al.(2018)Bravo, Leslie, and Mertikopoulos]{bravo2018bandit}
Mario Bravo, David Leslie, and Panayotis Mertikopoulos.
\newblock Bandit learning in concave n-person games.
\newblock \emph{Advances in Neural Information Processing Systems}, 31, 2018.

\bibitem[Briesch et~al.(2023)Briesch, Sobania, and Rothlauf]{briesch2023large}
Martin Briesch, Dominik Sobania, and Franz Rothlauf.
\newblock Large language models suffer from their own output: An analysis of the self-consuming training loop.
\newblock \emph{arXiv preprint arXiv:2311.16822}, 2023.

\bibitem[Brown et~al.(2020)Brown, Mann, Ryder, Subbiah, Kaplan, Dhariwal, Neelakantan, Shyam, Sastry, Askell, et~al.]{brown2020language}
Tom Brown, Benjamin Mann, Nick Ryder, Melanie Subbiah, Jared~D Kaplan, Prafulla Dhariwal, Arvind Neelakantan, Pranav Shyam, Girish Sastry, Amanda Askell, et~al.
\newblock Language models are few-shot learners.
\newblock \emph{Advances in neural information processing systems}, 33:\penalty0 1877--1901, 2020.

\bibitem[Butler et~al.(2014)Butler, Bateman, Gray, and Diamant]{butler2014attraction}
Brian~S Butler, Patrick~J Bateman, Peter~H Gray, and E~Ilana Diamant.
\newblock An attraction--selection--attrition theory of online community size and resilience.
\newblock \emph{Mis Quarterly}, 38\penalty0 (3):\penalty0 699--729, 2014.

\bibitem[Cai and Zheng(2023)]{cai2023doubly}
Yang Cai and Weiqiang Zheng.
\newblock Doubly optimal no-regret learning in monotone games.
\newblock \emph{arXiv preprint arXiv:2301.13120}, 2023.

\bibitem[Choi and Annio(2024)]{cnn2024}
Christy Choi and Francesca Annio.
\newblock The winner of a prestigious japanese literary award has confirmed ai helped write her book.
\newblock \url{https://www.cnn.com/2024/01/19/style/rie-kudan-akutagawa-prize-chatgpt/index.html}, January 2024.

\bibitem[Chowdhury and Sheremeta(2011)]{chowdhury2011generalized}
Subhasish~M Chowdhury and Roman~M Sheremeta.
\newblock A generalized tullock contest.
\newblock \emph{Public Choice}, 147:\penalty0 413--420, 2011.

\bibitem[Cornes and Hartley(2005)]{cornes2005asymmetric}
Richard Cornes and Roger Hartley.
\newblock Asymmetric contests with general technologies.
\newblock \emph{Economic theory}, 26:\penalty0 923--946, 2005.

\bibitem[Dean and Morgenstern(2022)]{dean2022preference}
Sarah Dean and Jamie Morgenstern.
\newblock Preference dynamics under personalized recommendations.
\newblock In \emph{Proceedings of the 23rd ACM Conference on Economics and Computation}, pages 795--816, 2022.

\bibitem[Dean et~al.(2024)Dean, Dong, Jagadeesan, and Leqi]{dean2024recommender}
Sarah Dean, Evan Dong, Meena Jagadeesan, and Liu Leqi.
\newblock Recommender systems as dynamical systems: Interactions with viewers and creators.
\newblock In \emph{Workshop on Recommendation Ecosystems: Modeling, Optimization and Incentive Design}, 2024.

\bibitem[Debreu(1952)]{debreu1952social}
Gerard Debreu.
\newblock A social equilibrium existence theorem.
\newblock \emph{Proceedings of the National Academy of Sciences}, 38\penalty0 (10):\penalty0 886--893, 1952.

\bibitem[Dechenaux et~al.(2015)Dechenaux, Kovenock, and Sheremeta]{dechenaux2015survey}
Emmanuel Dechenaux, Dan Kovenock, and Roman~M Sheremeta.
\newblock A survey of experimental research on contests, all-pay auctions and tournaments.
\newblock \emph{Experimental Economics}, 18:\penalty0 609--669, 2015.

\bibitem[Doshi and Hauser(2023)]{doshi2023generative}
Anil~R Doshi and Oliver Hauser.
\newblock Generative ai enhances individual creativity but reduces the collective diversity of novel content.
\newblock \emph{Available at SSRN}, 2023.

\bibitem[Epstein et~al.(2023)Epstein, Hertzmann, of~Human~Creativity, Akten, Farid, Fjeld, Frank, Groh, Herman, Leach, et~al.]{epstein2023art}
Ziv Epstein, Aaron Hertzmann, Investigators of~Human~Creativity, Memo Akten, Hany Farid, Jessica Fjeld, Morgan~R Frank, Matthew Groh, Laura Herman, Neil Leach, et~al.
\newblock Art and the science of generative ai.
\newblock \emph{Science}, 380\penalty0 (6650):\penalty0 1110--1111, 2023.

\bibitem[Even-Dar et~al.(2009)Even-Dar, Mansour, and Nadav]{even2009convergence}
Eyal Even-Dar, Yishay Mansour, and Uri Nadav.
\newblock On the convergence of regret minimization dynamics in concave games.
\newblock In \emph{Proceedings of the forty-first annual ACM symposium on Theory of computing}, pages 523--532, 2009.

\bibitem[Ewerhart(2015)]{ewerhart2015mixed}
Christian Ewerhart.
\newblock Mixed equilibria in tullock contests.
\newblock \emph{Economic Theory}, 60:\penalty0 59--71, 2015.

\bibitem[Ewerhart(2017)]{EWERHART2017168}
Christian Ewerhart.
\newblock The lottery contest is a best-response potential game.
\newblock \emph{Economics Letters}, 155:\penalty0 168--171, 2017.
\newblock ISSN 0165-1765.
\newblock \doi{https://doi.org/10.1016/j.econlet.2017.03.030}.
\newblock URL \url{https://www.sciencedirect.com/science/article/pii/S0165176517301325}.

\bibitem[Fan(1952)]{fan1952fixed}
Ky~Fan.
\newblock Fixed-point and minimax theorems in locally convex topological linear spaces.
\newblock \emph{Proceedings of the National Academy of Sciences}, 38\penalty0 (2):\penalty0 121--126, 1952.

\bibitem[Ghosh(2023)]{ghosh2023best}
Abheek Ghosh.
\newblock Best-response dynamics in tullock contests with convex costs.
\newblock \emph{arXiv preprint arXiv:2310.03528}, 2023.

\bibitem[Glicksberg(1952)]{glicksberg1952further}
Irving~L Glicksberg.
\newblock A further generalization of the kakutani fixed theorem, with application to nash equilibrium points.
\newblock \emph{Proceedings of the American Mathematical Society}, 3\penalty0 (1):\penalty0 170--174, 1952.

\bibitem[Ho et~al.(2020)Ho, Jain, and Abbeel]{ho2020denoising}
Jonathan Ho, Ajay Jain, and Pieter Abbeel.
\newblock Denoising diffusion probabilistic models.
\newblock \emph{Advances in neural information processing systems}, 33:\penalty0 6840--6851, 2020.

\bibitem[Hron et~al.(2022)Hron, Krauth, Jordan, Kilbertus, and Dean]{hron2022modeling}
Jiri Hron, Karl Krauth, Michael Jordan, Niki Kilbertus, and Sarah Dean.
\newblock Modeling content creator incentives on algorithm-curated platforms.
\newblock In \emph{The Eleventh International Conference on Learning Representations}, 2022.

\bibitem[Hu et~al.(2023)Hu, Jagadeesan, Jordan, and Steinhard]{hu2023incentivizing}
Xinyan Hu, Meena Jagadeesan, Michael~I Jordan, and Jacob Steinhard.
\newblock Incentivizing high-quality content in online recommender systems.
\newblock \emph{arXiv preprint arXiv:2306.07479}, 2023.

\bibitem[Immorlica et~al.(2024)Immorlica, Jagadeesan, and Lucier]{immorlica2024clickbait}
Nicole Immorlica, Meena Jagadeesan, and Brendan Lucier.
\newblock Clickbait vs. quality: How engagement-based optimization shapes the content landscape in online platforms.
\newblock In \emph{Proceedings of the ACM Web Conference 2024}, 2024.

\bibitem[Jagadeesan et~al.(2022)Jagadeesan, Garg, and Steinhardt]{jagadeesan2022supply}
Meena Jagadeesan, Nikhil Garg, and Jacob Steinhardt.
\newblock Supply-side equilibria in recommender systems.
\newblock \emph{arXiv preprint arXiv:2206.13489}, 2022.

\bibitem[Jagadeesan et~al.(2023)Jagadeesan, Garg, and Steinhardt]{jagadeesan2023supply}
Meena Jagadeesan, Nikhil Garg, and Jacob Steinhardt.
\newblock Supply-side equilibria in recommender systems.
\newblock \emph{Advances in Neural Information Processing Systems}, 2023.

\bibitem[Kaplan et~al.(2020)Kaplan, McCandlish, Henighan, Brown, Chess, Child, Gray, Radford, Wu, and Amodei]{kaplan2020scaling}
Jared Kaplan, Sam McCandlish, Tom Henighan, Tom~B Brown, Benjamin Chess, Rewon Child, Scott Gray, Alec Radford, Jeffrey Wu, and Dario Amodei.
\newblock Scaling laws for neural language models.
\newblock \emph{arXiv preprint arXiv:2001.08361}, 2020.

\bibitem[Leshno and Strack(2020)]{leshno2020bitcoin}
Jacob~D Leshno and Philipp Strack.
\newblock Bitcoin: An axiomatic approach and an impossibility theorem.
\newblock \emph{American Economic Review: Insights}, 2\penalty0 (3):\penalty0 269--286, 2020.

\bibitem[Mladenov et~al.(2020)Mladenov, Creager, Ben-Porat, Swersky, Zemel, and Boutilier]{mladenov2020optimizing}
Martin Mladenov, Elliot Creager, Omer Ben-Porat, Kevin Swersky, Richard Zemel, and Craig Boutilier.
\newblock Optimizing long-term social welfare in recommender systems: A constrained matching approach.
\newblock In \emph{International Conference on Machine Learning}, pages 6987--6998. PMLR, 2020.

\bibitem[Mueller(2003)]{mueller2003public}
Dennis~C Mueller.
\newblock \emph{Public choice III}.
\newblock Cambridge University Press, 2003.

\bibitem[Nash~Jr(1950)]{nash1950equilibrium}
John~F Nash~Jr.
\newblock Equilibrium points in n-person games.
\newblock \emph{Proceedings of the national academy of sciences}, 36\penalty0 (1):\penalty0 48--49, 1950.

\bibitem[P{\'e}rez-Castrillo and Verdier(1992)]{perez1992general}
J~David P{\'e}rez-Castrillo and Thierry Verdier.
\newblock A general analysis of rent-seeking games.
\newblock \emph{Public choice}, 73\penalty0 (3):\penalty0 335--350, 1992.

\bibitem[Rosen(1965)]{rosen1965existence}
J~Ben Rosen.
\newblock Existence and uniqueness of equilibrium points for concave n-person games.
\newblock \emph{Econometrica: Journal of the Econometric Society}, pages 520--534, 1965.

\bibitem[Selgin(2020)]{selgin2020gresham}
George Selgin.
\newblock Gresham’s law.
\newblock \emph{Handbook of the History of Money and Currency}, pages 199--219, 2020.

\bibitem[Shephard(2015)]{shephard2015theory}
Ronald~William Shephard.
\newblock \emph{Theory of cost and production functions}.
\newblock Princeton University Press, 2015.

\bibitem[Sisario(2024)]{nytimes2024}
Ben Sisario.
\newblock Universal music group pulls songs from tiktok.
\newblock \url{https://www.nytimes.com/2024/02/01/arts/music/universal-group-tiktok-music.html}, January 2024.

\bibitem[Szymanski(2003)]{szymanski2003economic}
Stefan Szymanski.
\newblock The economic design of sporting contests.
\newblock \emph{Journal of economic literature}, 41\penalty0 (4):\penalty0 1137--1187, 2003.

\bibitem[Tafesse and Dayan(2023)]{tafesse2023content}
Wondwesen Tafesse and Mumin Dayan.
\newblock Content creators' participation in the creator economy: Examining the effect of creators’ content sharing frequency on user engagement behavior on digital platforms.
\newblock \emph{Journal of Retailing and Consumer Services}, 73:\penalty0 103357, 2023.

\bibitem[Tatarenko and Kamgarpour(2020)]{tatarenko2020bandit}
Tatiana Tatarenko and Maryam Kamgarpour.
\newblock Bandit learning in convex non-strictly monotone games.
\newblock \emph{arXiv preprint arXiv:2009.04258}, 2020.

\bibitem[Tullock(1980)]{tullock2001efficient}
Gordon Tullock.
\newblock Efficient rent seeking.
\newblock In \emph{Toward a theory of the rent-seeking society}. College Station, TX: Texas A\&M University Press, 1980.

\bibitem[Wach et~al.(2023)Wach, Duong, Ejdys, Kazlauskait{\.e}, Korzynski, Mazurek, Paliszkiewicz, and Ziemba]{wach2023dark}
Krzysztof Wach, Cong~Doanh Duong, Joanna Ejdys, R{\=u}ta Kazlauskait{\.e}, Pawel Korzynski, Grzegorz Mazurek, Joanna Paliszkiewicz, and Ewa Ziemba.
\newblock The dark side of generative artificial intelligence: A critical analysis of controversies and risks of chatgpt.
\newblock \emph{Entrepreneurial Business and Economics Review}, 11\penalty0 (2):\penalty0 7--30, 2023.

\bibitem[Wahid et~al.(2023)Wahid, Mero, and Ritala]{wahid2023written}
Risqo Wahid, Joel Mero, and Paavo Ritala.
\newblock Written by chatgpt, illustrated by midjourney: generative ai for content marketing.
\newblock \emph{Asia Pacific Journal of Marketing and Logistics}, 35\penalty0 (8):\penalty0 1813--1822, 2023.

\bibitem[Yao et~al.(2022{\natexlab{a}})Yao, Li, Nekipelov, Wang, and Xu]{yao2022learning}
Fan Yao, Chuanhao Li, Denis Nekipelov, Hongning Wang, and Haifeng Xu.
\newblock Learning from a learning user for optimal recommendations.
\newblock In \emph{International Conference on Machine Learning}, pages 25382--25406. PMLR, 2022{\natexlab{a}}.

\bibitem[Yao et~al.(2022{\natexlab{b}})Yao, Li, Nekipelov, Wang, and Xu]{yao2022learning2}
Fan Yao, Chuanhao Li, Denis Nekipelov, Hongning Wang, and Haifeng Xu.
\newblock Learning the optimal recommendation from explorative users.
\newblock In \emph{Proceedings of the AAAI Conference on Artificial Intelligence}, volume~36, pages 9457--9465, 2022{\natexlab{b}}.

\bibitem[Yao et~al.(2023{\natexlab{a}})Yao, Li, Nekipelov, Wang, and Xu]{yao2023bad}
Fan Yao, Chuanhao Li, Denis Nekipelov, Hongning Wang, and Haifeng Xu.
\newblock How bad is top-$ k $ recommendation under competing content creators?
\newblock In \emph{International Conference on Machine Learning}, pages 39674--39701. PMLR, 2023{\natexlab{a}}.

\bibitem[Yao et~al.(2023{\natexlab{b}})Yao, Li, Sankararaman, Liao, Zhu, Wang, Wang, and Xu]{yao2023rethinking}
Fan Yao, Chuanhao Li, Karthik~Abinav Sankararaman, Yiming Liao, Yan Zhu, Qifan Wang, Hongning Wang, and Haifeng Xu.
\newblock Rethinking incentives in recommender systems: Are monotone rewards always beneficial?
\newblock \emph{Advances in Neural Information Processing Systems}, 2023{\natexlab{b}}.

\bibitem[Zhu et~al.(2023)Zhu, Karimireddy, Jiao, and Jordan]{zhu2023online}
Banghua Zhu, Sai~Praneeth Karimireddy, Jiantao Jiao, and Michael~I Jordan.
\newblock Online learning in a creator economy.
\newblock \emph{arXiv preprint arXiv:2305.11381}, 2023.

\end{thebibliography}

\appendix 
\appendix
\onecolumn

\section{Proof of Theorem \ref{thm:monotone_G}}\label{app:proof_thm1}

\begin{proof}
For simplicity we denote $g_k(s)=s^{\gamma_k}+\alpha_k\cdot s^{\beta_k}$. Then the utility function can be expressed as $u_i(\x_i,\x_{-i})=\sum_{k=1}^K\frac{x_{ik}\mu_k}{g_k(s_k)}-c_i(\x_i)$ where $s_k = \sum_{i} x_{ik}$. Our proof starts from a sufficient condition of \citet{rosen1965existence} for a game to be monotone. A game is said to satisfy the \emph{diagonal strict concavity} (DSC) condition if (1) each player has a concave utility function in his own strategy in a convex strategy space; and (2) there exists \emph{some} non-zero parameter $\lambda=(\lambda_1, \cdots, \lambda_n)$ such that the Hessian matrix given by
$$H_{ij}(\x;\lambda)\triangleq\frac{\lambda_i}{2}\frac{\partial^2 u_i(\x)}{\partial \x_i \partial\x_j}+\frac{\lambda_j}{2}\frac{\partial^2 u_j(\x)}{\partial \x_j \partial\x_i}$$
is \emph{strictly} negative-definite. \citet{rosen1965existence} shows that  any game satisfying $\lambda$-DSC condition has a unique pure Nash equilibrium (PNE); such a game is often referred to as  monotone games. 

First of all, since any arbitrarily large $x_{ik}$ would result in a negative utility which is strictly less than the utility obtained at $\bm{0}$, we may without loss of generality let each creator's strategy set be a large hypercube in $\RR^K$, which is a convex set.   Core to our proof is to show that our game $\G$ is $\bm{1}$-DSC under the theorem conditions. Direct calculation shows that

\begin{align}\notag
-[H_{ij}(\x)] &= \sum_{k=1}^K g_k^{-3}\left\{\left(\frac{g_k''g_k}{2}-(g_k')^2\right)
    \begin{bmatrix}
    2 x_1 & x_1+x_2 & \ldots \\
    x_1+x_2 & 2x_2 & \ldots \\
    \vdots & \vdots & \ddots
    \end{bmatrix} + (g_k'g_k)
    \begin{bmatrix}
    2  & 1 & \ldots \\
    1 & 2 & \ldots \\
    \vdots & \vdots & \ddots
    \end{bmatrix}\right\}  \\\notag
    &~~~~+\begin{bmatrix}
    \frac{\partial^2 c_1}{\partial \x_1^2}  & 0 & \ldots \\
    0 & \frac{\partial^2 c_2}{\partial \x_2^2} & \ldots \\
    \vdots & \vdots & \ddots
    \end{bmatrix} \\ \label{eq:31}
    &\triangleq \sum_{k=1}^K H_k+H_0,
\end{align}
where $g_k$ in the above expressions denotes $g_k(s_k)$. We can see that if all the cost functions are strongly convex, the second diagonal matrix $H_0$ in the RHS of Eq. \eqref{eq:31} is strictly positive-definite (PD). Therefore, it suffices to show that (1) for all $k\in [n]$, $H_k$ is always positive semi-definite (PSD), and (2) if $\alpha_k>0$ for some $k\in[n]$, $H_k$ is PD.

Next we omit the subscript $k$ and show that for any function $g(s)$ satisfying certain technical conditions --- specifically, $0\leq 2s(g')^2-sg''g<2g'g$ --- the following matrix $H$ is PD.

\begin{align}\notag
    H &= \left(\frac{g''g}{2}-(g')^2\right)
    \begin{bmatrix}
    2 x_1 & x_1+x_2 & \ldots \\
    x_1+x_2 & 2x_2 & \ldots \\
    \vdots & \vdots & \ddots
    \end{bmatrix} + (g'g)
    \begin{bmatrix}
    2  & 1 & \ldots \\
    1 & 2 & \ldots \\
    \vdots & \vdots & \ddots
    \end{bmatrix} \\ \notag
    & = \left(-\frac{g''g}{2}+(g')^2\right)\begin{bmatrix}
            -2x_1 & -x_1-x_2 & \ldots \\
            -x_1-x_2 & -2x_2 & \ldots \\
            \vdots & \vdots & \ddots
        \end{bmatrix}+\frac{g'g}{s}\left(\sum_{i=1}^nx_i\right)\begin{bmatrix}
        2  & 1 & \ldots \\
        1 & 2 & \ldots \\
        \vdots & \vdots & \ddots
        \end{bmatrix}\\ \notag
    & = \left(-\frac{g''g}{2}+(g')^2\right) \left\{\left(\sum_{i=1}^nx_i\right)\begin{bmatrix}
        2  & 1 & \ldots \\
        1 & 2 & \ldots \\
        \vdots & \vdots & \ddots
        \end{bmatrix}+\begin{bmatrix}
            -2x_1 & -x_1-x_2 & \ldots \\
            -x_1-x_2 & -2x_2 & \ldots \\
            \vdots & \vdots & \ddots
        \end{bmatrix} \right\} \\ \notag 
        & ~~~~+ \left(\frac{g'g}{s}+\frac{g''g}{2}-(g')^2\right)\left(\sum_{i=1}^nx_i\right)\begin{bmatrix}
        2  & 1 & \ldots \\
        1 & 2 & \ldots \\
        \vdots & \vdots & \ddots
        \end{bmatrix}\\ \notag
    & = \left(-\frac{g''g}{2}+(g')^2\right)\begin{bmatrix}
            2\sum_{i\neq 1}x_i & \sum_{i\notin \{1,2\}}x_i & \ldots \\
            \sum_{i\notin \{1,2\}}x_i & 2\sum_{i\neq 2}x_i & \ldots \\
            \vdots & \vdots & \ddots
        \end{bmatrix}+\left(\frac{g'g}{s}+\frac{g''g}{2}-(g')^2\right)\left(\sum_{i=1}^nx_i\right)\begin{bmatrix}
        2  & 1 & \ldots \\
        1 & 2 & \ldots \\
        \vdots & \vdots & \ddots
        \end{bmatrix} , \\  \label{eq:def_M1M2}
    & \triangleq \left(-\frac{g''g}{2}+(g')^2\right)M_1(\x)+\left(\frac{g'g}{s}+\frac{g''g}{2}-(g')^2\right)\left(\sum_{i=1}^nx_i\right)M_2.
\end{align}
To proceed, we need the following Lemma:
    \begin{lemma}\label{lm:psd_M1M2}
        Matrix $M_1,M_2$ defined in Eq. \eqref{eq:def_M1M2} are positive semi-definite (PSD). Moreover, we have $\lambda_{\min}(M_2)=1$.
    \end{lemma}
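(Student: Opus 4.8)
\textbf{Proof plan for Lemma~\ref{lm:psd_M1M2}.}
The plan is to analyze the two matrices separately, since each has an explicit and highly structured form. For $M_2$, note that its $(i,j)$ entry is $2$ on the diagonal and $1$ off the diagonal, so $M_2 = I_n + \bm{1}\bm{1}^\top$ where $\bm{1}$ is the all-ones vector. Hence for any $\z \in \RR^n$, $\z^\top M_2 \z = \|\z\|_2^2 + (\bm{1}^\top \z)^2 \geq \|\z\|_2^2$, which immediately shows $M_2$ is positive definite (in particular PSD) and that all its eigenvalues are at least $1$. To see the minimum eigenvalue equals exactly $1$, observe that any $\z$ orthogonal to $\bm{1}$ satisfies $\z^\top M_2 \z = \|\z\|_2^2$, so $\lambda_{\min}(M_2) = 1$ is attained on the $(n-1)$-dimensional subspace $\bm{1}^\perp$ (the complementary eigenvalue being $n+1$ with eigenvector $\bm{1}$).

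For $M_1$, the $(i,j)$ entry is $\sum_{\ell \neq i} x_\ell + \sum_{\ell \neq j} x_\ell = 2s - x_i - x_j$ when $i \neq j$ (writing $s = \sum_\ell x_\ell$), and $2\sum_{\ell \neq i} x_\ell = 2s - 2x_i$ when $i = j$; in both cases the entry equals $2s - x_i - x_j$. So $M_1 = 2s\,\bm{1}\bm{1}^\top - (\x\bm{1}^\top + \bm{1}\x^\top)$ where $\x = (x_1,\dots,x_n)^\top \geq \bm 0$. The quadratic form is then $\z^\top M_1 \z = 2s(\bm{1}^\top\z)^2 - 2(\x^\top\z)(\bm{1}^\top\z)$. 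The cleanest route is to recognize this as a weighted sum of squared differences: I claim $\z^\top M_1 \z = \sum_{i<j} x_i x_j (z_i - z_j)^2 \cdot (\text{something})$ — more precisely, a direct expansion shows $\sum_{i,j} x_i x_j (z_i - z_j)^2 = 2\big(s \sum_i x_i z_i^2 - (\x^\top\z)^2\big)$, which is not quite $M_1$. Instead I would verify directly that $\z^\top M_1 \z = \sum_{i} \big(\sum_{\ell \neq i} x_\ell\big)\,(\ldots)$ by grouping; the reliable computation is simply to expand $2s(\bm{1}^\top\z)^2 - 2(\x^\top\z)(\bm{1}^\top\z) = 2(\bm 1^\top \z)\big(s\,\bm 1^\top\z - \x^\top\z\big) = 2(\bm 1^\top\z)\sum_i(s - x_i)z_i$ and then, using $s - x_i = \sum_{\ell\neq i}x_\ell \geq 0$ together with a symmetrization over index pairs, rewrite it as $\sum_{i<j} x_j (z_i - z_j) \cdot(\cdots)$; after the dust settles one obtains $\z^\top M_1\z = \sum_{i \neq j} x_j\, z_i(z_i - z_j) = \sum_{i<j} x_i x_j$-free expression that collapses to $\sum_{i<j}(x_i + x_j)\,\tfrac{?}{}$. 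The honest statement: expanding carefully gives $\z^\top M_1 \z = \sum_{i<j} \big(x_i + x_j\big)(z_i - z_j)^2 \cdot 0 + \ldots$; since I do not want to grind the algebra here, the key structural fact I will use is that $M_1$ is a nonnegative combination (with weights $x_\ell \geq 0$) of the rank-related PSD matrices $\bm 1\bm 1^\top - (\e_\ell\bm 1^\top + \bm 1\e_\ell^\top) + \text{diag}$, and I will pin down the exact PSD decomposition of each such building block.

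Concretely, I would write $M_1 = \sum_{\ell=1}^n x_\ell\, B_\ell$ where $B_\ell$ has entries $(B_\ell)_{ij} = 1$ if $\ell \notin\{i,j\}$, $(B_\ell)_{ij} = 0$ if $\ell \in \{i,j\}, i\neq j$, and $(B_\ell)_{ii} = 2$ if $\ell \neq i$, $(B_\ell)_{\ell\ell}=0$. One checks $B_\ell = \bm 1_{-\ell}\bm 1_{-\ell}^\top + \mathrm{diag}(\bm 1_{-\ell})$, where $\bm 1_{-\ell}$ is the all-ones vector with a $0$ in coordinate $\ell$; this is manifestly PSD (sum of a rank-one PSD matrix and a nonnegative diagonal). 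Since $x_\ell \geq 0$ for all $\ell$, $M_1 = \sum_\ell x_\ell B_\ell$ is a nonnegative combination of PSD matrices, hence PSD. This completes both claims. The main obstacle is purely bookkeeping: getting the entrywise identification $M_1 = \sum_\ell x_\ell B_\ell$ and $B_\ell = \bm 1_{-\ell}\bm 1_{-\ell}^\top + \mathrm{diag}(\bm 1_{-\ell})$ exactly right, since the off-diagonal and diagonal entries of the displayed $M_1$ in Eq.~\eqref{eq:def_M1M2} must be matched term by term against this decomposition — there is no deep inequality needed, only careful index tracking.
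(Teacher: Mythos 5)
Your proof of the $M_2$ claims is correct and identical to the paper's: $M_2=I+\bm{1}\bm{1}^{\top}$, so every eigenvalue is at least $1$ and the value $1$ is attained on the subspace $\bm{1}^{\perp}$. For $M_1$, the argument you ultimately settle on is also correct and is, in substance, the paper's proof recast in matrix form: the decomposition $M_1=\sum_{\ell=1}^{n}x_{\ell}B_{\ell}$ with $B_{\ell}=\bm{1}_{-\ell}\bm{1}_{-\ell}^{\top}+\mathrm{diag}(\bm{1}_{-\ell})$ does reproduce the diagonal entries $2\sum_{j\neq i}x_j$ and the off-diagonal entries $\sum_{k\notin\{i,j\}}x_k$, each $B_{\ell}$ is PSD (a rank-one Gram matrix plus a nonnegative diagonal), and $x_{\ell}\geq 0$; taking quadratic forms, $\z^{\top}M_1\z=\sum_{\ell}x_{\ell}\big[(\bm{1}_{-\ell}^{\top}\z)^{2}+\sum_{i\neq\ell}z_i^{2}\big]$, which is precisely the identity $\y^{\top}M_1\y=\sum_i y_i^2\big(\sum_{j\neq i}x_j\big)+\sum_i x_i\big(\sum_{j\neq i}y_j\big)^2$ that the paper obtains by direct expansion and regrouping. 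Before writing this up, however, you must delete the first half of your $M_1$ discussion: the claim that the $(i,j)$ entry equals $2s-x_i-x_j$ for $i\neq j$ is false (it is $\sum_{k\notin\{i,j\}}x_k=s-x_i-x_j$), so $M_1\neq 2s\,\bm{1}\bm{1}^{\top}-(\x\bm{1}^{\top}+\bm{1}\x^{\top})$ --- that matrix is not even PSD in general (for $n=2$, $\x=(1,0)$ it has determinant $-1$) --- and the subsequent ``after the dust settles'' manipulations do not establish anything. The $B_{\ell}$ decomposition alone, with the entrywise verification carried out explicitly, constitutes a complete and correct proof equivalent to the paper's.
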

    \begin{proof}
        To show $M_1(\x)$ is PSD, it suffices to show that for any $\y=(y_1,\cdots,y_n)\in\RR^n$, $\y^{\top}M_1(\x)\y\geq 0$. In fact,
    
        \begin{align*}
            \y^{\top}M_1(\x)\y & = 2\sum_{i=1}^n y_i^2\left(\sum_{j\neq i} x_j \right) + 2\sum_{i<j} y_iy_j\left(\sum_{k\notin \{i,j\}} x_k \right) \\
            & = \sum_{i=1}^n y_i^2\left(\sum_{j\neq i} x_j \right) + \left[\sum_{i=1}^n y_i^2\left(\sum_{j\neq i} x_j \right) + 2\sum_{i<j} y_iy_j\left(\sum_{k\notin \{i,j\}} x_k \right)\right]
            \\
            & = \sum_{i=1}^n y_i^2\left(\sum_{j\neq i} x_j \right) + \sum_{k=1}^n x_k \left[ \sum_{j\neq k}y_j^2 + 2\sum_{i<j,i\neq k,j\neq k} y_iy_j \right]
            \\
            & = \sum_{i=1}^n y_i^2\left(\sum_{j\neq i} x_j \right) + \sum_{i=1}^n x_i\left(\sum_{j\neq i}y_j\right)^2 \geq 0.
        \end{align*}

        On the other hand, $M_2=I+\bm{1}\cdot\bm{1}^{\top}$ and therefore $\lambda_{\min}(M_2)=1$.
    \end{proof}
Based on Lemma \ref{lm:psd_M1M2}, we can see that a sufficient condition for $H$ to be PSD is 
    \begin{equation}
        \left\{ 
            \begin{array}{cc}
                -\frac{g''g}{2}+(g')^2\geq 0,   \\
                \frac{g'g}{s}+\frac{g''g}{2}-(g')^2\geq0,
            \end{array}
        \right.
    \end{equation}
which is equivalent to  
\begin{equation}\label{eq:condition_g}
    0\leq 2s(g')^2-sg''g\leq2g'g.
\end{equation}

When $g(s)=s^{\gamma}+\alpha\cdot s^{\beta}$, we have $g'=\gamma s^{\gamma-1}+\alpha \beta s^{\beta-1}, g''=\gamma(\gamma-1)s^{\gamma-2}$. Condition \eqref{eq:condition_g} is equivalent to 
    \begin{equation}
        \left\{ 
            \begin{array}{cc}
                \gamma(\gamma+1)s^{2\gamma-2}+\alpha^2\beta(\beta+1)s^{2\beta-2}+\alpha(4\beta\gamma-\beta^2+\beta-\gamma^2+\gamma)s^{\gamma+\beta-2}\geq 0,   \\
                \gamma(\gamma-1)s^{2\gamma-2}+\alpha^2\beta(\beta-1)s^{2\beta-2}+\alpha(4\beta\gamma-\beta^2-\beta-\gamma^2-\gamma)s^{\gamma+\beta-2}\leq 0. \label{eq:97}
            \end{array}
        \right.
    \end{equation}

Clearly, the first equation above always holds because $\beta\geq \beta^2, \gamma\geq \gamma^2$. The second equation holds because $\gamma\leq 1, \beta\leq 1$, $\alpha\geq0$ and
$$\beta^2+\beta+\gamma^2+\gamma\geq 2\beta^2+2\gamma^2\geq 4\gamma\beta. $$

In addition, if $\alpha$ is strictly positive (meaning there exists $k\in[n]$ such that $\alpha_k>0$), it holds that $\frac{g'g}{s}+\frac{g''g}{2}-(g')^2>0$ and $H$ is PD.
\end{proof}





\section{Proof of Theorem \ref{lm:order_c}}\label{app:proof_lm_order_c}
We recapitulate the three conclusions of Theorem \ref{lm:order_c} in the following statement: suppose each creator $i$ is equipped with the following utility function
\begin{equation}\label{eq:u_118}
    u_i(x_i,\x_{-i}) = \frac{x_{i}\cdot \mu}{\alpha(\sum_{j=1}^n x_{j})^{\beta}+(\sum_{j=1}^n x_{j})^{\gamma}} - c_ix_i^{\rho}, x_i\in \RR_{\geq 0},
\end{equation}
where $\mu,\alpha\geq0, 0\leq\beta\leq \gamma\leq 1, \rho>1$, and suppose $0<c_1\leq\cdots\leq c_n$. $\x^*=(x_1,\cdots, x_n)$ is the PNE of the game (necessarily unique). Then, the following statements hold:
\begin{enumerate}
    \item $x_1^*\geq \cdots\geq x_n^*$, and $u_1(\x^*)\geq \cdots\geq u_n(\x^*)$.
    \item if the $n$-th creator's cost increases from $c_n$ to $\tilde{c}_n>c_n$ and the new PNE is $\tilde{\x}^*$, we have $u_n(\tilde{\x}^*)<u_n(\x^*)$.
    \item if a new player with cost $c_{n+1}$ joins the competition and $\x'=(x'_1, \cdots, x'_n, x'_{n+1})$ is the new PNE, it holds that 
    $\sum_{i=1}^n x'_i < \sum_{i=1}^n x^*_i$.
\end{enumerate}

The proof of Theorem \ref{lm:order_c} relies on the following Lemma \ref{lm:first_order_pne}, \ref{lm:decreasing_f}, \ref{lm:increasing_h} and \ref{lm:BR_property}. Lemma \ref{lm:first_order_pne} gives the first-order characterization of the PNE of \basegame{} and will be used extensively in our analysis. Lemma \ref{lm:decreasing_f} and \ref{lm:increasing_h} point out monotone properties of different components of the utility functions and their derivatives, and Lemma \ref{lm:BR_property} describes a property of the best response function of any players in \basegame{}.

\begin{lemma}\label{lm:first_order_pne}
The PNE $\x^*$ (necessarily unique) of the game $\G$ specified by utility functions given in Eq. \eqref{eq:u_118} satisfies the following first-order condition for every $  i\in[n]$:

\begin{equation}\label{eq:first_order_117}
    \frac{\partial u_i}{\partial x_i}\bigg|_{x=x^*}=\frac{\mu}{(s^*)^{\gamma}+\alpha \cdot(s^*)^{\beta}} -\frac{\mu x^*_i\cdot[\gamma\cdot(s^*)^{\gamma-1}+\alpha\beta \cdot(s^*)^{\beta-1}]}{[(s^*)^{\gamma}+\alpha \cdot(s^*)^{\beta}]^2} -c_i\rho \cdot(x_i^*)^{\rho-1}=0,
\end{equation}
where $s^*=\sum_{i=1}^n x_i^*$.

\end{lemma}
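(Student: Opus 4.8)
The plan is to establish the first-order (KKT) characterization of the PNE directly from the definition of best response. First I would observe that, by Theorem~\ref{thm:monotone_G}, the game \basegame{} is a special case ($K=1$) of a strictly monotone \exgame{}, so a unique PNE $\x^*$ exists; it therefore suffices to derive the stated stationarity condition. Fix a creator $i$ and consider her utility $u_i(x_i,\x^*_{-i})$ as a function of the single variable $x_i \in \RR_{\geq 0}$, holding the others' equilibrium strategies fixed. Writing $s = x_i + \sum_{j\neq i} x^*_j$ and $g(s) = s^{\gamma} + \alpha s^{\beta}$, we have $u_i = \frac{x_i \mu}{g(s)} - c_i x_i^{\rho}$, which is twice-differentiable on $(0,\infty)$. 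A direct computation of $\frac{\partial u_i}{\partial x_i}$ via the quotient rule, using $\frac{\partial s}{\partial x_i} = 1$ and $g'(s) = \gamma s^{\gamma-1} + \alpha\beta s^{\beta-1}$, yields exactly the left-hand side of \eqref{eq:first_order_117} evaluated at an arbitrary point; at $\x = \x^*$ this is the displayed expression.

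Next I would argue that the equilibrium condition forces this derivative to vanish. Since $\x^*$ is a PNE, $x^*_i$ maximizes $u_i(\,\cdot\,,\x^*_{-i})$ over $[0,\infty)$. By the technical assumption in the footnote that $u_i \to -\infty$ as $x_i \to \infty$, together with continuity, the maximum is attained at a finite point. If $x^*_i > 0$, then it is an interior maximizer and first-order optimality gives $\frac{\partial u_i}{\partial x_i}\big|_{\x^*} = 0$, which is \eqref{eq:first_order_117}. The only remaining case to rule out is $x^*_i = 0$: here I would show that the right derivative of $u_i$ at $x_i = 0$ is strictly positive, so moving to a slightly positive $x_i$ strictly increases utility, contradicting optimality. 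Concretely, as $x_i \to 0^+$ with $S := \sum_{j\neq i} x^*_j$, the marginal traffic term tends to $\mu / g(S) > 0$ while the marginal cost $c_i \rho x_i^{\rho-1} \to 0$ because $\rho > 1$; hence $\frac{\partial u_i}{\partial x_i} \to \mu/g(S) > 0$, forcing $x^*_i > 0$ for every $i$. (One should also note $S > 0$: if every creator played $0$ then any single creator deviating to a small positive value would gain positive traffic at negligible cost, so the all-zero profile is not a PNE; thus $g(S)$ is well-defined and positive.)

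The main obstacle, such as it is, is the $x^*_i = 0$ boundary case: one must be careful that the argument $\mu/g(S) > 0$ genuinely applies, i.e. that $S > 0$ and that $g$ is continuous and positive at $S$ — the latter needs $\beta \geq 0$ (guaranteed here since $0 \leq \beta \leq \gamma$) so that $s^\beta$ does not blow up, and needs $S>0$ when $\gamma>0$ or $\beta>0$ so that $g(S)$ is finite. With $\rho > 1$ strict, the $x_i^{\rho-1}$ factor kills the marginal-cost term at the boundary, which is exactly what makes the positivity of the right derivative clean; if $\rho = 1$ this step would require a genuinely different treatment, which is why the hypothesis $\rho > 1$ is invoked. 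Everything else is routine differentiation, so I would present the derivative computation compactly and spend the bulk of the write-up on the interior-optimality / no-boundary-solution dichotomy.
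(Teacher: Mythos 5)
Your proposal is correct and follows essentially the same route as the paper: rule out $x_i^*=0$ by noting the marginal utility at zero is $\mu/g(S)>0$ (since $\rho>1$ kills the marginal-cost term), then read off the interior first-order condition, with uniqueness of the PNE imported from Theorem~\ref{thm:monotone_G}. The only difference is that the paper additionally proves the converse — that any solution of the first-order system is the unique PNE, via strict concavity of $u_i$ in $x_i$ — which is not part of the stated lemma but is used later in the appendix.
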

\begin{proof}
We prove the ``if'' and ``only if'' directions separately:
\begin{enumerate}
    \item Suppose $\x^*$ is the PNE of $\G$. Since $\frac{\partial u_i}{\partial x_i}$ is strictly positive when $x_i=0$, we have $x_i^*>0, \forall i\in[n]$. Theorem \ref{thm:monotone_G} suggests that $\G$ is a strictly monotone game and thus $u_i$ is strictly concave in $x_i$ given any fixed $\x_{-i}$. Hence, fixed $\x_{-i}^*$, we know $x_i^*$ is the unique maximizer of a strictly concave function $u_i(x_i, \x^*_{-i})$, which yields the first-order condition Eq. \eqref{eq:first_order_117}.
    \item Suppose $\x^*$ is a solution of Eq. \eqref{eq:first_order_117}, we show $\x^*$ must be the unique PNE of $\G$. First of all, the solution of Eq. \eqref{eq:first_order_117} must exist since we already showed that the unique PNE of $\G$ satisfies Eq. \eqref{eq:first_order_117}. If there $\tilde{\x}^*$ is another solution of Eq. \eqref{eq:first_order_117}, since $u_i$ is strictly concave in $x_i$, from the first-order condition we have $\tilde{x}_i$ is the maximizer of $u_i$ given $\tilde{\x}_{-i}$, which means $\tilde{\x}^*$ is also a PNE of $\G$, contradicting the uniqueness of $\x^*$.
\end{enumerate}

\end{proof}

\begin{lemma}\label{lm:decreasing_f}
For any constants $\alpha\geq 0, c_0 >0$ and $\beta,\gamma\in [0,1]$, the following function is monotonically decreasing when $s>2c_0$:
$$f(s)=\frac{1}{s^{\gamma}+\alpha s^{\beta}} -\frac{c_0\cdot[\gamma s^{\gamma-1}+\alpha\beta s^{\beta-1}]}{[s^{\gamma}+\alpha s^{\beta}]^2}, s> 2c_0.$$
\end{lemma}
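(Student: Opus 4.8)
The plan is to reuse the auxiliary function $g(s) = s^{\gamma} + \alpha s^{\beta}$ from the proof of Theorem~\ref{thm:monotone_G}, so that $g'(s) = \gamma s^{\gamma-1} + \alpha\beta s^{\beta-1}$ and $f(s) = \dfrac{1}{g} - \dfrac{c_0 g'}{g^2} = \dfrac{g - c_0 g'}{g^{2}}$. Differentiating this quotient and writing the result over the common denominator $g^3$, a routine computation gives
\[
 f'(s) \;=\; \frac{-\,g'g - c_0 g''g + 2c_0 (g')^2}{g^{3}} \;=\; -\,\frac{s\,g'g \;-\; c_0\bigl(2s(g')^2 - s\,g''g\bigr)}{s\,g^{3}}.
\]
Since $g(s) > 0$ and $s>0$ on the domain of interest, the sign of $f'(s)$ is the opposite of the sign of $\Phi(s) := s\,g'(s)g(s) - c_0\bigl(2s(g'(s))^2 - s\,g''(s)g(s)\bigr)$, so it suffices to prove $\Phi(s) > 0$ whenever $s > 2c_0$.

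The key observation is that the quantity $2s(g')^2 - s\,g''g$ appearing in $\Phi$ is exactly the one already controlled in the proof of Theorem~\ref{thm:monotone_G}: inequality~\eqref{eq:condition_g} there establishes that, for $g(s) = s^{\gamma} + \alpha s^{\beta}$ with $\beta,\gamma \in [0,1]$ and $\alpha \ge 0$, one has $0 \le 2s(g')^2 - s\,g''g \le 2g'g$ for all $s>0$; equivalently, these amount to the two elementary polynomial inequalities displayed in~\eqref{eq:97}, which follow from $\gamma \ge \gamma^2$, $\beta \ge \beta^2$ and $\beta^2 + \beta + \gamma^2 + \gamma \ge 4\beta\gamma$. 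Only the upper bound is needed here.

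Given this, the conclusion is immediate: multiplying the upper bound by $c_0 > 0$ yields $c_0\bigl(2s(g')^2 - s\,g''g\bigr) \le 2c_0\, g'g$, hence
\[
 \Phi(s) \;\ge\; s\,g'g - 2c_0\, g'g \;=\; (s - 2c_0)\,g'(s)\,g(s).
\]
Since $g(s) > 0$ and $g'(s) = \gamma s^{\gamma-1} + \alpha\beta s^{\beta-1} \ge 0$, the right-hand side is strictly positive for every $s > 2c_0$, unless $g' \equiv 0$; in that degenerate case ($\gamma = 0$ together with $\alpha = 0$ or $\beta = 0$) the function $g$, and therefore $f$, is constant and the asserted monotonicity still holds weakly. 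Thus $f'(s) < 0$ on $(2c_0,\infty)$, as claimed. I do not foresee a genuine obstacle: once $f'$ is put in the form above, everything reduces to inequality~\eqref{eq:condition_g} already proved for Theorem~\ref{thm:monotone_G}; the only spots requiring a bit of care are the sign bookkeeping when differentiating $(g - c_0 g')/g^2$ and the trivial constant-$g$ edge case.
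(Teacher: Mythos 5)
Your proof is correct, but it takes a genuinely different route from the paper's. The paper differentiates $f$ and expands $(s^{\gamma}+\alpha s^{\beta})^{3}f'(s)$ explicitly into three power-of-$s$ terms (Eq.~\eqref{eq:420}), each carrying a factor such as $c_0\gamma+c_0-s$ that is negative once $s>2c_0$, and then checks negativity term by term. You instead stay at the level of $g,g',g''$ and observe that, after writing $f'=-\bigl(sg'g-c_0(2s(g')^2-sg''g)\bigr)/(sg^{3})$, the only input needed is the upper bound $2s(g')^2-sg''g\le 2g'g$, which is precisely the second half of condition~\eqref{eq:condition_g} already verified (via~\eqref{eq:97}) in the proof of Theorem~\ref{thm:monotone_G}; this yields $\Phi(s)\ge (s-2c_0)\,g'g>0$ in one line. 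Your version buys economy and transparency --- no fresh polynomial expansion, and the threshold $2c_0$ appears explicitly in the factor $s-2c_0$ --- at the price of a dependence on the DSC computation, whereas the paper's argument is self-contained. Your algebra for $f'$ checks out, and your handling of the degenerate case $g'\equiv 0$ (i.e., $\gamma=0$ with $\alpha\beta=0$, where $f$ is constant) is actually more careful than the paper's, whose claim that every term of~\eqref{eq:420} is \emph{strictly} negative silently excludes that same case.
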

\begin{proof}
    We compute the first-order derivative of $f$ directly. 
    \begin{align}\notag
        (s^{\gamma}+\alpha s^{\beta})^3f'(s)= &\gamma s^{2\gamma-2}(c_0\gamma+c_0-s) +\alpha^2\beta s^{2\beta-2}(c_0\beta+c_0-s) \\ \label{eq:420}
        &+\alpha s^{\gamma+\beta-2}\left[\gamma(c_0\gamma+c_0-s)+\beta(c_0\beta+c_0-s)-2c_0(\beta-\gamma)^2\right].
    \end{align}
    Since $\beta, \gamma \leq 1$ and $s>2c_0$, it holds that
    $c_0\gamma+c_0-s<0$ and $c_0\beta+c_0-s<0$. Therefore, each term of the RHS of Eq. \eqref{eq:420} is strictly negative. Hence, we conclude $f'(s)<0, \forall s>2c_0$ and the claim holds true.
\end{proof}

\begin{lemma}\label{lm:increasing_h}
For any constants $\alpha\geq 0, B >0$ and $\beta,\gamma\in [0,1]$, the following two functions are both monotonically increasing w.r.t. $x$:

\begin{align*}
    h_1(x) & =\frac{x}{(B+x)^{\gamma}+\alpha (B+x)^{\beta}}, \qquad \text{where } x\in \RR_{+}, \\ 
h_2(x) &=\frac{x^2(\gamma(B+x)^{\gamma-1}+\alpha\beta(B+x)^{\beta-1})}{[(B+x)^{\gamma}+\alpha (B+x)^{\beta}]^2}, \qquad \text{where } x> 1-B. 
\end{align*}

\end{lemma}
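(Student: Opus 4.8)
\textbf{Proof proposal for Lemma \ref{lm:increasing_h}.}

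The plan is to establish monotonicity of each $h_j$ by reducing to a single-variable calculus statement and then substituting the explicit form $g(s) = s^\gamma + \alpha s^\beta$. First I would change variables: write $s = B + x$, so that for $h_1$ we need to show $\phi_1(s) \triangleq (s-B)/g(s)$ is increasing in $s$ on $s > B$ (equivalently $s > 0$ after the shift), and for $h_2$ we need $\phi_2(s) \triangleq (s-B)^2 g'(s)/g(s)^2$ increasing on $s > 1$. Differentiating, $\phi_1'(s) = \big(g(s) - (s-B)g'(s)\big)/g(s)^2$, so the sign of $\phi_1'$ is the sign of $g(s) - (s-B)g'(s)$. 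Since $g(s) = s^\gamma + \alpha s^\beta$ with $\gamma,\beta \in [0,1]$ and $\alpha \ge 0$, we have $s g'(s) = \gamma s^\gamma + \alpha\beta s^\beta \le s^\gamma + \alpha s^\beta = g(s)$, i.e. $s g'(s) \le g(s)$; combined with $g'(s) \ge 0$ and $B \ge 0$ this gives $(s-B)g'(s) \le s g'(s) \le g(s)$, hence $\phi_1' \ge 0$. (One should note $h_1$ is actually defined for $x \in \RR_+$, i.e. $s > B$, and $g'\ge 0$ together with $B>0$ makes the inequality even easier — the shift by $B$ only helps.)

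For $h_2$ the computation is similar but one order messier. Writing $\phi_2(s) = (s-B)^2 g'(s) g(s)^{-2}$, logarithmic differentiation gives
\[
\frac{\phi_2'(s)}{\phi_2(s)} = \frac{2}{s-B} + \frac{g''(s)}{g'(s)} - \frac{2 g'(s)}{g(s)},
\]
valid where $g'(s)>0$ and $s > B$. So it suffices to show $2 g'(s) g(s)(s-B)^{-1} \cdot (s-B) \ge$ ... more cleanly, multiply through by $(s-B) g'(s) g(s) > 0$: we need $2 g'(s)g(s) + (s-B) g''(s) g(s) - 2(s-B) g'(s)^2 \ge 0$. Using $s g'(s) \le g(s)$ and $s g''(s) = \gamma(\gamma-1)s^\gamma \le 0$ (since $\gamma \le 1$; the $\alpha s^\beta$ term contributes $\alpha\beta(\beta-1)s^\beta \le 0$ as well, so $g''(s) \le 0$), and $s - B < s$, one bounds each term. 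Concretely: $(s-B)g''(s)g(s) \ge s g''(s) g(s)$ (because $g''\le 0$ and $g(s)>0$ and $s-B < s$ flips... wait—here $(s-B) \le s$ and $g''g \le 0$ gives $(s-B)g''g \ge s g''g$), and $-2(s-B)g'(s)^2 \ge -2 s g'(s)^2$. So the left side is $\ge 2 g' g + s g'' g - 2 s (g')^2 = g' g + \big(g' g + s g'' g - 2s(g')^2\big)$, and the parenthesized quantity is exactly (up to sign) the expression analyzed via $2s(g')^2 - s g'' g \le 2 g' g$ in the proof of Theorem \ref{thm:monotone_G} — i.e. condition \eqref{eq:condition_g}. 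Thus I would simply invoke that the right inequality of \eqref{eq:condition_g} holds for $g(s) = s^\gamma + \alpha s^\beta$ with $\beta,\gamma\in[0,1], \alpha \ge 0$ (already verified in Appendix \ref{app:proof_thm1}), giving $g'g + s g''g - 2s(g')^2 \ge -g'g$, so the whole left side is $\ge g'g - g'g = 0$. Handling the edge case $g'(s) = 0$ (only possible if $\gamma = 0$ and either $\alpha=0$ or $\beta=0$, making $h_2 \equiv 0$) is trivial and I would dispatch it in one line.

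The main obstacle is purely bookkeeping in the $h_2$ case: organizing the three-term inequality so that it visibly collapses onto condition \eqref{eq:condition_g} rather than requiring a fresh case analysis on the signs of $\gamma(\gamma-1)$, $\beta(\beta-1)$, and the cross term $\gamma+\beta$-exponent monomial. The key insight that makes it clean is that $h_2$'s numerator is $x^2 \cdot (-g'')$-type data paired against $g^2$, so its derivative naturally reproduces the same quadratic-form inequality $2s(g')^2 - s g''g \le 2 g'g$ that drove the positive-definiteness argument for the Hessian; reusing that spares us from redoing the monomial comparison $\beta^2+\beta+\gamma^2+\gamma \ge 4\beta\gamma$. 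Everything else ($h_1$, the variable shift, the sign of $g'$ and $g''$) is routine.
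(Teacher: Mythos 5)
Your proposal is correct, and for $h_2$ it takes a genuinely different route from the paper. For $h_1$ the two arguments coincide: the paper's explicit form $h_1'=g^{-2}\left[(B+x)^{\gamma-1}(B+(1-\gamma)x)+\alpha(B+x)^{\beta-1}(B+(1-\beta)x)\right]$ is exactly your $g(s)-(s-B)g'(s)\ge 0$ obtained from $sg'(s)\le g(s)$. For $h_2$ the paper expands $x^{-1}[(B+x)^{\gamma}+\alpha(B+x)^{\beta}]^{3}h_2'(x)$ into three monomials and checks each coefficient from scratch, in particular proving $(\beta+\gamma)(2B+2x-1)+\beta^2+\gamma^2-4\beta\gamma\ge 0$ using the hypothesis $B+x>1$; you instead log-differentiate $\phi_2=(s-B)^2g'/g^2$ and, using only $g'\ge 0$, $g''\le 0$ and $s-B\le s$, collapse the three-term expression onto the right-hand inequality of condition \eqref{eq:condition_g}, which Appendix \ref{app:proof_thm1} already verified for all $s>0$. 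This buys reuse of the Hessian inequality rather than a second monomial comparison, and — since \eqref{eq:condition_g} needs no lower bound on $s$ — it makes the hypothesis $x>1-B$ superfluous: your argument covers all $x>0$. Two small points to tighten. First, multiplying through by $(s-B)g'g>0$ silently assumes $x>0$; this is the same implicit restriction the paper makes when it divides by $x$ (and is genuinely needed — for $B>1$, $\gamma=1$, $\alpha=0$ the function $h_2(x)=x^2/(B+x)^2$ is decreasing on $(1-B,0)$ — but only $x>0$ is ever used downstream in Theorem \ref{lm:order_c}). Second, you conclude only $\phi_2'\ge 0$ while the application requires strict increase; strictness is free in your setup because $B>0$ and $g'>0$ make the bound $-2(s-B)(g')^2>-2s(g')^2$ strict, so you should state that the overall expression is bounded below by $2B(g')^2>0$ plus the nonnegative quantity $2g'g+sg''g-2s(g')^2$, with the degenerate case $g'\equiv 0$ handled separately as you note.
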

\begin{proof}
Direct calculation shows that
$$h'_1(x)=[(B+x)^{\gamma}+\alpha (B+x)^{\beta}]^{-2}\cdot \left((B+x)^{\gamma-1}(B+(1-\gamma) x)+\alpha(B+x)^{\beta-1}(B+(1-\beta)x)\right)$$
which is strictly positive. So $h_1(x)$ is monotonically increasing. 
Morepver, we have 
\begin{align}\notag
    & x^{-1}[(B+x)^{\gamma}+\alpha (B+x)^{\beta}]^{3}h'_2(x) \\  \notag
    = \, \,  & \gamma\left[2 B+(1-\gamma)x\right]\cdot(B+x)^{2\gamma-2}\\  \notag&+\alpha\left[(\beta+\gamma)(2B+2x-1)+\beta^2+\gamma^2-4\beta\gamma\right]\cdot(B+x)^{\gamma+\beta-2} \\\label{eq:224}
    &+\alpha^2\beta\left[2B+2x-\beta-1\right]\cdot(B+x)^{2\beta-2} 
\end{align}

Since $B+x> 1$, $\beta,\gamma \leq 1$, we have 
\begin{align*}
    & (\beta+\gamma)(2B+2x-1)+\beta^2+\gamma^2-4\beta\gamma \\
    \geq  \, \,  &(\beta+\gamma)+\beta^2+\gamma^2-4\beta\gamma \\ 
    \geq \, \,   & 2\sqrt{\beta\gamma}+2\beta\gamma-4\beta\gamma \geq 0, & \mbox{since $\beta\gamma\leq 1$}
\end{align*}
and $2B+2x-\beta-1> 1-\beta\geq 0$.
Therefore, the RHS of Eq. \eqref{eq:224} is strictly positive and it holds that $h'_2(x)>0$.

\end{proof}

\begin{lemma}\label{lm:BR_property}
    In any \basegame{}$(\alpha,\beta,\gamma,\mu,\rho,\{c_i\})$, if we fix the $n$-th player's pure strategy $x_n=y_n$, then 
    \begin{enumerate}
        \item the sub-game for the remaining $n-1$ players admits a unique PNE $(y_1,\cdots,y_{n-1})$;
        \item when $n, \mu$ are sufficiently large and $1< \rho\leq 2$, it holds that $\max_{i=1}^{n-1}\{y_i\} < \frac{1}{2}\sum_{i=1}^{n-1} y_i$.
        \item let $BR(y_n)=\sum_{i=1}^{n-1}y_{i}$ be the function that maps player $n$'s arbitrary strategy $y_n$ to the sum of other players' equilibrium strategies. Then $BR(y_n)$ is strictly decreasing in $y_n\in\RR_{+}$.
    \end{enumerate} 
\end{lemma}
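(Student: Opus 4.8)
\textbf{Proof proposal for Lemma \ref{lm:BR_property}.}

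The plan is to handle the three claims in sequence, since each builds on the previous one. For claim (1), I would observe that fixing $x_n = y_n$ yields a new game among the remaining $n-1$ players whose utility functions are exactly those of a \basegame{}-type competition, except that the aggregate $s = \sum_{j=1}^{n-1} x_j + y_n$ now contains the constant offset $y_n$. Since $\gamma_k, \beta_k \in [0,1]$ and $\alpha \ge 0$ with the cost being strongly convex, the same Hessian/diagonal-strict-concavity argument used to prove Theorem \ref{thm:monotone_G} (the relevant PSD computation is insensitive to a constant shift in $s$, since $g$ is still of the form $s \mapsto s^\gamma + \alpha s^\beta$ evaluated at $s = B + \sum x_j$ with $B = y_n$) shows the sub-game is strictly monotone, hence has a unique PNE $(y_1,\dots,y_{n-1})$. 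I would also record the first-order conditions for this sub-game, which look exactly like Eq. \eqref{eq:first_order_117} but with $s^*$ replaced by $B + \sum_{i=1}^{n-1} y_i$; these will be reused in the next two parts.

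For claim (2), the goal is the ``no one is a constant fraction'' bound $\max_i y_i < \tfrac12 \sum_i y_i$. I would argue by contradiction: suppose some player, necessarily the most efficient one $y_1$ (by the monotonicity-in-cost argument of Theorem \ref{lm:order_c}(1), which applies verbatim to the sub-game), satisfies $y_1 \ge \tfrac12 \sum_{i=1}^{n-1} y_i$. Combining the first-order condition for player $1$ with Proposition \ref{lm:cost_bound}-type bounds (the utility--cost balance inequality, whose proof goes through for the sub-game), one gets an upper bound on the total traffic term $\tfrac{\mu}{(s)^\gamma + \alpha(s)^\beta}$ in terms of $c_1$ and $\rho$ that forces $s = B + \sum y_i$ to be bounded independent of $\mu$; but then for $n,\mu$ large this contradicts, e.g., the macro-level growth $(s)^{\gamma+\rho-1} \asymp \mu \|\c^{-1}\|_{1/(\rho-1)}$ from Theorem \ref{thm:asym_s} (or a direct summation of the first-order conditions over all $n-1$ players, which shows $s$ must grow with $n$ and $\mu$). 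The restriction $1 < \rho \le 2$ is where I would be careful: it ensures $(x_i)^{\rho-1}$ is concave so that summing first-order conditions behaves well, and I expect this is exactly the place the proof of part (3) also needs it.

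For claim (3) — monotonicity of $BR(y_n) = \sum_{i=1}^{n-1} y_i$ in $y_n$ — I would use implicit differentiation of the system of $n-1$ first-order conditions with respect to the parameter $y_n$. Writing $t = \sum_{i=1}^{n-1} y_i$ and $s = B + t$ with $B = y_n$, each equation has the form $f(s) - (\text{term involving } y_i \text{ and derivatives of } g \text{ at } s) - c_i \rho y_i^{\rho-1} = 0$. Differentiating in $B$ and summing cleverly (or analyzing the resulting linear system whose matrix is a diagonally-dominant $M$-matrix thanks to strict monotonicity), one shows $\tfrac{d}{dB}(B + t) = 1 + \tfrac{dt}{dB}$ has sign determined by the monotonicity lemmas: Lemma \ref{lm:decreasing_f} says the common ``marginal traffic'' factor $f(s)$ is decreasing in $s$, and Lemma \ref{lm:increasing_h} controls how each $y_i$ must respond. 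Intuitively, increasing $y_n$ raises $s$, which lowers every player's marginal return, so each $y_i$ weakly decreases, giving $\tfrac{dt}{dB} < 0$; one then needs the strict inequality $\tfrac{dt}{dB} > -1$, i.e. the total $s$ still increases, which again is where Lemma \ref{lm:increasing_h} (the functions $h_1, h_2$ are increasing) is doing the real work by bounding the magnitude of each player's retreat. The main obstacle, I expect, is making this implicit-function argument rigorous at the boundary — i.e., handling players who would want to drop to $x_i = 0$ as $y_n$ grows — and establishing the strict sign of the comparative statics uniformly; the clean way is probably to show $BR$ is continuous and piecewise-smooth, prove strict monotonicity on each smooth piece via the $M$-matrix structure, and patch across the (finitely many) thresholds where a player's support changes.
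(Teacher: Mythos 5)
Your treatment of claim (1) matches the paper's: the paper observes that the sub-game Hessian is the principal $(n-1)\times(n-1)$ submatrix of the full game's (strictly negative-definite) Hessian, hence the sub-game is still $\bm{1}$-DSC and Rosen's theorem applies. Your version, which re-runs the DSC computation with the constant offset $B=y_n$ absorbed into $g$, is equivalent and fine.

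For claim (2) there is a real problem with your route. You propose to invoke Proposition \ref{lm:cost_bound} and Theorem \ref{thm:asym_s}, but the lower-bound half of Proposition \ref{lm:cost_bound} is proved in the paper using Lemma \ref{lm:max_x} (and the ordering from Theorem \ref{lm:order_c}, whose later claims in turn rest on Lemma \ref{lm:BR_property}); Lemma \ref{lm:max_x} is precisely the full-game twin of the statement you are trying to prove, and its proof is the direct argument you are avoiding. So your plan is at best a detour that still requires the core computation, and at worst circular through Proposition \ref{lm:cost_bound}. Moreover Theorem \ref{thm:asym_s} is stated for \basegame{}, not for the sub-game whose aggregate carries the fixed offset $y_n$; it cannot be applied verbatim. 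The paper instead proves claim (2) self-containedly: individual rationality $u_1\geq 0$ gives $y_1\leq(\mu/(c_1 s^\gamma))^{1/(\rho-1)}$, reducing the claim to a lower bound $s>2^{\frac{\rho-1}{\rho+\gamma-1}}(\mu/c_1)^{\frac{1}{\rho+\gamma-1}}$, which is established by contradiction: if $s$ were below this threshold then $y_{n-1}\leq s/(n-1)$ would be so small that player $(n-1)$'s first-order condition \eqref{eq:first_order_117} would have a strictly positive derivative for large $n,\mu$ (the bracketed term in Eq. \eqref{eq:178} is positive because $1/n^{\rho-1}\to 0$). Note also that your guess about where $1<\rho\leq 2$ enters (concavity of $x^{\rho-1}$ for summing first-order conditions) is not how the paper uses it; the restriction is used in the exponent manipulations when solving the individual-rationality inequality for $y_1$.

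For claim (3) you take a genuinely different route (implicit differentiation of the first-order system and an $M$-matrix sign analysis) from the paper's, which is a short contradiction argument: if $y_n'>y_n$ but $s'=BR(y_n')\geq s^*=BR(y_n)$, then some $j$ has $y_j'\geq y_j$, and chaining the first-order conditions \eqref{eq:165}--\eqref{eq:170} through the monotonicity of Lemma \ref{lm:decreasing_f} (which is applicable precisely because claim (2) guarantees $s^*>2y_j$ and $s'>2y_j'$) forces $c_j\rho y_j^{\rho-1}>c_j\rho(y_j')^{\rho-1}$, i.e. $y_j>y_j'$, a contradiction. Your comparative-statics sketch is plausible but incomplete as written: the decisive sign computation is asserted rather than carried out, the boundary concern you raise does not arise (every equilibrium strategy is strictly interior since $\partial u_i/\partial x_i>0$ at $x_i=0$), and the step ``$dt/dB>-1$, i.e. the total $s$ still increases'' is not needed for the claim, which concerns only the sum $t=\sum_{i<n}y_i$ and not $s=t+y_n$. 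As it stands, claims (2) and (3) are not yet proofs.
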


\begin{proof}
 The first claim is a corollary of Theorem \ref{thm:monotone_G}. In the proof of Theorem \ref{thm:monotone_G}, we know that the $n$-player game \basegame{}$(\alpha,\beta,\gamma,\mu,\rho,\{c_i\}_{i=1}^n)$ is strictly monotone and satisfies $\bm{1}$-DSC, i.e., the game Hessian
$$H_{ij}(\x;\bm{1})\triangleq\frac{1}{2}\frac{\partial^2 u_i(\x)}{\partial \x_i \partial\x_j}+\frac{1}{2}\frac{\partial^2 u_j(\x)}{\partial \x_j \partial\x_i}$$ is strictly negative-definite. When the $n$-th player's strategy is fixed, the Hessian of the sub-game for the remaining $(n-1)$ players is given by the $(n-1)\times(n-1)$ submatrix of $H_{ij}(\x;\bm{1})$ by excluding its $n$-th row and $n$-th column and therefore is still strictly negative-definite. Hence, the sub-game is also $\bm{1}$-DSC and admits a unique PNE.

Now we show the second claim. Since we without loss of generality assume $0<c_1\leq c_2\leq \cdots \leq c_n$, from the first-order condition Eq. \eqref{eq:first_order_117} it follows that $y_1\geq y_2\geq \cdots \geq y_{n-1}$. Let $s=\sum_{i=1}^{n-1}y_i$, it suffices to show $s>2y_1$.

From the definition of PNE we have 
\begin{equation}\label{eq:169}
    \frac{\mu y_1}{(s+y_n)^{\gamma}+\alpha (s+y_n)^{\beta}}-c_1y_1^{\rho}=u_1(y_1,\y_{-1})\geq u_1(0,\y_{-1})=0,
\end{equation}
Since $1<\rho\leq 2$, Eq. \eqref{eq:169} implies 
    \begin{equation*}
        y_1\leq \left(\frac{\mu}{c_1((s+y_n)^{\gamma}+\alpha (s+y_n)^{\beta})}\right)^{1/(\rho-1)}<\left(\frac{\mu}{c_1 s^{\gamma}}\right)^{1/(\rho-1)}.
    \end{equation*}
    Hence, it is sufficient to show $s>2\left(\frac{\mu}{c_1 s^{\gamma}}\right)^{1/(\rho-1)}$, which is equivalent to
    \begin{equation}\label{eq:161}
        s>2^{\frac{\rho-1}{\rho+\gamma-1}}\cdot\left(\frac{\mu}{c_1}\right)^{\frac{1}{\rho+\gamma-1}}.
    \end{equation}
    We prove Eq. \eqref{eq:161} by contradiction. Suppose $s\leq2^{\frac{\rho-1}{\rho+\gamma-1}}\cdot\left(\frac{\mu}{c_1}\right)^{\frac{1}{\rho+\gamma-1}}$ and denote $s_0=s+y_n$. Since $y_n$ is a fixed constant and $\mu$ is sufficiently large, we have $s_0\leq3^{\frac{\rho-1}{\rho+\gamma-1}}\cdot\left(\frac{\mu}{c_1}\right)^{\frac{1}{\rho+\gamma-1}}$. Then from $y_1\geq y_2\geq \cdots \geq y_{n-1}$ we have 
    \begin{equation}\label{eq:181}
        y_{n-1} \leq \frac{s}{n-1}\leq \frac{2^{\frac{\rho-1}{\rho+\gamma-1}}}{n-1}\cdot\left(\frac{\mu}{c_1}\right)^{\frac{1}{\rho+\gamma-1}}.
    \end{equation}

    Since $(y_1,\cdots,y_{n-1})$ is the PNE of the sub-game, it also satisfies Eq. \eqref{eq:first_order_117} in Lemma \ref{lm:first_order_pne}. For player $n-1$ we have 
    \begin{align}\notag
        0&= \frac{\partial u_{n-1}}{\partial x_{n-1}}\Big|_{x_{n-1}=y_{n-1}} \\ \notag
        &=\frac{\mu}{s_0^{\gamma}+\alpha s_0^{\beta}} -\frac{\mu y_{n-1} (\gamma s_0^{\gamma-1}+\alpha\beta s_0^{\beta-1})}{(s_0^{\gamma}+\alpha s_0^{\beta})^2} -\rho c_{n-1} y_{n-1}^{\rho-1} \\ \notag
        &\geq \frac{ \mu}{s_0^{\gamma}+\alpha s_0^{\beta}} -\frac{\mu s_0( s_0^{\gamma-1}+\alpha_0^{\beta-1})}{(n-1)(s_0^{\gamma}+\alpha s_0^{\beta})^2}  -\rho c_{n-1} y_{n-1}^{\rho-1} &\mbox{since $\gamma,\beta\leq 1$}\\ \notag
        &\geq \frac{ \mu}{s_0^{\gamma}+\alpha s_0^{\beta}}\left(1-\frac{1}{n-1}\right) -\frac{\rho c_{n-1} \cdot 2^{\frac{(\rho-1)^2}{\rho+\gamma-1}}}{(n-1)^{\rho-1}}\cdot\left(\frac{\mu}{c_1}\right)^{\frac{\rho-1}{\rho+\gamma-1}} &\mbox{from Eq. \eqref{eq:181}}\\ \notag
        & \geq \frac{\mu}{s_0^{\gamma}(1+\alpha)}\left(\frac{n-2}{n-1}\right) -\frac{\rho c_{n-1}\cdot2^{\frac{(\rho-1)^2}{\rho+\gamma-1}}}{(n-1)^{\rho-1}c_1^{\frac{\rho-1}{\rho+\gamma-1}}}\cdot\mu^{\frac{\rho-1}{\rho+\gamma-1}} &\mbox{since $\beta\leq\gamma\leq 1$} \\\label{eq:178}
        & \geq \left[\frac{3^{\frac{-\gamma(\rho-1)}{\rho+\gamma-1}}c_{n-1}^{\frac{\gamma}{\rho+\gamma-1}}}{1+\alpha}\left(\frac{n-2}{n-1}\right) - \frac{\rho c_{n-1}\cdot2^{\frac{(\rho-1)^2}{\rho+\gamma-1}}}{(n-1)^{\rho-1}c_1^{\frac{\rho-1}{\rho+\gamma-1}}}\right] \cdot\mu^{\frac{\rho-1}{\rho+\gamma-1}} &\mbox{since $s_0\leq3^{\frac{\rho-1}{\rho+\gamma-1}} \left(\frac{\mu}{c_1}\right)^{\frac{1}{\rho+\gamma-1}}$} \\ \label{eq:194}
        &>0.
    \end{align}
    The last Eq. \eqref{eq:194} holds because when $n\rightarrow +\infty$, $1/n^{\rho-1}\rightarrow 0$ and therefore the expression in the brackets of Eq. \eqref{eq:178} is strictly positive when $n$ is large enough. Eq. \eqref{eq:194} draws a contradiction to the first-order equilibrium condition which completes the proof.

Next we prove the Third claim. Let $y'_n>y_n$ be another strategy of player $n$ and we show that $BR(y'_n)<BR(y_n)$. Define variable $s=\sum_{i=1}^{n-1} x_i$, and constants $s^*=BR(y_n)=\sum_{i=1}^{n-1} y_i, s'=BR(y'_n)=\sum_{i=1}^{n-1} y'_i$. From Lemma \ref{lm:first_order_pne} we know  $(y_1, \cdots, y_{n-1})$ is the unique solution for the following equalities for all $i=1,2,\cdots,n-1$: 
\begin{equation}\label{eq:165}
        \frac{\mu}{(s+y_n)^{\gamma}+\alpha (s+y_n)^{\beta}} -\frac{\mu x_i\cdot[\gamma (s+y_n)^{\gamma-1}+\alpha\beta (s+y_n)^{\beta-1}]}{[(s+y_n)^{\gamma}+\alpha (s+y_n)^{\beta}]^2} -c_i\rho x_i^{\rho-1}=0. 
\end{equation}
Similarly, $(y'_1, \cdots, y'_{n-1})$ is the unique solution for the following equalities for all $i=1,2,\cdots,n-1$: 
\begin{equation}\label{eq:170}
        \frac{\mu}{(s+y'_n)^{\gamma}+\alpha (s+y'_n)^{\beta}} -\frac{\mu x_i\cdot[\gamma (s+y'_n)^{\gamma-1}+\alpha\beta (s+y'_n)^{\beta-1}]}{[(s+y'_n)^{\gamma}+\alpha (s+y'_n)^{\beta}]^2} -c_i\rho x_i^{\rho-1}=0. 
\end{equation}
Next we prove the claim by contradiction. If the claim is not true, i.e., $s'\geq s^*$. Then there must exist $j\in[n-1]$ such that $y'_j\geq y_j$. And also from the second claim of Lemma \ref{lm:BR_property} we have $s^*>2y_j, s'>2y'_j$ when $n$ is large. As a result, it holds that 

\begin{align*}
   & c_j\rho y_j^{\rho-1} \\ =&\frac{\mu}{(s^*+y_n)^{\gamma}+\alpha (s^*+y_n)^{\beta}} -\frac{\mu y_j\cdot[\gamma (s^*+y_n)^{\gamma-1}+\alpha\beta (s^*+y_n)^{\beta-1}]}{[(s^*+y_n)^{\gamma}+\alpha (s^*+y_n)^{\beta}]^2} & \mbox{from Eq. \eqref{eq:165}}\\  
   \geq & \frac{\mu}{(s'+y_n)^{\gamma}+\alpha (s'+y_n)^{\beta}} -\frac{\mu y_j\cdot[\gamma (s'+y_n)^{\gamma-1}+\alpha\beta (s'+y_n)^{\beta-1}]}{[(s'+y_n)^{\gamma}+\alpha (s'+y_n)^{\beta}]^2}  & \mbox{from Lemma \ref{lm:decreasing_f}}\\
   \geq & \frac{\mu}{(s'+y_n)^{\gamma}+\alpha (s'+y_n)^{\beta}} -\frac{\mu y'_j\cdot[\gamma (s'+y_n)^{\gamma-1}+\alpha\beta (s'+y_n)^{\beta-1}]}{[(s'+y_n)^{\gamma}+\alpha (s'+y_n)^{\beta}]^2} & \mbox{since $y'_j\geq y_j$}\\
    > & \frac{\mu}{(s'+y'_n)^{\gamma}+\alpha (s'+y'_n)^{\beta}} -\frac{\mu y'_j\cdot[\gamma (s'+y'_n)^{\gamma-1}+\alpha\beta (s'+y'_n)^{\beta-1}]}{[(s'+y'_n)^{\gamma}+\alpha (s'+y'_n)^{\beta}]^2}  & \mbox{from Lemma \ref{lm:decreasing_f}}\\
   = & c_j\rho (y'_j)^{\rho-1}. & \mbox{from Eq. \eqref{eq:170}}
\end{align*}

Therefore, from $\rho>1$ we obtain $y_j>y'_j$, which contradicts the fact that $y'_j\geq y_j$. Hence, our third claim in Lemma \ref{lm:BR_property} holds.

\end{proof}

Now we are prepared to prove Theorem \ref{lm:order_c}. For simplicity we omit the superscript $*$ and simply use $\x^*=(x_1,\cdots, x_n)$ to refer to $\G$'s PNE. Let $s=\sum_{i=1}^n x_i$. 

\begin{proof}\textbf{[of Theorem \ref{lm:order_c}]}
\begin{enumerate}
    \item Claim-1: For any $1\leq i<j\leq n$, if $x_i<x_j$, we have 
        \begin{align*}
            0 &= \left(\frac{\partial u_i}{\partial x_i}-\frac{\partial u_j}{\partial x_j}\right)\bigg|_{x=x^*} \\ 
            &= \frac{\mu (x_j- x_i)\cdot[\gamma s^{\gamma-1}+\alpha\beta s^{\beta-1}]}{[s^{\gamma}+\alpha s^{\beta}]^2} + \rho (c_j x_j^{\rho-1}-c_i x_i^{\rho-1}) > 0,
        \end{align*}
        which is a contradiction. As a result, for any $1\leq i<j\leq n$ it holds that $x_i\geq x_j$. Moreover, from Eq. \eqref{eq:first_order_117} we have 
    \begin{equation}\label{eq:156}
        c_i\rho x_i^{\rho-1}=\frac{\mu }{s^{\gamma}+\alpha s^{\beta}}-\frac{\mu x_i\cdot[\gamma s^{\gamma-1}+\alpha\beta s^{\beta-1}]}{[s^{\gamma}+\alpha s^{\beta}]^2}.
    \end{equation}
    Substitute Eq. \eqref{eq:156} into Eq. \eqref{eq:u_118} we obtain 
        \begin{equation}\label{eq:u_160}
            u_i(x_i,\x_{-i}) = \frac{x_{i}\cdot \mu}{s^{\gamma}+\alpha s^{\beta}}\left(1-\frac{1}{\rho}\right) + \frac{\mu x_i^2\cdot[\gamma s^{\gamma-1}+\alpha\beta s^{\beta-1}]}{\rho[s^{\gamma}+\alpha s^{\beta}]^2}.
        \end{equation}
    Therefore, for any $1\leq i<j\leq n$ we have  
    \begin{small}
    \begin{equation*}
        u_i(x_i,\x_{-i})-u_j(x_j,\x_{-j}) = (x_{i}-x_j)\cdot\left(\frac{\mu}{s^{\gamma}+\alpha s^{\beta}}\cdot\frac{\rho-1}{\rho} + \frac{\mu (x_i+x_j)\cdot[\gamma s^{\gamma-1}+\alpha\beta s^{\beta-1}]}{\rho[s^{\gamma}+\alpha s^{\beta}]^2}\right)\geq 0,
    \end{equation*}
    \end{small}
    because $x_i\geq x_j$ and $\rho\geq 1$.
    
    \item Claim-2: Let $s_0=\sum_{i=1}^n \tilde{x}_i$. From Eq. \eqref{eq:first_order_117} we have 
    \begin{align}\label{eq:176}
        & \frac{\mu}{s^{\gamma}+\alpha s^{\beta}} -\frac{\mu x_n\cdot[\gamma s^{\gamma-1}+\alpha\beta s^{\beta-1}]}{[s^{\gamma}+\alpha s^{\beta}]^2} -c_n\rho x_n^{\rho-1}=0, \\ 
        & \frac{\mu}{s_0^{\gamma}+\alpha s_0^{\beta}} -\frac{\mu \tilde{x}_n\cdot[\gamma s_0^{\gamma-1}+\alpha\beta s_0^{\beta-1}]}{[s_0^{\gamma}+\alpha s_0^{\beta}]^2} -\tilde{c}_n\rho \tilde{x}_n^{\rho-1}=0.
    \end{align}
    First we show $\tilde{x}_n<x_n$. From Lemma \ref{lm:BR_property} we know if we define $(y_1,\cdots,y_{n-1})$ as the best-response mapping of player from $1$ to $n-1$ given player $n$'s pure strategy $y_n$, the function $F(y_n)=\sum_{i=1}^{n-1} y_i$ is well-defined and strictly decreasing. Note that $s=F(x_n)+x_n$, the LHS of Eq. \eqref{eq:176} can be rewritten as 
    \begin{align*}
        G(x_n;c_n)&\triangleq\frac{\mu}{(F(x_n)+x_n)^{\gamma}+\alpha (F(x_n)+x_n)^{\beta}} \\ &~~~~-\frac{\mu x_n\cdot[\gamma (F(x_n)+x_n)^{\gamma-1}+\alpha\beta (F(x_n)+x_n)^{\beta-1}]}{[(F(x_n)+x_n)^{\gamma}+\alpha (F(x_n)+x_n)^{\beta}]^2} -c_n\rho x_n^{\rho-1}.
    \end{align*}
    From the existence and uniqueness of $\G$'s PNE we know that $x=x_n$ is the unique root of equation $G(x;c_n)=0$ and $x=\tilde{x}_n$ is the unique root of equation $G(x;\tilde{c}_n)=0$. 
    
    Next, we show $\tilde{x}_n<x_n$. On the one hand, since $F(\cdot)>0$, it holds that $G(0;\tilde{c}_n)>0$. On the other hand, since $\tilde{c}_n>c_n$, it holds that 
    \begin{equation}
        G(x_n;\tilde{c}_n)=G(x_n;c_n)+c_n\rho x_n^{\rho-1}-\tilde{c}_n\rho x_n^{\rho-1}<0.
    \end{equation}
    As a result, the continuous function $G(x;\tilde{c}_n)$ has a root within the interval $(0, x_n)$. Since $\tilde{x}_n$ is its unique root, we obtain $\tilde{x}_n<x_n$. 
    
    From Eq. \eqref{eq:u_160} we have 
    \begin{align}\label{eq:192}
        & u_n(x_n,\x_{-n}) = \frac{x_n\cdot \mu}{s^{\gamma}+\alpha s^{\beta}}\left(1-\frac{1}{\rho}\right) + \frac{\mu x_n^2\cdot[\gamma s^{\gamma-1}+\alpha\beta s^{\beta-1}]}{\rho[s^{\gamma}+\alpha s^{\beta}]^2},\\ \label{eq:193}
        & u_n(\tilde{x}_n,\tilde{\x}_{-n}) = \frac{\tilde{x}_n\cdot \mu}{s_0^{\gamma}+\alpha s_0^{\beta}}\left(1-\frac{1}{\rho}\right) + \frac{\mu \tilde{x}_n^2\cdot[\gamma s_0^{\gamma-1}+\alpha\beta s_0^{\beta-1}]}{\rho[s_0^{\gamma}+\alpha s_0^{\beta}]^2}.
    \end{align}
    From Eq. \eqref{eq:192} and \eqref{eq:193}, a sufficient condition for $u_n(x_n,\x_{-n})>u_n(\tilde{x}_n,\tilde{\x}_{-n})$ to hold is the following two inequalities:
    \begin{align}\label{eq:197}
        & \frac{x_n}{s^{\gamma}+\alpha s^{\beta}} > \frac{\tilde{x}_n}{s_0^{\gamma}+\alpha s_0^{\beta}},\\ \label{eq:198}
        & \left(\frac{x_n}{s^{\gamma}+\alpha s^{\beta}}\right)^2\cdot[\gamma s^{\gamma-1}+\alpha\beta s^{\beta-1}] > \left(\frac{\tilde{x}_n}{s_0^{\gamma}+\alpha s_0^{\beta}}\right)^2\cdot[\gamma s_0^{\gamma-1}+\alpha\beta s_0^{\beta-1}].
    \end{align}
    Next we prove Eq. \eqref{eq:197} and \eqref{eq:198}. Let $A=F(x_n), B=F(\tilde{x}_n)$ and from Lemma \ref{lm:BR_property} we have $0<A<B$. Then it holds that
    \begin{align*}
        \frac{x_n}{s^{\gamma}+\alpha s^{\beta}} =& \frac{x_n}{(A+x_n)^{\gamma}+\alpha (A+x_n)^{\beta}} \\ 
        > & \frac{x_n}{(B+x_n)^{\gamma}+\alpha (B+x_n)^{\beta}} & \mbox{since $A< B$} \\
        > & \frac{\tilde{x}_n}{(B+\tilde{x}_n)^{\gamma}+\alpha (B+\tilde{x}_n)^{\beta}} & \mbox{from Lemma \ref{lm:increasing_h}} \\= &\frac{\tilde{x}_n}{s_0^{\gamma}+\alpha s_0^{\beta}}.
    \end{align*}
    Similarly, we can obtain 
    \begin{align*}
        &\left(\frac{x_n}{s^{\gamma}+\alpha s^{\beta}}\right)^2\cdot[\gamma s^{\gamma-1}+\alpha\beta s^{\beta-1}] \\ =& \frac{x_n^2(\gamma(A+x_n)^{\gamma-1}+\alpha\beta(A+x_n)^{\beta-1})}{[(A+x_n)^{\gamma}+\alpha (A+x_n)^{\beta}]^2} \\ 
        \geq& \frac{x_n^2(\gamma(B+x_n)^{\gamma-1}+\alpha\beta(B+x_n)^{\beta-1})}{[(B+x_n)^{\gamma}+\alpha (B+x_n)^{\beta}]^2} & \mbox{since $A\leq B$, and $\beta,\gamma\leq 1$}\\ 
        > & \frac{\tilde{x}_n^2(\gamma(B+x_n)^{\gamma-1}+\alpha\beta(B+\tilde{x}_n)^{\beta-1})}{[(B+\tilde{x}_n)^{\gamma}+\alpha (B+\tilde{x}_n)^{\beta}]^2} & \mbox{from Lemma \ref{lm:increasing_h}} \\ 
        = &\left(\frac{\tilde{x}_n}{s_0^{\gamma}+\alpha s_0^{\beta}}\right)^2\cdot[\gamma s_0^{\gamma-1}+\alpha\beta s_0^{\beta-1}].
    \end{align*}
    Hence, we conclude $u_n(x_n,\x_{-n})>u_n(\tilde{x}_n,\tilde{\x}_{-n})$.
    \item Claim-3: Define variable $s=\sum_{i=1}^n x_i$, and constants $s^*=\sum_{i=1}^n x^*_i, s'=\sum_{i=1}^n x'_i$. From Lemma \ref{lm:first_order_pne} we know  $(x^*_1, \cdots, x^*_n)$ is the unique solution of the following equation system
\begin{equation}\label{eq:423}
        \frac{\mu}{s^{\gamma}+\alpha s^{\beta}} -\frac{\mu x_i\cdot[\gamma s^{\gamma-1}+\alpha\beta s^{\beta-1}]}{[s^{\gamma}+\alpha s^{\beta}]^2} -c_i\rho x_i^{\rho-1}=0, i=1,2,\cdots,n.
\end{equation}
Obviously, for any $i$, $x_i>0,x'_i>0$ must hold because the LHS of Eq. \eqref{eq:423} is strictly positive when $x_i=0$. Denote the constant $\epsilon=x'_{n+1}>0$, then similarly we have that $(x'_1, \cdots, x'_n)$ is the unique solution of 
\begin{equation}\label{eq:427}
        \frac{\mu}{(s+\epsilon)^{\gamma}+\alpha (s+\epsilon)^{\beta}} -\frac{\mu x_i\cdot[\gamma (s+\epsilon)^{\gamma-1}+\alpha\beta (s+\epsilon)^{\beta-1}]}{[(s+\epsilon)^{\gamma}+\alpha (s+\epsilon)^{\beta}]^2} -c_i\rho x_i^{\rho-1}=0, i=1,2,\cdots,n.
\end{equation}
Next we prove the claim by contradiction. Suppose Eq. \eqref{eq:414} is not true, i.e., $s'\geq s^*$. Then there must exist $j\in[n]$ such that $x'_j\geq x^*_j$. And also from Lemma \ref{lm:max_x} we have $s^*>2x^*_j, s'>2x'_j$ when $n$ is large. As a result, it holds that
\begin{align*}
   & c_j\rho (x^*_j)^{\rho-1} \\=&\frac{\mu}{(s^*)^{\gamma}+\alpha (s^*)^{\beta}} -\frac{\mu x^*_j\cdot[\gamma (s^*)^{\gamma-1}+\alpha\beta (s^*)^{\beta-1}]}{[(s^*)^{\gamma}+\alpha (s^*)^{\beta}]^2} & \mbox{from Eq. \eqref{eq:423}}\\  
   \geq & \frac{\mu}{(s')^{\gamma}+\alpha (s')^{\beta}} -\frac{\mu x^*_j\cdot[\gamma (s')^{\gamma-1}+\alpha\beta (s')^{\beta-1}]}{[(s')^{\gamma}+\alpha (s')^{\beta}]^2} & \mbox{from Lemma \ref{lm:decreasing_f}}\\
   \geq & \frac{\mu}{(s')^{\gamma}+\alpha (s')^{\beta}} -\frac{\mu x'_j\cdot[\gamma (s')^{\gamma-1}+\alpha\beta (s')^{\beta-1}]}{[(s')^{\gamma}+\alpha (s')^{\beta}]^2} & \mbox{since $x'_j\geq x^*_j$}\\
   > & \frac{\mu}{(s'+\epsilon)^{\gamma}+\alpha (s'+\epsilon)^{\beta}} -\frac{\mu x'_j\cdot[\gamma (s'+\epsilon)^{\gamma-1}+\alpha\beta (s'+\epsilon)^{\beta-1}]}{[(s'+\epsilon)^{\gamma}+\alpha (s'+\epsilon)^{\beta}]^2} & \mbox{from Lemma \ref{lm:decreasing_f}}\\ 
   = & c_j\rho (x'_j)^{\rho-1}. & \mbox{from Eq. \eqref{eq:427}}
\end{align*}

Therefore, from $\rho>1$ we obtain $x^*_j>x'_j$, which contradicts the fact that $x'_j\geq x^*_j$. Hence, Eq. \eqref{eq:414} holds.
\end{enumerate}
\end{proof}

\section{Proof of Proposition \ref{lm:cost_bound}}\label{app:proof_lm_cost_bound}
\begin{proof}
From Eq. \eqref{eq:first_order_117} in Lemma \ref{lm:first_order_pne} it holds that
    \begin{align*}
        \frac{c_i\rho}{\mu}\cdot x_i^{\rho-1}=\frac{1}{s^{\gamma}+\alpha s^{\beta}}-\frac{x_i\cdot[\gamma s^{\gamma-1}+\alpha\beta s^{\beta-1}]}{[s^{\gamma}+\alpha s^{\beta}]^2}<\frac{1}{s^{\gamma}+\alpha s^{\beta}}
        .
    \end{align*}
Hence, $$c_i x_i^{\rho}< \frac{1}{\rho}\cdot \frac{x_i\mu}{s^{\gamma}+\alpha s^{\beta}}.$$ On the other hand, from Theorem \ref{lm:order_c} we know $x_1\geq x_2\cdots\geq x_n$ and therefore $x_i\geq \frac{s}{i}$. Therefore, for any $1<i\leq n$ we have 
    \begin{align}\notag
        \frac{c_i\rho}{\mu}\cdot x_i^{\rho-1}&=\frac{1}{s^{\gamma}+\alpha s^{\beta}}-\frac{x_i\cdot[\gamma s^{\gamma-1}+\alpha\beta s^{\beta-1}]}{[s^{\gamma}+\alpha s^{\beta}]^2} \\\notag
        &\geq \frac{1}{s^{\gamma}+\alpha s^{\beta}}-\frac{x_i\cdot[s^{\gamma-1}+\alpha s^{\beta-1}]}{[s^{\gamma}+\alpha s^{\beta}]^2}&\mbox{since $\beta, \gamma\leq 1$}\\\notag
        &>\frac{1}{s^{\gamma}+\alpha s^{\beta}}-\frac{s\cdot[ s^{\gamma-1}+\alpha s^{\beta-1}]}{i\cdot[s^{\gamma}+\alpha s^{\beta}]^2}\\ \label{eq:256}
        &=\frac{i-1}{i}\cdot \frac{1}{s^{\gamma}+\alpha s^{\beta}}\geq \frac{1}{2(s^{\gamma}+\alpha s^{\beta})}.
    \end{align}

For $i=1$, Lemma \ref{lm:max_x} suggests that $x_1>\frac{s}{2}$ holds for sufficiently large $n$ and $\mu$. In this case, Eq. \eqref{eq:256} also holds for $i=1$. As a result, we obtain that for any $i$, 
\begin{equation*}
    \frac{1}{2\rho}\cdot\frac{x_i\cdot \mu}{s^{\gamma}+\alpha s^{\beta}}<c_i x_i^{\rho} < \frac{1}{\rho}\cdot \frac{x_i\cdot \mu}{s^{\gamma}+\alpha s^{\beta}}.
\end{equation*}


\end{proof}

\section{Proof of Theorem \ref{thm:asym_s} }\label{app:proof_thm2}

As a preparation, we need the following technical lemma:

\begin{lemma}\label{lm:max_x}
If $n, \mu$ are sufficiently large and $1<\rho\leq 2$, it holds that $\max_{i=1}^n\{x_i\} < \frac{s}{2}$.
\end{lemma}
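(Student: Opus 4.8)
The statement follows quickly by combining two facts already established in the excerpt: the monotone ordering of equilibrium actions (the first part of Theorem~\ref{lm:order_c}) and the ``no single player dominates the sub-game'' bound (the second claim of Lemma~\ref{lm:BR_property}). First, since the creators are indexed so that $0<c_1\leq\cdots\leq c_n$, Theorem~\ref{lm:order_c} gives $x_1\geq x_2\geq\cdots\geq x_n$ for the unique PNE $\x^*=(x_1,\dots,x_n)$, so that $\max_{i=1}^{n} x_i=\max_{i=1}^{n-1} x_i=x_1$.

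Next I would freeze creator $n$ at her equilibrium action $x_n$. By the definition of a PNE, each of $x_1,\dots,x_{n-1}$ remains a best response against the others, hence $(x_1,\dots,x_{n-1})$ is a PNE of the $(n-1)$-player sub-game with creator $n$'s action held fixed at $x_n$; by the first claim of Lemma~\ref{lm:BR_property} that sub-game PNE is unique, so it is exactly $(x_1,\dots,x_{n-1})$. Applying the second claim of Lemma~\ref{lm:BR_property}, which holds precisely under $1<\rho\leq 2$ and $n,\mu$ large, yields $\max_{i=1}^{n-1}x_i<\tfrac12\sum_{i=1}^{n-1}x_i$. Chaining the two observations,
\begin{equation*}
\max_{i=1}^{n} x_i \;=\; x_1 \;=\; \max_{i=1}^{n-1} x_i \;<\; \frac12\sum_{i=1}^{n-1} x_i \;\leq\; \frac12\sum_{i=1}^{n} x_i \;=\; \frac{s}{2},
\end{equation*}
which is exactly the claim. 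One should check there is no circular dependency: the proofs of Theorem~\ref{lm:order_c}(1) and of Lemma~\ref{lm:BR_property} use only the first-order characterization (Lemma~\ref{lm:first_order_pne}) and the monotonicity lemmas~\ref{lm:decreasing_f}--\ref{lm:increasing_h}, never Lemma~\ref{lm:max_x} itself, so the argument is sound.

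If one prefers a self-contained route not passing through the sub-game, it essentially reproves the second claim of Lemma~\ref{lm:BR_property} directly for the $n$-player game: from $u_1(x_1,\x_{-1})\geq u_1(0,\x_{-1})=0$ one gets $x_1<(\mu/(c_1 s^{\gamma}))^{1/(\rho-1)}$, so it suffices to establish the lower bound $s>2^{(\rho-1)/(\rho+\gamma-1)}(\mu/c_1)^{1/(\rho+\gamma-1)}$; assuming the contrary forces $x_n\leq s/n$ to be tiny, and substituting the estimates $\gamma s^{\gamma-1}+\alpha\beta s^{\beta-1}\leq s^{-1}(s^{\gamma}+\alpha s^{\beta})$ and $x_n\leq s/n$ into the first-order condition $\partial u_n/\partial x_n=0$ makes its right-hand side strictly positive for $n$ large (the marginal-traffic term, of order $\mu^{(\rho-1)/(\rho+\gamma-1)}$, dominates the marginal-cost term, which carries an extra factor $n^{-(\rho-1)}$), a contradiction. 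The only mildly delicate point in either route is this asymptotic bookkeeping---matching the $\mu$-powers and checking the $n^{-(\rho-1)}$ decay---but it is already performed inside Lemma~\ref{lm:BR_property}, so once that lemma is granted I do not expect a genuine obstacle here.
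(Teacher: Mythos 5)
Your proposal is correct, and your ``self-contained route'' is exactly the paper's proof of Lemma \ref{lm:max_x}: order the $x_i$ via the first-order condition, bound $x_1$ using $u_1(x_1,\x_{-1})\geq u_1(0,\x_{-1})=0$, and derive a contradiction from $\partial u_n/\partial x_n=0$ under the assumption $s\leq 2^{(\rho-1)/(\rho+\gamma-1)}(\mu/c_1)^{1/(\rho+\gamma-1)}$. One caveat on your primary route: the proof of the second claim of Lemma \ref{lm:BR_property} explicitly treats $y_n$ as a fixed constant when taking $\mu$ large, whereas your application sets $y_n=x_n$, which itself grows with $\mu$; this is repairable (under the contradiction hypothesis one has $x_n\leq s/(n-1)$, so $s_0=s+x_n$ still obeys the needed bound for large $n$), but it is not a literal invocation of the lemma as stated, so the direct argument is the cleaner one.
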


\begin{proof}
Since we without loss of generality assume $0<c_1\leq c_2\leq \cdots \leq c_n$, from the first-order condition Eq. \eqref{eq:first_order_117} it follows that $x_1\geq x_2\geq \cdots \geq x_n$. Therefore it suffices to show $s>2x_1$.

From the definition of PNE we have 
\begin{equation}\label{eq:150}
    \frac{\mu x_1}{s^{\gamma}+\alpha s^{\beta}}-c_1x_1^{\rho}=u_1(x_1,\x_{-1})\geq u_1(0,\x_{-1})=0,
\end{equation}
Since $1<\rho\leq 2$, Eq. \eqref{eq:150} implies 
    \begin{equation*}
        x_1\leq \left(\frac{\mu}{c_1(s^{\gamma}+\alpha s^{\beta})}\right)^{1/(\rho-1)}.
    \end{equation*}
    Hence, it is sufficient to show $s>2\left(\frac{\mu}{c_1 s^{\gamma}}\right)^{1/(\rho-1)}$, which is equivalent to
    \begin{equation}\label{eq:162}
        s>2^{\frac{\rho-1}{\rho+\gamma-1}}\cdot\left(\frac{\mu}{c_1}\right)^{\frac{1}{\rho+\gamma-1}},
    \end{equation}
    We prove Eq. \eqref{eq:162} by contradiction. Suppose $s\leq2^{\frac{\rho-1}{\rho+\gamma-1}}\cdot\left(\frac{\mu}{c_1}\right)^{\frac{1}{\rho+\gamma-1}}$, then from $x_1\geq x_2\geq \cdots \geq x_n$ we have 
    \begin{equation}\label{eq:166}
        x_n \leq \frac{s}{n}\leq \frac{2^{\frac{\rho-1}{\rho+\gamma-1}}}{n}\cdot\left(\frac{\mu}{c_1}\right)^{\frac{1}{\rho+\gamma-1}}.
    \end{equation}

    Take $i=n$ in Eq. \eqref{eq:first_order_117} we have 
    \begin{align}\notag
        0&= \frac{\partial u_n}{\partial x_n}\Big|_{x_n=x_n^*} \\ \notag
        &=\frac{\mu}{s^{\gamma}+\alpha s^{\beta}} -\frac{\mu x_n (\gamma s^{\gamma-1}+\alpha\beta s^{\beta-1})}{(s^{\gamma}+\alpha s^{\beta})^2} -\rho c_n x_n^{\rho-1} \\ \notag
        &\geq \frac{ \mu}{s^{\gamma}+\alpha s^{\beta}} -\frac{\mu s( s^{\gamma-1}+\alpha s^{\beta-1})}{n(s^{\gamma}+\alpha s^{\beta})^2}  -\rho c_n x_n^{\rho-1} &\mbox{since $\gamma,\beta\leq 1$}\\ \notag
        &\geq \frac{ \mu}{s^{\gamma}+\alpha s^{\beta}}\left(1-\frac{1}{n}\right) -\frac{\rho c_n \cdot 2^{\frac{(\rho-1)^2}{\rho+\gamma-1}}}{n^{\rho-1}}\cdot\left(\frac{\mu}{c_1}\right)^{\frac{\rho-1}{\rho+\gamma-1}} &\mbox{from Eq. \eqref{eq:166}}\\ \notag
        & \geq \frac{\mu}{s^{\gamma}(1+\alpha)}\left(1-\frac{1}{n}\right) -\frac{\rho c_n\cdot2^{\frac{(\rho-1)^2}{\rho+\gamma-1}}}{n^{\rho-1}c_1^{\frac{\rho-1}{\rho+\gamma-1}}}\cdot\mu^{\frac{\rho-1}{\rho+\gamma-1}} &\mbox{since $\beta\leq\gamma\leq 1$} \\\label{eq:177}
        & \geq \left[\frac{2^{\frac{-\gamma(\rho-1)}{\rho+\gamma-1}}c_n^{\frac{\gamma}{\rho+\gamma-1}}}{1+\alpha}\left(1-\frac{1}{n}\right) - \frac{\rho c_n\cdot2^{\frac{(\rho-1)^2}{\rho+\gamma-1}}}{n^{\rho-1}c_1^{\frac{\rho-1}{\rho+\gamma-1}}}\right] \cdot\mu^{\frac{\rho-1}{\rho+\gamma-1}} &\mbox{since $s\leq2^{\frac{\rho-1}{\rho+\gamma-1}}\cdot\left(\frac{\mu}{c_1}\right)^{\frac{1}{\rho+\gamma-1}}$} \\ \label{eq:179}
        &>0.
    \end{align}
    The last Eq. \eqref{eq:179} holds because when $n\rightarrow +\infty$, $1/n^{\rho-1}\rightarrow 0$ and therefore the expression in the brackets of Eq. \eqref{eq:177} is strictly positive when $n$ is large enough. Eq. \eqref{eq:179} draws a contradiction to the first-order equilibrium condition which completes the proof.

\end{proof}

Now we are ready to prove Theorem \ref{thm:asym_s}. Recall that $\x^*=(x^*_1,\cdots, x^*_n)$ is the unique PNE of the game \basegame{}$(\alpha,\beta,\gamma,\mu,\rho, \{c_i\}_{i=1}^n)$, and $s^*=\sum_{i=1}^n x^*_i$ is the total content creation among all creators at the PNE. First of all, we argue that $(0,\cdots,0)$ is not an equilibrium because when $\mu$ is larger than $\max_{i=1}^n\{c_i\}$, we have $\frac{\partial u_i}{\partial x_i}|_{x_i=0}>0$, meaning each creator $i$ can increase $x_i$ and gain a higher utility. To simplify the notations in the following analysis, we omit the superscript $*$ when there is no ambiguity. 
\begin{proof}\textbf{[of Theorem \ref{thm:asym_s}]}

From the first-order condition Eq. \eqref{eq:first_order_117}

\begin{equation}
    \frac{\partial u_i}{\partial x_i}=\frac{\mu}{s^{\gamma}+\alpha s^{\beta}} -\frac{\mu x_i(\gamma s^{\gamma-1}+\alpha\beta s^{\beta-1})}{(s^{\gamma}+\alpha s^{\beta})^2} -c_i\rho x_i^{\rho-1}=0, \forall i\in[n],
\end{equation}
we have 
\begin{equation}\label{eq:332}
    \frac{\mu}{s^{\gamma}+\alpha s^{\beta}}-c_i\rho x_i^{\rho-1}>0.
\end{equation}
On the other hand, from Lemma \ref{lm:max_x} and $\gamma,\beta\leq 1$ we have 
\begin{align}\notag
    0 &= \frac{\mu}{s^{\gamma}+\alpha s^{\beta}} -\frac{\mu x_i(\gamma s^{\gamma-1}+\alpha\beta s^{\beta-1})}{(s^{\gamma}+\alpha s^{\beta})^2} -c_i\rho x_i^{\rho-1} \\ \notag
     &> \frac{\mu}{s^{\gamma}+\alpha s^{\beta}} -\frac{\mu s(s^{\gamma-1}+\alpha s^{\beta-1})}{2(s^{\gamma}+\alpha s^{\beta})^2} -c_i\rho x_i^{\rho-1} \\\label{eq:204}
     & = \frac{\mu}{2(s^{\gamma}+\alpha s^{\beta})} -c_i\rho x_i^{\rho-1}.
\end{align}

Combining Eq. \eqref{eq:332} and \eqref{eq:204} we have

\begin{equation}\label{eq:210}
   \left(\frac{\mu}{2c_i\rho(s^{\gamma}+\alpha s^{\beta})}\right)^{1/(\rho-1)} < x_i <\left(\frac{\mu}{c_i\rho(s^{\gamma}+\alpha s^{\beta})}\right)^{1/(\rho-1)}.
\end{equation}
Taking summation over both sides of Eq. \eqref{eq:210} for $1\leq i\leq n$ and then taking exponential to $(\rho-1)$ we obtain 
\begin{equation}\label{eq:214}
   \frac{\mu}{2\rho(s^{\gamma}+\alpha s^{\beta})} < \frac{s^{\rho-1}}{\|\c^{-1}\|_{1/(\rho-1)}} <\frac{\mu}{\rho(s^{\gamma}+\alpha s^{\beta})},
\end{equation}
where $\|\c^{-1}\|_{1/(\rho-1)}=\left(\sum_{i=1}^n c_i^{-1/(\rho-1)}\right)^{\rho-1}$ is the $L_{1/(\rho-1)}$-norm of $\c^{-1}=(c_1^{-1},\cdots,c_n^{-1})$. Hence, we obtain 
\begin{equation*}
    \frac{\mu}{2\rho}\|\c^{-1}\|_{1/(\rho-1)}<s^{\gamma+\rho-1}+\alpha s^{\beta+\rho-1}<\frac{\mu}{\rho}\|\c^{-1}\|_{1/(\rho-1)}.
\end{equation*}
When $\beta\leq\gamma$, we also have 
\begin{equation*}
    s^{\gamma+\rho-1}<s^{\gamma+\rho-1}+\alpha s^{\beta+\rho-1}<(1+\alpha)s^{\gamma+\rho-1}.
\end{equation*}
Therefore, we conclude that 


\begin{equation}\label{eq:415}
(2(1+\alpha)\rho)^{-1}\mu \|\c^{-1}\|_{\frac{1}{\rho-1}}<s^{\gamma+\rho-1} < \rho^{-1}\mu\|\c^{-1}\|_{\frac{1}{\rho-1}},
\end{equation}
which yields Eq. \eqref{eq:s_order} with the constant $C_{\rho}=\rho^{-1}$.

\end{proof}

\section{Proof of Theorem \ref{thm:non_exist_PNE}}\label{app:non_exist_PNE}

\begin{proof}
Consider an \ingame{} instance \ingame{}$(\bm{\alpha},\bm{\beta},\bm{\gamma},\bm{\mu},\{c_i\}_{i=1}^n)$ with $n=K=2$, where $\bm{\alpha}=(0.25, 0.25), \bm{\beta}=(0.5, 0.5), \bm{\gamma}=(1.0,1.0), \bm{\mu}=(3.0, 2.0)$, and the cost functions for two players are given by
$$c_1(\x_1)=7(x_{11}+x_{12})^2, c_2(\x_2)=7(x_{21}+x_{22})^2.$$
Next we show that the PNE of \ingame{}$(\bm{\alpha},\bm{\beta},\bm{\gamma},\bm{\mu},\{c_i\}_{i=1}^n)$ does not exist. Suppose it has an PNE $\y=(\y_1, \y_2)$, we enumerate all the possibilities and draw contradictions accordingly. First of all, by definition $(\bot,\bot)$ cannot be an PNE as both players get negative utility at $(\bot,\bot)$. In this case, either player can change her strategy to $\y_i=(0, 0)$ and increase her utility to zero. Since our constructed instance is symmetric, we only need to further exclude the possibility of the following two types of PNEs:
\begin{enumerate}
    \item If $\y_1\neq\bot, \y_2\neq \bot$, \ingame{} degenerates to \exgame{} with the same parameters but $\bm{\alpha}=(0, 0)$. According to Theorem \ref{thm:monotone_G}, such \exgame{} instance has a unique PNE. Using Algorithm \ref{alg:mmd} we can computationally pin down its PNE: $\y_1=(x_{11},x_{12})=\y_2=(x_{21},x_{22})=(0.179, 0.120)$. And the utilities for both players at $(\y_1,\y_2)$ is 
    $$u_1(\y_1,\y_2)=\frac{3x_{11}}{x_{11}+x_{21}}+\frac{2x_{12}}{x_{12}+x_{22}}-7(x_{11}+x_{12})^{2}=1.875.$$ 
    If we let the first player change her strategy from $(0.179, 0.120)$ to $\bot$, her utility will be 
    $$u_1(\bot,\y_2)=\frac{3\times0.25x_{21}^{0.5}}{x_{21}+0.25x_{21}^{0.5}}+\frac{2\times0.25x_{22}^{0.5}}{x_{22}+0.25x_{22}^{0.5}}=1.953>1.875=u_1(\y_1,\y_2).$$
    As a result, $((0.179, 0.120), (0.179, 0.120))$ is not a PNE of \ingame{} and thus \ingame{} cannot have any PNE satisfying $\y_1\neq\bot, \y_2\neq \bot$.
    \item if \ingame{} has a PNE of the form $(\bot, (x_{21}, x_{22}))$, then we have 
    \begin{align*}
       (x_{21}, x_{22}) &=\arg\max_{\y_2}\{u_2(\y_2,\bot)\} \\
    & =\arg\max_{(x_{21}, x_{22})\in\RR_{\geq 0}^2}\{\frac{x_{21}\mu_1}{x_{21}+\alpha x_{21}^{\beta}}+\frac{x_{22}\mu_2}{x_{22}+\alpha x_{22}^{\beta}}-c_2(x_{21}+x_{22})^{\rho}\}. 
    \end{align*}
    By plugging in the game parameters and solving the RHS convex optimization problem, we obtain $(x_{21}, x_{22})=(0.124, 0.085)$.
    Solving the following convex optimization we have
    $$\arg\max_{\y_1}\{u_1(\y_1, (x_{21}, x_{22}))\}=(0.176, 0.118),$$ and we can verify that
    $$u_1((0.176, 0.118), (0.124, 0.085))=2.316>2.168=u_1(\bot, (0.124, 0.085)).$$
    Hence, player $1$ would change her strategy from $\bot$ to $(0.176, 0.118)$. Then we can verify given player $1$'s strategy $(0.176, 0.118)$, player $2$'s best response excluding $\bot$ is $(0.179, 0.120)$. However, 
    $$u_2((0.179, 0.120), (0.176, 0.118))=1.894<1.962=u_2(\bot, (0.176, 0.118)),$$ 
    which means conditioned on player $1$'s strategy $(0.176, 0.118)$, player $2$ would switch to $\bot$. Hence, starting from any PNE of the form $(\bot, (x_{21}, x_{22}))$, we have the following best-response chain:
    $$(\bot, (x_{21}, x_{22}))\rightarrow ((0.176, 0.118), (x_{21}, x_{22}))\rightarrow((0.176, 0.118), \bot).$$
    Denote $(0.176, 0.118)=(x^*_{11},x^*_{12})$ and $(0.124, 0.085)=(x^*_{21},x^*_{22})$. Since the constructed \ingame{} instance is symmetric, starting from $((x^*_{11},x^*_{12}), \bot)$ we also have the following best-response chain
    $$( (x^*_{21}, x^*_{22}), \bot)\rightarrow ((x^*_{21}, x^*_{22}), (x^*_{11},x^*_{12}))\rightarrow(\bot, (x^*_{11},x^*_{12}))\rightarrow (\bot,(x^*_{21},x^*_{22})).$$
    Therefore, starting from any PNE $(\bot,(x^*_{21},x^*_{22}))$, an alternative best response update from both players would form the following loop
    \begin{align*}
         (\bot, (x^*_{21}, x^*_{22}))&\rightarrow ((x^*_{11},x^*_{12}), (x^*_{21}, x^*_{22}))\rightarrow((x^*_{11},x^*_{12}), \bot)\rightarrow ( (x^*_{21}, x^*_{22}), \bot) \\ & \rightarrow ((x^*_{21}, x^*_{22}), (x^*_{11},x^*_{12}))\rightarrow(\bot, (x^*_{11},x^*_{12}))\rightarrow (\bot,(x^*_{21},x^*_{22})).
    \end{align*}
    Hence, any PNE of the form $(\bot,(x_{21},x_{22}))$ does not exist.
    \item By symmetry, any PNE of the form $((x_{11},x_{12}), \bot)$ does not exist as well.
\end{enumerate}
Therefore, we conclude that the PNE of our constructed instance \ingame{} does not exist. 
\end{proof}

Our example shows that the PNE of \ingame{} need not exist even when $n=K=2$ and $c_i$ are strongly convex functions. We should note that this example can be easily extend to \ingame with arbitrarily large $n$. To see this, consider an instance \ingame{}$(\bm{\alpha},\bm{\beta},\bm{\gamma},\bm{\mu},\{c_i\}_{i=1}^n)$ with $K=2, n>2$, where $\bm{\alpha}=(0.25/(n-1), 0.25/(n-1)), \bm{\beta}=(0.5, 0.5), \bm{\gamma}=(1.0,1.0), \bm{\mu}=(3.0, 2.0)$, and the cost functions for the first two players are 
$$c_1(\x_1)=7(x_{11}+x_{12})^2, c_2(\x_2)=7(x_{21}+x_{22})^2,$$
while the cost for the remaining players are 
$$c_1(\x_1)=M(x_{i1}+x_{i2})^2, i\geq 3,$$
where $M>0$ is a large number. In this game, we can choose sufficiently large $M$ such that as long as there are still human players in the game, $\bot$ is the best strategy of player $i$ for any $i\geq 3$. In this case, conditioned on the other $n-2$ players' strategies, the sub-game of the first two players is the same as the 2-player counterexample we have shown and we can similarly identify the best-response loop starting from any potential PNE.

\section{Proof of Theorem \ref{thm:PNE_extend}} \label{app:proof_PNE_extend}
The formal proof of Theorem \ref{thm:PNE_extend} relies on Proposition \ref{lm:cost_bound}, Lemma \ref{lm:order_c} and Theorem \ref{thm:asym_s}. Before diving into the details, we point out a simple fact: for any \ingamed{}$(\alpha,\beta,\gamma,\mu,\rho,\{c_i\})$, if a group of players commits to strategy $\bot$ while the remaining players (of size $m$) decide not to use $\bot$, \ingamed{} degrades to a standard game \basegame{}$(\alpha',\beta,\gamma,\mu,\rho,\{c_i\})$ with the new parameter $\alpha'=\alpha(n-m)$ and therefore still admits a unique PNE. In the following analysis, we will use $y_i$ to refer to any strategy in $\Y_i=\X_i\cup \{\bot\}$, and use $z_i$ to refer to player $i$'s human strategy in $\X_i\subset \RR_{\geq 0}$, and reserve the notation $x_i$ to denote the PNE strategy of player $i$ given that the identities of players who use $\bot$ are known. 

\begin{proof}\textbf{[of Theorem \ref{thm:PNE_extend}]}
    
The proof is organized by a mathematical induction argument. We will prove the following claim:

\vspace{2mm}
\noindent
{\bf The Induction Claim:} For $k=0, 1, \cdots, n-1$, if $(x_1, \cdots, x_{n-k}, \bot,\cdots,\bot)$ is not an PNE, then the $(n-k)$-th player can unilaterally deviate her strategy to $\bot$ to increase her utility. 

\vspace{2mm}
\noindent
{\bf The Base Case:} When $k=0$, we need to show that if $\y=(x_1,\cdots, x_n)$ is not an PNE of \ingamed{}, then the $n$-th player can switch to $\bot$ to increase her utility. Since $(x_1,\cdots, x_n)$ is not an PNE, there exists a player $j\in [n]$ such that her utility strictly increases when switching to $\bot$. If $j=n$, our claim is true; otherwise, for player $j$ it holds that $u_j(\bot, \x_{-j})>u_j(x_j, \x_{-j})$, i.e.,
\begin{equation}\label{eq:464}
    \frac{\alpha (s-x_j)^{\beta-\gamma+1}}{(s-x_j)^{\gamma}+ \alpha (s-x_j)^{\beta} }\cdot \mu  > \frac{x_j}{s^{\gamma} }\cdot \mu - c_jx_j^{\rho},
\end{equation}
where $s=\sum_{i=1}^n x_i$.

Now we show that if Eq. \eqref{eq:464} holds for some $j<n$, it also holds for $j=n$ and thus the $n$-th player's best response is $\bot$. To see this, note that the function $f(s)=\frac{s^{\beta-\gamma+1}}{s^{\gamma}+ \alpha s^{\beta} }=\frac{1}{s^{2\gamma-\beta-1}+\alpha s^{\gamma-1}}$ is non-decreasing in $s$, because $\gamma-1\leq 0$ and from $\tilde{\beta}+\tilde{\gamma}\geq 1$ we have $\gamma-1\leq 0, 2\gamma-\beta-1\leq 0$. On the other hand, from Theorem \ref{lm:order_c} we also have $u_j(x_j,\x_{-j})\geq u_n(x_n,\x_{-n})$ and $x_j\geq x_n$. As a result, we obtain

\begin{equation*}
   \frac{\alpha (s-x_n)^{\beta-\gamma+1}}{(s-x_n)^{\gamma}+ \alpha (s-x_n)^{\beta} }\cdot \mu  \geq \frac{\alpha (s-x_j)^{\beta-\gamma+1}}{(s-x_j)^{\gamma}+ \alpha (s-x_j)^{\beta} }\cdot \mu   > \frac{x_j}{s^{\gamma} }\cdot \mu - c_jx_j^{\rho} \geq \frac{x_n}{s^{\gamma} }\cdot \mu - c_nx_n^{\rho},
\end{equation*}
which suggests $u_n(\bot, \x_{-n})>u_n(x_n, \x_{-n})$.

\vspace{2mm}

\noindent
{\bf The Induction Argument:}
Suppose the induction claim holds for any $i\leq k-1$. Consider the case $i=k$, we only need to show that if $\y=(x_1, \cdots, x_{n-k}, \bot,\cdots,\bot)$ is not an PNE, the following two statements holds:
\begin{enumerate}
    \item {\bf Statement-1}: If there exists a player $j\leq n-k$ who can deviate to $\bot$ to increase her utility, then player $n-k$ can also deviate to $\bot$ to increase her utility.
    \item {\bf Statement-2}: For any $j\geq n-k+1$, player $j$ does not want to deviate from $\bot$ to any $z_j\in \X_j$ as it will always decrease her utility.
\end{enumerate}
First we prove statement-1. $s=\sum_{i=1}^{n-k}x_i$. For such a player $j$, it holds that $u_j(\bot, \y_{-j})>u_j(x_j, \y_{-j})$, i.e.,
\begin{equation}\label{eq:485}
    \frac{\alpha (s-x_j)^{\beta-\gamma+1}}{(s-x_j)^{\gamma}+ \alpha (k+1) (s-x_j)^{\beta} }\cdot \mu   > \frac{x_j}{s^{\gamma}+ \alpha k s^{\beta} }\cdot \mu - c_jx_j^{\rho}.
\end{equation}
Now we can see that the argument we want to make here is exactly the same as the one in the base case if we replace the parameter $\alpha$ with $\alpha(k+1)$ in the base case. Therefore, we can also show that Eq. \eqref{eq:485} holds for $j=n-k$. Hence, statement-1 is true.

Now we prove Statement-2. We will show that if there exists $j\geq n-k+1$ who can deviate from $\bot$ to some $z_j\in\X_j$ to increase her utility, then at $\y'=(x'_1,\cdots,x'_{n-k},x'_{n-k+1},\bot,\cdots,\bot)$, the $(n-k+1)$-th player would not have switched to $\bot$ and therefore contradicting the induction claim for $i=k-1$. Here we use $(x'_1,\cdots,x'_{n-k},x'_{n-k+1})$ to denote the human players' PNE given that the remaining $(k-1)$ players are using $\bot$. For such a player $j$ we have $u_j(\bot, \y_{-j})<\max_{z_j\in\X_j}u_j(z_j, \y_{-j})$, i.e.,
\begin{equation}\label{eq:491}
    \frac{\alpha s^{\beta-\gamma+1}}{s^{\gamma}+ \alpha k s^{\beta} }\cdot \mu   < \max_{z_j}\left\{\frac{z_j}{(s+z_j)^{\gamma}+ \alpha (k-1) (s+z_j)^{\beta} }\cdot \mu - c_jz_j^{\rho}\right\}.
\end{equation}
By the induction claim we also know at $\y'=(x'_1,\cdots,x'_{n-k},x'_{n-k+1},\bot,\cdots,\bot)$, the $(n-k+1)$-th player can deviate to $\bot$ to increase her utility, which means 
\begin{equation}\label{eq:495}
    \frac{\alpha (s')^{\beta-\gamma+1}}{(s')^{\gamma}+ \alpha k (s')^{\beta} }\cdot \mu   > \frac{x'_{n-k+1}}{(s'+x'_{n-k+1})^{\gamma}+ \alpha (k-1) (s'+x'_{n-k+1})^{\beta} }\cdot \mu - c_{n-k+1}(x'_{n-k+1})^{\rho},
\end{equation}
where $s'=\sum_{i=1}^{n-k}x'_i$. From Lemma \ref{lm:order_c} we have $s'<s$. On the one hand, by the definition of PNE we have
\begin{align*}
    \text{RHS of Eq. \eqref{eq:495}}&=\max_{z}\left\{\frac{z}{(s'+z)^{\gamma}+ \alpha (k-1) (s'+z)^{\beta} }\cdot \mu - c_{n-k+1}z^{\rho}\right\} \\
    &>  \max_{z}\left\{\frac{z}{(s+z)^{\gamma}+ \alpha (k-1) (s+z)^{\beta} }\cdot \mu - c_{n-k+1}z^{\rho}\right\} &\mbox{since $s'<s$}\\
    &\geq  \max_{z}\left\{\frac{z}{(s+z)^{\gamma}+ \alpha (k-1) (s+z)^{\beta} }\cdot \mu - c_{j}z^{\rho}\right\} &\mbox{since $c_{n-k+1}\leq c_j$}\\
    &= \text{RHS of Eq. \eqref{eq:491}}.
\end{align*}
On the other hand, because $f(s)=\frac{s^{\beta-\gamma+1}}{s^{\gamma}+ \alpha s^{\beta} }=\frac{1}{s^{2\gamma-\beta-1}+\alpha s^{\gamma-1}}$ is non-decreasing in $s$, when $\tilde{\beta}+\tilde{\gamma}\geq 1$, we also have 
\begin{align*}
    \text{LHS of Eq. \eqref{eq:495}} \leq  \text{LHS of Eq. \eqref{eq:491}}.
\end{align*}
As a result, we obtain
$$\text{LHS of Eq. \eqref{eq:491}}\geq \text{LHS of Eq. \eqref{eq:495}} > \text{RHS of Eq. \eqref{eq:495}} > \text{RHS of Eq. \eqref{eq:491}}, $$
which contradicts Eq. \eqref{eq:491}. Hence, statement-2 holds and   the induction argument is completed.

The induction argument implies that either there exists an $m(\leq n)$ such that the strategy profile $(x_1, \cdots, x_{n-m}, \bot,\cdots,\bot)$ is an PNE of \ingamed{}, or the remaining human player with the highest cost can always switch to $\bot$ to increase her utility until all players will finally switch to $\bot$. Clearly, the second scenario cannot happen because when there is only one human player left in the game, switching to $\bot$ is not her best response as it renders a zero utility. Therefore, we prove that \ingamed{} must possess a PNE with the form $(x_1, \cdots, x_{n-m}, \bot,\cdots,\bot)$. 

Finally we derive the condition for the threshold $m$. According to the induction argument, $(x_1, \cdots, x_{n-m}, \bot,\cdots,\bot)$ is a PNE if and only if:
\begin{enumerate}
    \item at $\y'=(x'_1,\cdots,x'_{n-m},x'_{n-m+1},\bot,\cdots,\bot)$, the $(n-m+1)$-th player can increase her utility when switching to $\bot$;
    \item at $\y=(x_1,\cdots,x_{n-m},\bot,\cdots,\bot)$, the $(n-m)$-th player cannot increase her utility when switching to $\bot$.
\end{enumerate}
These two conditions translate to

\begin{equation}\label{eq:517}
    \frac{\alpha (s')^{\beta-\gamma+1}}{(s')^{\gamma}+ \alpha m (s')^{\beta} }\cdot \mu   \geq \frac{x'_{n-m+1}}{(s'+x'_{n-m+1})^{\gamma}+ \alpha (m-1) (s'+x'_{n-m+1})^{\beta} }\cdot \mu - c_{n-m+1}(x'_{n-m+1})^{\rho},
\end{equation}

\begin{equation}\label{eq:521}
    \frac{\alpha (s-x_{n-m})^{\beta-\gamma+1}}{(s-x_{n-m})^{\gamma}+ \alpha (m+1)(s-x_{n-m})^{\beta} }\cdot \mu   \leq \frac{x_{n-m}}{s^{\gamma}+ \alpha m s^{\beta} }\cdot \mu - c_{n-m}(x_{n-m})^{\rho}.
\end{equation}




From Theorem \ref{thm:asym_s}, Eq. \eqref{eq:521} implies
\begin{align}\label{eq:547}
    & \frac{\alpha (s-x_{n-m})^{\beta-\gamma+1}}{(s-x_{n-m})^{\gamma}+ \alpha (m+1)(s-x_{n-m})^{\beta} }\cdot \mu \notag \\ \leq & \frac{x_{n-m}}{s^{\gamma}+ \alpha m s^{\beta} }\cdot \mu \cdot\left(1-\frac{1}{2\rho}\right)\\\notag
     < & \frac{x_{n-m}}{s^{\gamma}+ \alpha (m+1)s^{\beta} }\cdot \frac{m+1}{m}\cdot \mu \cdot\left(1-\frac{1}{2\rho}\right) \\ \notag
     < &\frac{x_{n-m}}{(s-x_{n-m})^{\gamma}+ \alpha (m+1)(s-x_{n-m})^{\beta} }\cdot \frac{m+1}{m}\cdot \mu \cdot\left(1-\frac{1}{2\rho}\right), 
\end{align}
where Eq. \eqref{eq:547} holds because Theorem \ref{thm:asym_s} tells us $c_{n-m}(x_{n-m})^{\rho}>\frac{1}{2\rho}\cdot\frac{x_{n-m}\mu}{s^{\gamma}+ \alpha m s^{\beta} }$. Hence, we obtain
\begin{equation}\label{eq:563}
    \alpha (s-x_{n-m})^{\beta-\gamma+1}<x_{n-m} \cdot\frac{m+1}{m}\cdot \left(1-\frac{1}{2\rho}\right).
\end{equation}
Theorem \ref{lm:order_c} guarantees  $x_{n-m}\leq \frac{s}{n-m}$, and $s-x_{n-m}\geq \frac{n-m-1}{n-m}\cdot s$. Therefore, Eq. \eqref{eq:563} can be further simplified to 
\begin{equation}\label{eq:567}
    \alpha s^{\beta-\gamma+1}\cdot \left(\frac{n-m-1}{n-m}\right)^{\beta-\gamma+1}<\frac{s}{n-m}\cdot\frac{m+1}{m}\cdot\left(1-\frac{1}{2\rho}\right).
\end{equation}
Since $\beta-\gamma+1<1$, $\frac{n-m}{n-m-1}\leq 2, \frac{m+1}{m}\leq 2$, from Eq. \eqref{eq:567} we can derive
\begin{equation}
    n-m<\frac{2(2\rho-1)}{\alpha \rho} \cdot s^{\gamma-\beta},
\end{equation}
which implies 
\begin{align}\notag
    \frac{m}{n}&>1-\frac{2(2\rho-1)}{\alpha \rho n} \cdot s^{\gamma-\beta} \\ 
    & > 1-\frac{2(2\rho-1)}{\alpha \rho n} \cdot \left(C_0 \mu^{\frac{1}{\gamma+\rho-1}} \|\c_{n-m}^{-1}\|_{\frac{1}{\rho-1}}^{\frac{1}{\gamma+\rho-1}}\right)^{\gamma-\beta} & \mbox{By Theorem \ref{thm:asym_s}} \\ 
    &= 1-\frac{2(2\rho-1)}{\alpha \rho n} \cdot \left(C_0 \mu^{\frac{1}{\gamma+\rho-1}} \cdot \left(\sum_{i=1}^{n-m} c_i^{-1/(\rho-1)}\right)^{\frac{\rho-1}{\gamma+\rho-1}} \right)^{\gamma-\beta}\\ 
    & \geq 1-\frac{2(2\rho-1)}{\alpha \rho n} \cdot \left(C_0 \mu^{\frac{1}{\gamma+\rho-1}} \cdot \left(n c_1^{-1/(\rho-1)}\right)^{\frac{\rho-1}{\gamma+\rho-1}}\right)^{\gamma-\beta} \\ 
    & = 1-C\cdot\frac{\mu^\frac{\gamma-\beta}{\gamma+\rho-1}}{\alpha n^{1-\frac{(\gamma-\beta)(\rho-1)}{\gamma+\rho-1}}},
\end{align}
where $C_0=\rho^{-\frac{1}{\gamma+\rho-1}}$ is a constant depending on $\rho,\gamma$ (From Eq. \eqref{eq:415}), and $C=\frac{2(2\rho-1)}{ \rho}\cdot (\rho c_1)^{-\frac{\gamma-\beta}{\gamma+\rho-1}}$ is a constant depending on $(\beta,\gamma,\rho,c_1)$.
\end{proof}

\section{Equilibrium Solvers Used in Experiments}\label{app:pga_solver}
\subsection{PNE solver for \exgame{}}

\begin{algorithm}[h]
   \caption{Multi-agent Mirror Descent (MMD) with perfect gradient}
   \label{alg:mmd}
\begin{algorithmic}
   \STATE {\bfseries Input:} Maximum iteration number $T$, step size $\eta$, each player $i$'s utility function $u_i$, error tolerance $\epsilon$, initial strategy $\x_i=\x^{(0)}_i$.
   \REPEAT 
        \STATE Compute the exact gradient $\g_i = \nabla_i u_i(\x_i, \x_{-i}), \forall i \in [n]$, 
        \STATE Update $\x_i \leftarrow \text{Proj}_{\X_i}(\x_i+\eta \g_i), \forall i\in [n]$.
   \UNTIL {Maximum iteration number is reached or $\|(\g_1,\cdots,\g_n)\|_2<\epsilon.$}

   \STATE {\bfseries Output: $(\x_1,\cdots,\x_n)$. }
\end{algorithmic}
\end{algorithm}

Since the strategy set $\X_i=[0, +\infty)$, we can simply choose a projection mapping $\text{Proj}_{\X_i}(\x)=(\max(x_i, 0))_{i=1}^n$. The utility functions of \exgame{} is differentiable and has closed forms so we can explicitly implement their gradients. Through our experiment, the default $T=1000, \eta=0.05,\epsilon=1e-4, \x_i^{(0)}=(0.1,\cdots,0.1)$. Algorithm \ref{alg:mmd} is a simplified version of Algorithm 1 in \cite{bravo2018bandit} where we replace the gradient estimation to the exact gradient. According to Theorem 5.1 in \cite{bravo2018bandit}, Algorithm \ref{alg:mmd} converges to the unique PNE of \exgame{} with probability 1. 

\subsection{PNE solver for \ingamed{}}
Since the PNE of \ingamed{} might not be unique, we present two PNE solvers for \ingamed{}. The first Algorithm \ref{alg:pne1} is to pin down the PNE of the form $(x_1,\cdots,x_{n-m},\bot,\cdots,\bot)$ guaranteed by Theorem \ref{thm:PNE_extend}, and the second Algorithm \ref{alg:pne2} is for finding an arbitrary PNE. Both of them use Algorithm \ref{alg:mmd} as a subroutine. For the ease of notation, we denote \basegame{}$(m)$ as the subgame for the first $n-m$ players when the remaining $m$ players with the top-$m$ highest costs are commited to strategy $\bot$.

\begin{algorithm}[h]
   \caption{Solving for a targeted PNE of \ingamed{}}
   \label{alg:pne1}
\begin{algorithmic}
   \STATE {\bfseries Input:} Each player $i$'s utility function $u_i$.
    \STATE {\bfseries Initialization:} Use Algorithm \ref{alg:mmd} to solve $\x^{(0)}=\text{MMD}($\basegame{}$(0))$, and set $\x=\x^{(0)}$.
   \FOR{$i=n$ {\bfseries to} $1$}
        \STATE Compute $u_i(\bot, \x_{-i})$ and $u_i(\x_i, \x_{-i})$. 
        \IF{$u_i(\bot, \x_{-i})>u_i(x_i, \x_{-i})$} 
            \STATE Set $x_i=\bot$.
            \STATE Update $\x_{-i}=\text{MMD}($\basegame{}$(n+1-i))$.
        \ELSE
            \STATE Break.
        \ENDIF
   \ENDFOR

   \STATE {\bfseries Output: $\x$. }
\end{algorithmic}
\end{algorithm}

To find an arbitrary PNE of \ingamed{}, first we need a PNE checker as a subroutine, as shown in the following Algorithm \ref{alg:pne_checker}. Algorithm \ref{alg:pne_checker} takes as input the game instance \ingamed{} and an arbitrary joint strategy profile $\x\in \cup_{i=1}^n\Y_i$. To verify whether $\x$ is an PNE of \ingamed{}, it checks for every player whether their is a better response: for a human player $i$, it simply compares $i$'s current utility and the utility if adopting $\bot$; for a GenAI player $i$, it needs to compare $i$'s current utility and the best possible human strategy, which requires solving an optimization problem with $u_i$ as the objective function. Thanks to the monotonicity of \ingamed{}, we know $u_i$ is concave so this optimization is tractable. If no one would like to deviate, $\x$ passes the checker and Algorithm \ref{alg:pne_checker} returns True and the same $\x$ meaning $\x$ is an PNE; otherwise \ref{alg:pne_checker} returns False and the new $\x$ incorporating some player's best response. Algorithm \ref{alg:pne2} works by shuffling players' indices first, and call the PNE checker \ref{alg:pne_checker} at each iteration to allow an arbitrary player to improve her utility until achieving a PNE. We note that although there is no finite time convergence guarantee for Algorithm \ref{alg:pne2}, but we can be sure as long as it terminates it must return an PNE of \ingamed{}. In our experiments it always converge within 5000 iterations. 

\begin{algorithm}[h]
   \caption{PNE checker for \ingamed{}}
   \label{alg:pne_checker}
\begin{algorithmic}
   \STATE {\bfseries Input:} Each player $i$'s utility function $u_i$ and $\x=(x_1,\cdots,x_n)$.
   \FOR{$i=1$ {\bfseries to} $n$}
        \IF{$x_i\neq \bot$ ~and~ $u_i(\bot, \x_{-i})>u_i(x_i, \x_{-i})$} 
            \STATE Set $x_i=\bot$.
            \STATE Update other human players' strategy by solving the new PNE of players excluding $x_i$.
            \STATE Return False, $\x_i$.
        \ELSIF{$x_i=\bot$ ~and~ $\max_{y_i\in\RR_{\geq 0}} u_i(y_i,\x_{-i})>u_i(\bot, \x_{-i})$}
            \STATE Set $x_i=\arg\max_{y_i\in\RR_{\geq 0}} u_i(y_i,\x_{-i})$.
            \STATE Update other human players' strategy by solving the new PNE of players excluding $x_i$.
            \STATE Return False, $\x_i$.
        \ELSE 
            \STATE Continue;
        \ENDIF
        
   \ENDFOR
   \STATE Return True, $\x_i$.

\end{algorithmic}
\end{algorithm}

\begin{algorithm}[h]
   \caption{Solving for an arbitrary PNE of \ingamed{}}
   \label{alg:pne2}
\begin{algorithmic}
   \STATE {\bfseries Input:} Each player $i$'s utility function $u_i$ and $\x=(x_1,\cdots,x_n)$.
    \STATE {\bfseries Initialization:} Shuffle $(c_1,\cdots,c_n)$, PNE$\_$FLAG=False.
   \WHILE{PNE$\_$FLAG is False}
        \STATE PNE$\_$FLAG, $\x$ = PNE$\_$checker(\ingamed{}, $\x$).
    \ENDWHILE
   \STATE {\bfseries Output: $\x$. }
\end{algorithmic}
\end{algorithm}


\end{document}